\documentclass[journal]{IEEEtran}

\usepackage{graphicx}
\usepackage{amsfonts}
\usepackage{amsmath}
\usepackage{amssymb}
\usepackage{color}
\usepackage{bm} 

\newtheorem{thm}{Theorem}
\newtheorem{cor}[thm]{Corollary}
\newtheorem{lem}[thm]{Lemma}
\newtheorem{proof}[thm]{proof}

\newtheorem{defn}[thm]{Definition}
\newtheorem{rem}[thm]{Remark}
\newtheorem{exam}[thm]{Example}


\hyphenation{op-tical net-works semi-conduc-tor}

\begin{document}
\onecolumn 

\title{Locally Repairable Codes with Functional Repair and Multiple
Erasure Tolerance}

\author{Wentu Song ~ and ~ Chau Yuen
\thanks{W. Song and C. Yuen are with Singapore University of
Technology and Design, Singapore
       (e-mails: \{wentu\_song, yuenchau\}@sutd.edu.sg).}
}
\maketitle

\begin{abstract}
We consider the problem of designing $[n,k]$ linear codes for
distributed storage systems (DSS) that satisfy the
$(r,t)$-\emph{Local Repair Property}, where any $t' (\leq t)$
simultaneously failed nodes can be locally repaired, each with
locality $r$. The parameters $n,k,r,t$ are positive integers such
that $r<k<n$ and $t\leq n-k$. We consider the functional repair
model and the sequential approach for repairing multiple failed
nodes. By functional repair, we mean that the packet stored in
each newcomer is not necessarily an exact copy of the lost data
but a symbol that keep the $(r,t)$-local repair property. By the
sequential approach, we mean that the $t'$ newcomers are ordered
in a proper sequence such that each newcomer can be repaired from
the live nodes and the newcomers that are ordered before it. Such
codes, which we refer to as $(n,k,r,t)$-functional locally
repairable codes (FLRC), are the most general class of LRCs and
contain
several subclasses of LRCs reported in the literature.

In this paper, we aim to optimize the storage overhead
(equivalently, the code rate) of FLRCs. We derive a lower bound on
the code length $n$ given $t\in\{2,3\}$ and any possible $k,r$.
For $t=2$, our bound generalizes the rate bound proved in
\cite{Prakash-14}. For $t=3$, our bound improves the rate bound
proved in \cite{Tamo14}. We also give some constructions of exact
LRCs for $t\in\{2,3\}$ whose length $n$ achieves the bound of
$(n,k,r,t)$-FLRC, which proves the tightness of our bounds and
also implies that there is no gap between the optimal code length
of functional LRCs and exact LRCs for certain sets of parameters.
Moreover, our constructions are over the binary field, hence are
of interest in practice.
\end{abstract}

\begin{IEEEkeywords}
Distributed storage, locally repairable codes, exact repair,
functional repair.
\end{IEEEkeywords}

\IEEEpeerreviewmaketitle \setcounter{equation}{0}

\section{Introduction}

A distributed storage system (DSS) stores data through a large,
distributed network of storage nodes. To ensure reliability
against node failure, data is stored in redundancy form so that it
can be reconstructed from the system even if some of the storage
nodes fail. Moreover, to maintain the data reliability in the
presence of node failures, each failed node is replaced by a
\emph{newcomer} that stores a data packet computed from the data
packets stored in some available storage nodes. This process is
called \emph{node repair}.

There are two models of node repair, called \emph{exact repair}
and \emph{functional repair} respectively. By exact repair, each
newcomer stores an exact copy of the lost data packet. By
functional repair, each newcomer stores a packet that is not
necessarily an exact copy of the lost data, but a packet that
makes the system keep the same level of data reliability and the
possibility of node repair in the future. While exact repair is a
special case of functional repair and is more preferable in
practice for its simplicity, functional repair model has its
theoretical interest because potentially it allows us to construct
codes with improved code rate or minimum distance.

Modern distributed storage systems employ various coding
techniques, such as erasure codes, regenerating codes and locally
repairable codes, to improve system efficiency. Classical MDS
codes (such as Reed-Solomon codes) are optimal in storage
efficiency but are inefficient in node repair---the total amount
of data download needed to repair a single failed node equals to
the size of the whole file \cite{Dimakis10}. As improvements of
MDS codes, regenerating codes aim to optimize the repair bandwidth
\cite{Dimakis10} and locally repairable codes (LRC) aim to
minimize the repair locality, i.e. the number of disk accesses
required during a single node repair \cite{Papail122}. In this
work, we focus on the metric of repair locality.

Repair locality was initially studied as a metric for repair cost
independently by Gopalan et al. \cite{Gopalan12}, Oggier et al.
\cite{Oggier11}, and Papailiopoulos et al. \cite{Papail121}. The
$i$th coordinate of an $[n, k]_q$ linear code $\mathcal C~($also
called the $i$th code symbol of $\mathcal C)$ is said to have
locality $r$, if its value is computable from the values of a set
of at most $r$ other coordinates of $\mathcal C~($called a repair
set of $i)$. In the literature, an $[n, k]$ linear code is called
a locally repairable code (LRC) if all of its code symbols have
locality $r$ for some $r<k$. In a DSS coded by an LRC $\mathcal
C$, each storage node stores a code symbol of $\mathcal C$ and any
single failed node can be ``locally and exactly repaired" in the
sense that the newcomer can recover the lost data by contacting at
most $r$ live nodes, where $r$ is the symbol locality of $\mathcal
C$.

\subsection{Local Repair for Multiple Node Failures}
In real DSS, it is not uncommon that two or more storage nodes
fail simultaneously at one time, which motivates the researchers
to study LRCs that can locally repair more than one failed nodes.
Studies of LRCs for multiple node failures can be found in
\cite{Pamies13}$-$\cite{Wentu14} and references therein.

To repair $t~(t\geq 2)$ simultaneously failed nodes, $t$ newcomers
are added into the system, each downloads data from a set of at
most $r$ available nodes to create its storage content. The
authors in \cite{Prakash-14} distinguished two approaches of how
the $t$ newcomers contact the available nodes, called
\emph{parallel approach} and \emph{sequential approach}
respectively. By the parallel approach, each newcomer download
data from a set of live nodes. In contrast, by the sequential
approach, the $t$ newcomers can be properly ordered in a sequence
and each newcomer can download data from both the live nodes and
the newcomers ordered before it. Clearly, the parallel approach is
a special case of the sequential approach. Potentially, the
sequential approach allows us to design codes with improved code
rate or minimum distance than the parallel approach.

Given the parameters $n,k,r$ and $t$, where $n$ is the code length
and $k$ is the dimension, four subclasses of linear LRCs that can
exactly and locally repair up to $t$ failed nodes by the parallel
approach are reported in the literature: a) Codes with all-symbol
locality $(r,t+1)$, in which each code symbol is contained in a
local code of length at most $r+t$ and minimum distance at least
$t+1$ \cite{Prakash12}; b) Codes with all-symbol locality $r$ and
availability $t$, in which each code symbol has $t$ pairwise
disjoint repair sets with locality $r$ \cite{Wang14,Rawat14}; c)
Codes with $(r,t)$-locality, in which each subset of $t$ code
symbols can be cooperatively repaired from at most $r$ other code
symbols \cite{Rawat-14}; d) Codes with overall local repair
tolerance $t$, in which for any $E\subseteq[n]$ of size $t$ and
any $i\in E$, the $i$th code symbol has a repair set contained in
$[n]\backslash E$ and with locality $r$ \cite{Pamies13}.

For convenience, we refer to the above four subclasses of LRCs as
$(r,\delta)_a$ codes, $(r,\delta)_c$ codes, $(r,t)$-CLRC and
$(r,t)_o$ codes respectively, where $\delta=t+1$. Clearly, the
first three subclasses are all contained in the subclass of
$(r,t)_o$ codes. Moreover, $(r,t)_o$ codes can exactly and locally
repair up to $t$ failed nodes by the parallel approach. For
$(r,\delta)_a$ codes and $(r,t)$-CLRC, the code rate satisfies
(e.g., see \cite{Wentu14} and \cite{Rawat-14}):
\begin{align}\label{rate-bd-1}\frac{k}{n}\leq
\frac{r}{r+t}\end{align} and the minimum distance satisfies (see
\cite{Prakash12} and \cite{Rawat-14}):
\begin{align}\label{d-bd-1}d\leq
n-k+1-t\left(\left\lceil\frac{k}{r}\right\rceil-1\right).
\end{align}
For $(r,\delta)_c$ codes $($i.e., codes with all-symbol locality
$r$ and availability $t)$, it was proved in \cite{Tamo14} that the
code rate satisfies:
\begin{align}\label{rate-bd-2}\frac{k}{n}\leq
\frac{1}{\prod_{j=1}^{t}(1+\frac{1}{jr})}\end{align} and the
minimum distance satisfies:
\begin{align}\label{d-bd-2}d\leq
n-\sum_{i=0}^{t}\left\lfloor\frac{k-1}{r^i}\right\rfloor.
\end{align}
For $t=2$, the bound \eqref{d-bd-2} is shown to be achievable for
some special case of parameters \cite{Tamo14}. However, for the
general case, it is not known whether the bounds \eqref{rate-bd-2}
and \eqref{d-bd-2} are achievable. Recent work by Wang et al.
\cite{Wang15} shows that for any positive integers $r$ and $t$,
there exist $(r,\delta)_c$ codes over the binary field with code
rate $\frac{r}{r+t}$. Unfortunately, the rate does not achieve the
bound \eqref{rate-bd-2} for $t\geq 2$. For the more general case,
the $(r,t)_o$ codes, no result is known about the code rate bound
or the minimum distance bound for $t\geq 2$.

For LRCs that can exactly and locally repair $t=2$ failed nodes by
the sequential approach, it was proved in \cite{Prakash-14} that
the code rate satisfies:
\begin{align}\label{rate-bd-3}\frac{k}{n}\leq
\frac{r}{r+2}.\end{align} An upper bound for the minimum distance
of such codes was also given in \cite{Prakash-14}. However, for
$t\geq 3$, no result is known about the code rate bound or the
minimum distance bound.

\renewcommand\figurename{Fig}
\begin{figure}[htbp]
\begin{center}
\includegraphics[height=4cm]{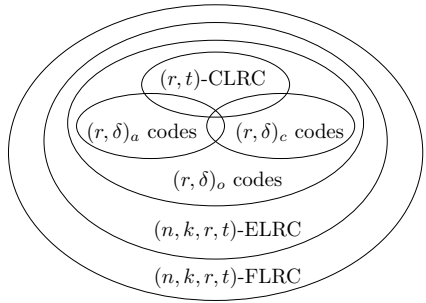}
\end{center}
\vspace{-0.2cm}\caption{Relation of the six subclasses of $[n,k]$
linear LRCs, where $\delta=t+1$. }\label{rlt-fg}
\end{figure}

\subsection{LRC with Functional Repair}
Vector codes that can locally repair single failed node with
functional repair model was considered by Hollmann et al.
\cite{Hollmann13-1}$-$\cite{Hollmann14}. Suppose $\alpha$ is the
capacity of each storage node and $\beta$ is the transport
capacity, i.e., the amount of data that can be transported from a
node contacted during the repair process. It was proved in
\cite{Hollmann14} that if $\alpha=\beta$, the code rate is upper
bounded by $\frac{r}{r+1}$, where $r$ is the repair locality.
However, the study of LRC for multiple node failures under
functional repair model is not seen in the literature.

\subsection{Our Contribution}
Given positive integers $n,k,r$ and $t$ such that $r<k<n$ and
$t\leq n-k$. We consider the problem of designing $[n,k]$ linear
codes for distributed storage systems (DSS) that satisfy the
$\bm{(r,t)}$\textbf{-\emph{Local Repair Property}}, where any $t'
(\leq t)$ simultaneously failed nodes can be locally repaired,
each with locality $r$. We consider the functional repair model
and the sequential approach for repairing multiple failed nodes.
By functional repair, we mean that the packet stored in each
newcomer is not necessarily an exact copy of the lost data but a
symbol that keep the $(r,t)$-local repair property. We call such
codes $(n,k,r,t)$-functional locally repairable code (FLRC). A
subclass of FLRC, called $(n,k,r,t)$-exact locally repairable code
(ELRC), in which the $(r,t)$-local repair property is satisfied by
exact repair, is also considered.

Clearly, codes studied in \cite{Prakash-14} are $(n,k,r,2)$-ELRC
and $(r,t)_o$ codes $($i.e., codes with overall local repair
tolerance $t)$ studied in \cite{Pamies13} are $(n,k,r,t)$-ELRC.
The relation of the six subclasses of LRCs mentioned above are
depicted in Fig. \ref{rlt-fg}.

It is easy to see that the minimum distance of an $(n,k,r,t)$-FLRC
is at least $t+1$. In this paper, our goal is to optimize the
storage overhead (equivalently, the code rate of such codes). When
$t=1$, by the result of \cite{Hollmann14}, the code rate of an
$(n,k,r,t)$-FLRC is upper bounded by $\frac{r}{r+1}$. So we focus
on the case of $t\geq 2$. Our method is to associate each
$(n,k,r,t)$-FLRC with a set of directed acyclic graphs, called
repair graph. Then by studying the structural properties of the so
called minimal repair graph $($similar to the discussion in
\cite{Fragouli06, Wentu11}$)$, we derive a lower bound of the code
length $n$. Our main results are listed as bellow:

1) We prove that for $(n,k,r,t=2)$-FLRC, the code length satisfies
$$n\geq k+\left\lceil\frac{2k}{r}\right\rceil.$$ Equivalently, the code rate
satisfies
$$\frac{k}{n}\leq\frac{r}{r+2}.$$ Note that bound \eqref{rate-bd-3}
is an upper bound of the code rate of $(n,k,r,t=2)$-ELRC. Thus,
our bound generalizes the bound \eqref{rate-bd-3} to the setting
of functional repair model.

2) We prove that for $(n,k,r,t=3)$-FLRC, the code length satisfies
$$n\geq
k+\left\lceil\frac{2k+\lceil\frac{k}{r}\rceil}{r}\right\rceil.$$
Note that codes with all-symbol $(r,\delta=4)_c$-locality is an
$(n,k,r,t=3)$-ELRC. For $t=3$, \eqref{rate-bd-2} implies that
$n\geq \frac{r+1}{r}\frac{2r+1}{2r}\frac{3r+1}{3r}k$. Moreover, we
can check that
$k+\left\lceil\frac{2k+\lceil\frac{k}{r}\rceil}{r}\right\rceil\geq
k+\frac{2k+\lceil\frac{k}{r}\rceil}{r}
\geq\frac{r+1}{r}\frac{2r+1}{2r}\frac{3r+1}{3r}k$. So our result
improves the bound \eqref{rate-bd-2} for $t=3$.

3) We give some constructions of $(n,k,r,t)$-ELRC for
$t\in\{2,3\}$ whose code length $n$ achieves the corresponding
bound of FLRC, which proves the tightness of our bounds and also
implies that there is no gap between the optimal code length of
functional LRCs and exact LRCs for some sets of parameters.
Moreover, our constructions are over the binary field, hence are
of practical interest.

\subsection{Organization}
The rest of this paper is organized as follows. In Section
\uppercase\expandafter{\romannumeral 2}, we give the basic
notations and concepts including functional locally repairable
code (FLRC), exact locally repairable code (ELRC) and repair graph
of FLRC. In section \uppercase\expandafter{\romannumeral 3}, we
prove some structural properties of the minimal repair graph of
FLRC. Lower bounds on code length of $(n,k,r,t)$-FLRC for
$t\in\{2,3\}$ are derived in Section
\uppercase\expandafter{\romannumeral 4}. Constructions of ELRC
with optimal code length is presented in Section
\uppercase\expandafter{\romannumeral 5}. The paper is concluded in
Section \uppercase\expandafter{\romannumeral 6}.

\section{Preliminary}
For any set $A$, we use $|A|$ to denote the size $($i.e., the
number of elements$)$ of $A$. A set $B$ is called an $r$-subset of
$A$ if $B\subseteq A$ and $|B|=r$. For any positive integer $n$,
we denote $[n]:=\{1,2,\cdots,n\}$. An $[n,k]$ linear code over a
field $\mathbb F$ is a $k$-dimensional subspace of $\mathbb F^n$.

Let $\mathcal C$ be an $[n,k]$ linear code over the field $\mathbb
F$. If there is no confusion in the context, we will omit the base
field $\mathbb F$ and only say that $\mathcal C$ is an $[n,k]$
linear code. A $k$-subset $S$ of $[n]$ is called an
\emph{information set} of $\mathcal C$ if for all codeword
$x=(x_1,x_2,\cdots,x_n)\in\mathcal C$ and all $i\in[n]$,
$x_i=\sum_{j\in S}a_{i,j} x_j$, where all $a_{i,j}\in\mathbb F$
and are independent of $x$. The code symbols in $\{x_j, j\in S\}$
are called \emph{information symbol} of $\mathcal C$. In contrast,
code symbols in $\{x_i, i\in [n]\backslash S\}$ are called
\emph{parity symbol} of $\mathcal C$. An $[n,k]$ linear code has
at least one information set.

For any $E\subseteq[n]$, let $\overline{E}=[n]\backslash E$ and
$\mathcal C|_{E}$ be the punctured code of $\mathcal C$ associated
with the coordinate set $E$. That is, $\mathcal C|_{E}$ is
obtained from $\mathcal C$ by deleting all code symbols in the set
$\{x_i, i\in\overline{E}\}$ for each codeword
$(x_1,x_2\cdots,x_n)\in\mathcal C$.

\subsection{Locally repairable code (LRC)}
In this subsection, we always assume that $\mathcal C$ is an
$[n,k]$ linear code over $\mathbb F$. We first present the concept
of repair set for each coordinate $i\in[n]$.

\begin{defn}\label{def-lrs}
Let $i\in[n]$ and $R\subseteq[n]\backslash\{i\}$. The subset $R$
is called an $(r,\mathcal C)$-\emph{repair set} of $i$ if $|R|\leq
r$ and $x_i=\sum_{j\in R}a_jx_j$ for all
$x=(x_1,x_2,\cdots,x_n)\in\mathcal C$, where all $a_j\in\mathbb F$
and are independent of $x$.
\end{defn}

In the following, we will omit the prefix $(r,\mathcal C)$ and say
that $R$ is a repair set of $i$ if there is no confusion in the
context.

\begin{defn}\label{r-compute} Let $E$ be a $t$-subset of $[n]$.
$\mathcal C$ is said to be $(E,r)$-repairable if there exists an
index of $E$, say $E=\{i_1,\cdots,i_t\}$, and a collection of
subsets
$$\{R_{\ell}\subseteq\overline{E}\cup\{i_1,\cdots,i_{\ell-1}\};
|R_{\ell}|\leq r, \ell\in[t]\}$$ such that for each $\ell\in[t]$,
$R_{\ell}$ is an $(r,\mathcal C)$-repair set of $i_\ell$.
\end{defn}

In this paper, we assume $r<k<n$, which means small repair
locality and at least one redundant code symbol. Moreover, if
$\mathcal C$ is $(E,r)$-repairable for some $t$-subset $E$ of
$[n]$, then we can easily see that $t\leq n-k$.

\begin{defn}\label{R-code} Let $\mathcal C'$
be an $[n,k]$ linear code over $\mathbb F~($not necessarily
different from $\mathcal C)$  and $E\subseteq[n]$. $\mathcal C'$
is said to be an $(E,r)$-repair code of $\mathcal C$ if the
following two conditions hold:
\begin{itemize}
 \item [(\romannumeral 1)] $\mathcal C|_{\overline{E}}=\mathcal
 C'|_{\overline{E}}$;
 \item [(\romannumeral 2)] $\mathcal C'$ is $(E,r)$-repairable.
\end{itemize}
\end{defn}

Consider a DSS with $n$ storage nodes where a data file is stored
as a codeword of $\mathcal C$, each node storing one code symbol.
Suppose the nodes indexed by $E$ fail. Then the symbols stored in
the live nodes form a codeword $x_{\overline{E}}$ of the punctured
code $\mathcal C|_{\overline{E}}$. If $\mathcal C'$ is an
$(E,r)$-repair code of $\mathcal C$, then $x_{\overline{E}}$ is
also a codeword of $\mathcal C'|_{\overline{E}}$. Moreover, since
$\mathcal C'$ is $(E,r)$-repairable, then we can construct a
codeword of $\mathcal C'$ from $x_{\overline{E}}$ using the
sequential approach, which form a process of functional repair.

\begin{defn}\label{f-lrc}
An $(n,k,r,t)$-\emph{functional locally repairable code (FLRC)} is
a collection of $[n,k]$ linear codes $\{\mathcal C_\lambda;
\lambda\in\Lambda\}$, where $\Lambda$ is an index set, such that
for each $\lambda\in\Lambda$ and each $E\subseteq[n]$ of size
$|E|\leq t$, there is a $\lambda'\in\Lambda$ such that $\mathcal
C_{\lambda'}$ is an $(E,r)$-repair code of $\mathcal C_\lambda$.
\end{defn}

\begin{defn}\label{e-lrc}
An $(n,k,r,t)$-\emph{exact locally repairable code (ELRC)} is an
$[n,k]$ linear code $\mathcal C$ such that for each
$E\subseteq[n]$ of size $|E|\leq t$, $\mathcal C$ is
$(E,r)$-repairable.
\end{defn}

Clearly, for any DSS with $n$ storage nodes and a data file of $k$
information symbols being stored, if the $(r,t)$-local repair
property is satisfied for functional repair model and the
sequential approach, then the coding scheme can be described as an
$(n,k,r,t)$-FLRC. Conversely, any $(n,k,r,t)$-FLRC can be used as
a coding scheme for such DSS.

Let $\{\mathcal C_\lambda; \lambda\in\Lambda\}$ be an
$(n,k,r,t)$-FLRC. Suppose $i\in[n]$ and
$\lambda_1\neq\lambda_2\in\Lambda$. It is possible that the
$(r,\mathcal C_{\lambda_1})$-repair set of $i$ is different from
the $(r,\mathcal C_{\lambda_2})$-repair set of $i$. In other
words, the repair set of the coordinate $i$ is not fixed, but
depends on the state of the system.

From Definition \ref{R-code} and \ref{e-lrc}, we can easily see
that an $[n,k]$ linear code $\mathcal C$ is an $(n,k,r,t)$-ELRC if
and only if for all $E\subseteq[n]$ of size $|E|\leq t$, $\mathcal
C$ is an $(E,r)$-repair code of itself. So an $(n,k,r,t)$-ELRC is
naturally an $(n,k,r,t)$-FLRC. Moreover, we can characterize
$(n,k,r,t)$-ELRC by a seemingly simpler condition as follows.

\begin{lem}\label{lem-ELRC}
An $[n,k]$ linear code $\mathcal C$ is an $(n,k,r,t)$-ELRC if and
only if for any $E\subseteq[n]$ of size $|E|\leq t$, there exists
an $i\in E$ such that $i$ has an $(r,\mathcal C)$-repair set
contained in $[n]\backslash E$.
\end{lem}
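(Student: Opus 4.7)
The plan is to prove both directions of the equivalence, with the forward direction being essentially a direct unpacking of Definition \ref{r-compute} and the backward direction proceeding by induction on $|E|$.

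For the forward direction, suppose $\mathcal C$ is an $(n,k,r,t)$-ELRC and fix any $E\subseteq[n]$ with $|E|\leq t$. By Definition \ref{e-lrc}, $\mathcal C$ is $(E,r)$-repairable, so Definition \ref{r-compute} furnishes an ordering $E=\{i_1,\ldots,i_{|E|}\}$ together with repair sets $R_\ell\subseteq\overline{E}\cup\{i_1,\ldots,i_{\ell-1}\}$. Taking $\ell=1$, the set $R_1$ is contained in $\overline{E}=[n]\backslash E$, and $R_1$ is an $(r,\mathcal C)$-repair set of the element $i_1\in E$. This is exactly what the right-hand condition requires.

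For the backward direction, I would proceed by induction on $s=|E|$. The base case $s=1$ is immediate: if $E=\{i\}$, the hypothesis gives an $(r,\mathcal C)$-repair set of $i$ contained in $\overline{E}$, so $\mathcal C$ is $(E,r)$-repairable. For the inductive step, assume the conclusion holds for all subsets of size less than $s$, and let $|E|=s\leq t$. The hypothesis yields some $i_1\in E$ with an $(r,\mathcal C)$-repair set $R_1\subseteq\overline{E}$. Set $E'=E\setminus\{i_1\}$, so $|E'|=s-1<s\leq t$, and the induction hypothesis applies to $E'$. Thus there is an ordering $E'=\{i_2,\ldots,i_s\}$ and repair sets $R_\ell\subseteq\overline{E'}\cup\{i_2,\ldots,i_{\ell-1}\}$ for $\ell=2,\ldots,s$. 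Since $\overline{E'}=\overline{E}\cup\{i_1\}$, each such $R_\ell$ is contained in $\overline{E}\cup\{i_1,i_2,\ldots,i_{\ell-1}\}$. Concatenating the ordering as $E=\{i_1,i_2,\ldots,i_s\}$ with the repair sets $R_1,R_2,\ldots,R_s$ verifies the condition of Definition \ref{r-compute}, so $\mathcal C$ is $(E,r)$-repairable.

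Since $E$ was an arbitrary subset of $[n]$ of size at most $t$, Definition \ref{e-lrc} is satisfied and $\mathcal C$ is an $(n,k,r,t)$-ELRC. The main thing to be careful about is the bookkeeping in the inductive step, specifically the identity $\overline{E'}=\overline{E}\cup\{i_1\}$ which lets the repair sets produced by the induction hypothesis be reinterpreted as valid repair sets under the ordering of $E$. Apart from that, no nontrivial obstruction arises: the lemma is essentially saying that the sequential-repair condition of Definition \ref{r-compute} can be built up one coordinate at a time, and this is what the induction makes precise.
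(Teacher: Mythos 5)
Your proof is correct and takes essentially the same approach as the paper: the forward direction extracts $i_1$ and $R_1$ directly from Definition \ref{r-compute}, and the backward direction peels off one coordinate at a time, applying the hypothesis to $E\setminus\{i_1\}$ and using $\overline{E\setminus\{i_1\}}=\overline{E}\cup\{i_1\}$ to assemble the ordering. The paper phrases this iteratively (``and so on'') rather than as a formal induction, but the underlying argument is identical.
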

\begin{proof}
If $\mathcal C$ is an $(n,k,r,t)$-ELRC, then by Definition
\ref{r-compute} and \ref{e-lrc}, there exists an index of $E$, say
$E=\{i_1,\cdots,i_t\}$, such that $i_1$ has an $(r,\mathcal
C)$-repair set $R_1\subseteq\overline{E}=[n]\backslash E$.

Conversely, for any $E\subseteq[n]$ of size $|E|=t'\leq t$, by
assumption, there exists an $i_1\in E$ such that $i_1$ has an
$(r,\mathcal C)$-repair set
$R_1\subseteq\overline{E}=[n]\backslash E$. Now, let
$E_1=E\backslash\{i_1\}$. Then $|E_1|\leq t$ and by assumption,
there exists an $i_2\in E_1$ such that $i_2$ has an $(r,\mathcal
C)$-repair set $R_2\subseteq[n]\backslash
E_1=\overline{E}\cup\{i_1\}$. Similarly, we can find an $i_3\in
E\backslash\{i_1,i_2\}$ such that $i_3$ has an $(r,\mathcal
C)$-repair set $R_3\subseteq\overline{E}\cup\{i_1,i_2\}$. And so
on. Then we can index $E$ as $E=\{i_1,i_2,\cdots,i_{t'}\}$ such
that each $i_\ell$ has an $(r,\mathcal C)$-repair set
$R_\ell\subseteq\overline{E}\cup\{i_1,i_2,\cdots,i_{\ell-1}\}$.
Thus, by Definition \ref{r-compute} and \ref{e-lrc}, $\mathcal C$
is an $(n,k,r,t)$-ELRC.
\end{proof}

\subsection{Repair graph of LRC}
To derive a bound of the code length, we introduce the concepts of
repair graph and minimal repair graph of an $(n,k,r,t)$-FLRC and
investigate the structural properties of the minimal repair
graphs.

Let $G=(\mathcal V,\mathcal E)$ be a directed, acyclic graph with
node (vertex) set $\mathcal V$ and edge (arc) set $\mathcal E$.
For any $e=(u,v)\in\mathcal E$, we call $u$ the \emph{tail} of $e$
and $v$ the \emph{head} of $e$. We also call $u$ an
\emph{in-neighbor} of $v$ and $v$ an \emph{out-neighbor} of $u$.
For each $v\in \mathcal V$, let $\text{In}(v)$ and $\text{Out}(v)$
denote the set of in-neighbors and out-neighbors of $v$
respectively. If $\text{In}(v)=\emptyset$, we call $v$ a
\emph{source}. Otherwise, we call $v$ an \emph{inner node}. We use
$\text{S}(G)$ to denote the set of all sources of $G$. Moreover,
for any $V\subseteq\mathcal V$, let
\begin{align}\label{Def-Out-E}
\text{Out}(V)=\bigcup_{v\in V}\text{Out}(v)\backslash V.
\end{align} And for any
$v\in\mathcal V$, let
\begin{align}\label{Def-Out-2}
\text{Out}^2(v)=\bigcup_{u\in\text{Out}(v)}\text{Out}(u)
\backslash\text{Out}(v)
\end{align}
i.e., $\text{Out}^2(v)$ is the set of all $w\in \mathcal V$ such
that $w$ is an out-neighbor of some $u\in\text{Out}(v)$ but not an
out-neighbor of $v$.

As an example, consider the graph as depicted in Fig.
\ref{eg-fg-1}. We have $\text{Out}(3)=\{9,10\}$ and
$\text{Out}(4)=\{10,11\}$. So by \eqref{Def-Out-E},
$\text{Out}(V)=\{9,10,11\}$, where $V=\{3,4\}$. Moreover, by
\eqref{Def-Out-2}, we have $\text{Out}^2(3)=\{13,15,16\}$.

\renewcommand\figurename{Fig}
\begin{figure}[htbp]
\begin{center}
\includegraphics[height=3.6cm]{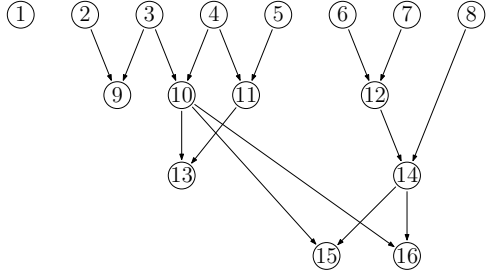}
\end{center}
\vspace{-0.2cm}\caption{An example repair graph $G_{\lambda_0}$,
where $r=2$ and $n=16$. }\label{eg-fg-1}
\end{figure}

For any linear code $\mathcal C$ with repair locality, we can
associate $\mathcal C$ with a set of graphs called \emph{repair
graph} of $\mathcal C$.

\begin{defn}\label{def-lrg}
Let $\mathcal C$ be an $[n,k]$ linear code and $G=(\mathcal
V,\mathcal E)$ be a directed, acyclic graph such that $\mathcal
V=[n]$. $G$ is called a repair graph of $\mathcal C$ if for all
inner node $i\in\mathcal V$, $\text{In}(i)$ is an $(r, \mathcal
C)$-repair set of $i$.
\end{defn}

A code $\mathcal C$ may have many repair graphs. Moreover, in
Definition \ref{def-lrg}, we do not require that $R=\text{In}(i)$
for any $(r, \mathcal C)$-repair set $R$ of $i$. Thus, it is
possible that there exists an $(r, \mathcal C)$-repair
set $R$ of $i$ such that $\text{In}(i)\neq R$. However, 
we can
always construct a repair graph $G'$ of $\mathcal C$ such that
$\text{In}(i)=R$ in $G'$.

\begin{defn}\label{def-mrg}
For any $(n,k,r,t)$-FLRC $\{\mathcal C_\lambda;
\lambda\in\Lambda\}$, let
\begin{align}\label{dlt-star} \delta^*\triangleq
\min\{|\text{S}(G_\lambda)|;\lambda\in\Lambda,G_\lambda\in\mathcal
G_\lambda\}
\end{align} where $\mathcal G_\lambda$ is the set of all repair
graphs of $\mathcal C_\lambda$. If $\lambda_0\in\Lambda$ and
$G_{\lambda_0}$ is a repair graph of $\mathcal C_{\lambda_0}$ such
that $\delta^*=|\text{S}(G_{\lambda_0})|$, then we call
$G_{\lambda_0}$ a \emph{minimal repair graph} of $\{\mathcal
C_\lambda; \lambda\in\Lambda\}$.
\end{defn}

\begin{rem}\label{ext-min-gph}
Note that for any $(n,k,r,t)$-FLRC $\{\mathcal C_\lambda;
\lambda\in\Lambda\}$,
$\{|\text{S}(G_\lambda)|;\lambda\in\Lambda,G_\lambda\in\mathcal
G_\lambda\}\subseteq[n]$ is a finite set. So by \eqref{dlt-star},
we can always find a $\lambda_0\in\Lambda$ and a repair graph
$G_{\lambda_0}$ of $\mathcal C_{\lambda_0}$ such that
$\delta^*=|\text{S}(G_{\lambda_0})|$. Thus, any $(n,k,r,t)$-FLRC
has at least one minimal repair graph.
\end{rem}

\renewcommand\figurename{Fig}
\begin{figure}[htbp]
\begin{center}
\includegraphics[height=7.0cm]{idx.1}
\end{center}
\vspace{-0.2cm}\caption{Relationship of discussions in Section
\uppercase\expandafter{\romannumeral 3} and
\uppercase\expandafter{\romannumeral 4}. }\label{TL-3-4}
\end{figure}

\section{Properties of Minimal Repair Graph}
In this section, we investigate the properties of minimal repair
graphs of $(n,k,r,t)$-FLRC, which will be used to derive a lower
bound on the code length $n$ in the next section. Our discussions
are summarized and illustrated in Fig. \ref{TL-3-4}.

In this section, we assume $\{\mathcal C_\lambda;
\lambda\in\Lambda\}$ is an $(n,k,r,t)$-FLRC and
$G_{\lambda_0}=(\mathcal V,\mathcal E)\in\mathcal G_{\lambda_0}$
is a minimal repair graph of $\{\mathcal C_\lambda;
\lambda\in\Lambda\}$, where $\lambda_0\in\Lambda$. Note that the
node set $\mathcal V=[n]$.

By Definition \ref{def-mrg} and \ref{def-lrg}, $G_{\lambda_0}$ has
$n-\delta^*$ inner nodes and each inner node of $G_{\lambda_0}$
has at most $r$ in-neighbors. So we have
\begin{align}\label{edge-num-left}
(n-\delta^*)r\geq |\mathcal E|.
\end{align}

The following lemma shows that the dimension $k$ 
is upper bounded by the number of
sources of $G_{\lambda_0}$.
\begin{lem}\label{dim-dlt}
For any $(n,k,r,t)$-FLRC $\{\mathcal C_\lambda;
\lambda\in\Lambda\}$,
\begin{align}
\label{k-leq-delta} k\leq\delta^*=|\text{S}(G_{\lambda_0})|.
\end{align}
\end{lem}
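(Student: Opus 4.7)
The plan is to exploit the fact that $G_{\lambda_0}$ is a directed acyclic graph in which every inner node is a linear function of its in-neighbors, and show that the code $\mathcal C_{\lambda_0}$ is fully determined by the code symbols indexed by the sources. This will force $\dim \mathcal C_{\lambda_0} \le |\text{S}(G_{\lambda_0})|$.

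First I would fix a topological ordering $v_1,v_2,\ldots,v_n$ of the node set $\mathcal V=[n]$ with respect to $G_{\lambda_0}$, i.e., an ordering such that whenever $(u,v)\in\mathcal E$, $u$ precedes $v$. Such an ordering exists because $G_{\lambda_0}$ is acyclic, and in it every source of $G_{\lambda_0}$ precedes every descendant. Let $S=\text{S}(G_{\lambda_0})$.

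Next, I would prove by induction along this topological order that for every $i\in[n]$ there exist constants $\{c_{i,j}\}_{j\in S}\subseteq\mathbb F$, independent of the codeword, such that
\begin{equation*}
x_i=\sum_{j\in S}c_{i,j}x_j,\qquad \forall\,(x_1,\ldots,x_n)\in\mathcal C_{\lambda_0}.
\end{equation*}
The base case $i\in S$ is trivial. For the inductive step, if $i\notin S$, then $i$ is an inner node, so by Definition \ref{def-lrg}, $\text{In}(i)$ is an $(r,\mathcal C_{\lambda_0})$-repair set of $i$, giving $x_i=\sum_{j\in\text{In}(i)}a_jx_j$ for constants $a_j\in\mathbb F$. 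Every $j\in\text{In}(i)$ precedes $i$ in the topological order, so by the induction hypothesis each such $x_j$ is already a fixed $\mathbb F$-linear combination of $\{x_\ell:\ell\in S\}$. Substituting and collecting terms yields the desired expression for $x_i$.

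The conclusion then follows from a dimension count. Consider the projection map $\pi_S:\mathcal C_{\lambda_0}\to\mathbb F^{|S|}$ sending $(x_1,\ldots,x_n)\mapsto (x_j)_{j\in S}$. By the claim just established, any codeword of $\mathcal C_{\lambda_0}$ is uniquely determined by its restriction to the coordinates in $S$, so $\pi_S$ is injective. Hence
\begin{equation*}
k=\dim\mathcal C_{\lambda_0}=\dim\pi_S(\mathcal C_{\lambda_0})\le |S|=|\text{S}(G_{\lambda_0})|=\delta^*,
\end{equation*}
which is the desired bound. I do not expect a serious obstacle here; the argument is essentially a standard topological-induction/cut-set observation applied to the repair graph, and minimality of $G_{\lambda_0}$ is not needed for this particular inequality (it only plays the role of letting us name the quantity $\delta^*$ via Definition \ref{def-mrg}).
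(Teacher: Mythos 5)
Your proof is correct and uses essentially the same approach as the paper's: induct along a topological order of the acyclic repair graph to show that every coordinate is a fixed $\mathbb F$-linear combination of the source coordinates, and conclude that the source set must have at least $k$ elements. The paper phrases the final step in terms of $\text{S}(G_\lambda)$ containing an information set rather than injectivity of the projection $\pi_S$, and establishes $k\leq|\text{S}(G_\lambda)|$ for every $\lambda$ and every repair graph $G_\lambda$ before passing to the minimum, but these are cosmetic differences; your observation that minimality of $G_{\lambda_0}$ is not actually used in this inequality is accurate.
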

\begin{proof}
Consider an arbitrary $\lambda\in\Lambda$ and an arbitrary repair
graph $G_\lambda$ of $\mathcal C_\lambda$. By Definition
\ref{def-lrg}, $G_\lambda$ is acyclic and for each inner node $j$,
$\text{In}(j)$ is an $(r, \mathcal C)$-repair set of $j$. Then by
Definition \ref{def-lrs} and by induction, for all codeword
$x=(x_1,x_2,\cdots,x_n)\in\mathcal C_\lambda$ and all $j\in[n]$,
the $j$th code symbol $x_j$ is an $\mathbb F$-linear combination
of the symbols in $\{x_i; i\in\text{S}(G_{\lambda})\}$. So the set
$\text{S}(G_{\lambda})$ contains an information set of $\mathcal
C_\lambda$, which implies that $k\leq|\text{S}(G_{\lambda})|$.
Since $\lambda$ is an arbitrary element of $\Lambda$ and
$G_\lambda$ is an arbitrary repair graph of $\mathcal C_\lambda$,
then by Definition \ref{def-mrg}, we have
$k\leq\min\{|\text{S}(G_\lambda)|;\lambda\in\Lambda,G_\lambda\in\mathcal
G_\lambda\}=\delta^*=|\text{S}(G_{\lambda_0})|$, which proves the
lemma.
\end{proof}

The following lemma and its corollaries give some structural
properties of $G_{\lambda_0}$.
\begin{lem}\label{mrg-out}
For any $E\subseteq[n]$ of size $|E|=t'\leq t$,
\begin{align}\label{Out-geq-E}
|\text{Out}(E)|\geq |E\cap \text{S}(G_{\lambda_0})|.
\end{align}
\end{lem}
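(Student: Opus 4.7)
The plan is to prove the inequality by contradiction: assume that $|\text{Out}(E)|<|E\cap\text{S}(G_{\lambda_0})|$ and construct, from $G_{\lambda_0}$, a strictly smaller repair graph for some code in the family $\{\mathcal C_\lambda;\lambda\in\Lambda\}$, thereby violating the minimality of $\delta^*$ in Definition~\ref{def-mrg}.

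First I would invoke the FLRC property applied to the failure set $E$: by Definition~\ref{f-lrc}, there is an index $\lambda'\in\Lambda$ such that $\mathcal C_{\lambda'}$ is an $(E,r)$-repair code of $\mathcal C_{\lambda_0}$. Unpacking Definitions~\ref{R-code} and \ref{r-compute}, this gives (i) $\mathcal C_{\lambda_0}|_{\overline E}=\mathcal C_{\lambda'}|_{\overline E}$, and (ii) an ordering $E=\{i_1,\ldots,i_{t'}\}$ together with sets $R_\ell\subseteq\overline E\cup\{i_1,\ldots,i_{\ell-1}\}$ of size at most $r$ such that each $R_\ell$ is an $(r,\mathcal C_{\lambda'})$-repair set of $i_\ell$ (where, if some $R_\ell$ is empty, one can harmlessly pad it with one element of coefficient $0$).

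Next I would build a repair graph $G_{\lambda'}\in\mathcal G_{\lambda'}$ by redefining the in-neighbors of every node: for each $i_\ell\in E$ set $\text{In}(i_\ell)=R_\ell$; for every $i\in\overline E$ that is a source of $G_{\lambda_0}$, or is an inner node of $G_{\lambda_0}$ with $\text{In}_{G_{\lambda_0}}(i)\subseteq\overline E$, copy its in-neighborhood from $G_{\lambda_0}$; for every $i\in\text{Out}(E)$, declare $i$ a source (set $\text{In}(i)=\emptyset$). I would then verify that $G_{\lambda'}$ satisfies Definition~\ref{def-lrg}: acyclicity follows because edges inside $E$ respect the sequential order $i_1,\ldots,i_{t'}$, the edges from $\overline E$ to $E$ cannot close a cycle with nodes in $\text{Out}(E)$ (which are now sources), and the remaining part of $G_{\lambda'}$ on $\overline E\setminus\text{Out}(E)$ is a subgraph of the acyclic $G_{\lambda_0}$. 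The repair-set condition at each inner node holds by construction: for $i_\ell\in E$ it is clause~(ii) above, while for inner nodes in $\overline E\setminus\text{Out}(E)$ the identity $\mathcal C_{\lambda_0}|_{\overline E}=\mathcal C_{\lambda'}|_{\overline E}$ carries the old linear relation over to $\mathcal C_{\lambda'}$.

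Finally I would count sources. Writing $s_0=|E\cap\text{S}(G_{\lambda_0})|$, the source set of $G_{\lambda'}$ is contained in $\bigl(\text{S}(G_{\lambda_0})\cap\overline E\bigr)\cup\text{Out}(E)$, so $|\text{S}(G_{\lambda'})|\leq(|\text{S}(G_{\lambda_0})|-s_0)+|\text{Out}(E)|$. Under the standing contradiction hypothesis $|\text{Out}(E)|<s_0$, this gives $|\text{S}(G_{\lambda'})|<|\text{S}(G_{\lambda_0})|=\delta^*$, contradicting the defining minimality of $G_{\lambda_0}$ in \eqref{dlt-star}. The step I expect to require the most care is verifying that $G_{\lambda'}$ is a bona fide repair graph — in particular, that the change $\mathcal C_{\lambda_0}\to\mathcal C_{\lambda'}$ does not invalidate the retained in-neighborhoods on $\overline E\setminus\text{Out}(E)$, for which the equality $\mathcal C_{\lambda_0}|_{\overline E}=\mathcal C_{\lambda'}|_{\overline E}$ is exactly the right leverage, and that the nodes of $\text{Out}(E)$ really can be reclassified as sources without breaking anything elsewhere.
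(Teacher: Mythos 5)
Your proposal is correct and follows essentially the same argument as the paper: argue by contradiction, invoke the FLRC property to obtain an $(E,r)$-repair code $\mathcal C_{\lambda'}$ together with a sequential ordering of $E$, rewire the in-neighborhoods at $E$ (setting $\text{In}(i_\ell)=R_\ell$) and at $\text{Out}(E)$ (making those nodes sources) while keeping the rest, and observe that the resulting repair graph of $\mathcal C_{\lambda'}$ would have strictly fewer sources than $\delta^*$, contradicting the minimality in Definition~\ref{def-mrg}. You are in fact a bit more careful than the published proof in two small places: you explicitly pad an empty $R_\ell$ so that $i_\ell$ does not accidentally become a source, and you spell out why the rewired graph is acyclic; the paper leaves both points implicit.
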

\begin{proof}
We can prove this lemma by contradiction.

By Definition \ref{f-lrc}, there is a $\lambda_1\in\Lambda$ such
that $\mathcal C_{\lambda_1}$ is an $(E,r)$-repair code of
$\mathcal C_{\lambda_0}$. By Definition \ref{r-compute}, there
exists an index of $E$, say $E=\{i_1,i_2,\cdots,i_{t'}\}$, and a
collection of subsets
$$\{R_{\ell}\subseteq\overline{E}\cup\{i_1,\cdots,i_{\ell-1}\};
|R_{\ell}|\leq r, \ell\in[t']\}$$ such that $R_{\ell}$ is an
$(r,\mathcal C_{\lambda_1})$-repair set of $i_\ell$ for each
$\ell\in[t']$. We construct a repair graph $G_{\lambda_1}$ of
$\mathcal C_{\lambda_1}$ as follows: First, for each $i\in E\cup
\text{Out}(E)$ and $j\in\text{In}(i)$, delete the edge $(j,i)$;
Then for each $i_\ell\in E$ and each $j\in R_\ell$, add a direct
edge from $j$ to $i_\ell$.

Clearly,
$\text{S}(G_{\lambda_1})=(\text{S}(G_{\lambda_0})\backslash E)\cup
\text{Out}(E)$. Here we fix the notation $\text{Out}(E)$ to be
defined in $G_{\lambda_0}$.
For each inner node $i$ of $G_{\lambda_1}$, we have the following
two cases:

Case 1: $i\in E$. Then $i=i_\ell$ for some $\ell\in[t']$ and by
the construction of $G_{\lambda_1}$, $\text{In}(i)=R_\ell$ is an
$(r,\mathcal C_{\lambda_1})$-repair set of $i$.

Case 2: $i$ is an inner node of $G_{\lambda_0}$ and
$i\notin\text{Out}(E)$. Then
$\text{In}(i)\subseteq\overline{E}=[n]\backslash E$ is an
$(r,\mathcal C_{\lambda_0})$-repair set of $i$. Moreover, since
$\mathcal C_{\lambda_1}$ is an $(E,r)$-repair code of $\mathcal
C_{\lambda_0}$, then by condition (\romannumeral 2) of Definition
\ref{R-code}, $\mathcal C_{\lambda_1}|_{\overline{E}}=\mathcal
C_{\lambda_0}|_{\overline{E}}$. So $\text{In}(i)$ is also an
$(r,\mathcal C_{\lambda_1})$-repair set of $i$.

Thus, for each inner node $i$ of $G_{\lambda_1}$, $\text{In}(i)$
is an $(r,\mathcal C_{\lambda_1})$-repair set of $i$. So
$G_{\lambda_1}$ is a repair graph of $\mathcal C_{\lambda_1}$.

Now, suppose $|\text{Out}(E)|<|E\cap \text{S}(G_{\lambda_0})|$.
Then we have
\begin{align*}
|\text{S}(G_{\lambda_1})|&=|(\text{S}(G_{\lambda_0})\backslash
E)\cup\text{Out}(E)|\\&=|(\text{S}(G_{\lambda_0})\backslash
E)|+|\text{Out}(E)|\\&=|(\text{S}(G_{\lambda_0})|-|E\cap
\text{S}(G_{\lambda_0})|+|\text{Out}(E)|\\&<|\text{S}(G_{\lambda_0})|
\end{align*}
which contradicts to Definition \ref{def-mrg}. Thus, by
contradiction, we have $|\text{Out}(E)|\geq|E\cap
\text{S}(G_{\lambda_0})|$.
\end{proof}

\begin{exam}
Let $G_{\lambda_0}$ be as in Fig. \ref{eg-fg-1} and
$G_{\lambda_0}$ be a repair graph of $\mathcal C_{\lambda_0}$ with
repair locality $r=2$. By Definition \ref{def-lrg}, $\{2,3\}$ is a
$(r,\mathcal C_{\lambda_0})$-repair set of $9$, $\{3,4\}$ is a
repair set of $10$, etc. Let $\mathcal C_{\lambda_1}$ be an
$(E=\{2,3,9\},r)$-repair code of $\mathcal C_{\lambda_0}$ such
that the $(r,\mathcal C_{\lambda_1})$-repair sets of $2,3$ and $9$
are $\{1,10\}, \{12,13\}$ and $\{11,14\}$ respectively. As in the
proof of Lemma \ref{mrg-out}, we can construct a graph
$G_{\lambda_1}$ as in Fig. \ref{eg-fg-2}. In $G_{\lambda_0}$, we
have $\text{Out}(E)=\{10\}$. In $G_{\lambda_1}$, we have
$\text{S}(G_{\lambda_1})=(\text{S}(G_{\lambda_0})\backslash
E)\cup\text{Out}(E)=(\text{S}(G_{\lambda_0})\backslash\{2,3\})\cup\{10\}$.
Moreover, we can check that $G_{\lambda_1}$ is a repair graph of
$\mathcal C_{\lambda_1}$. In fact, note that by Definition
\ref{R-code}, $\mathcal C_{\lambda_1}|_{\overline{E}}=\mathcal
C_{\lambda_0}|_{\overline{E}}$, where $\overline{E}=[n]\backslash
E$. Then $\{4,5\}$ is also an $(r,\mathcal C_{\lambda_1})$-repair
set of $11$. Similarly, $\{6,7\}$ is an $(r,\mathcal
C_{\lambda_1})$-repair set of $12$, etc. So \ref{def-lrg},
$G_{\lambda_1}$ is a repair graph of $\mathcal C_{\lambda_1}$.
\end{exam}

\renewcommand\figurename{Fig}
\begin{figure}[htbp]
\begin{center}
\includegraphics[height=3.6cm]{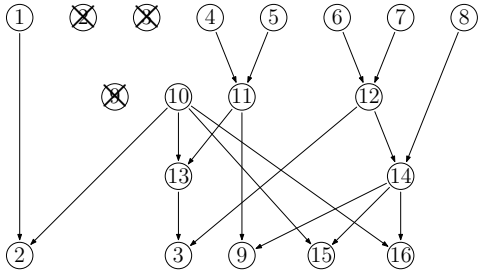}
\end{center}
\vspace{-0.2cm}\caption{The graph $G_{\lambda_1}$ obtained from
$G_{\lambda_0}$ by the process in the proof of Lemma \ref{mrg-out}
for $E=\{2,3,9\}$, where $G_{\lambda_0}$ is depicted in Fig.
\ref{eg-fg-1} and the repair sets of $2,3$ and $9$ are $\{1,10\},
\{12,13\}$ and $\{11,14\}$ respectively. 
}\label{eg-fg-2}
\end{figure}

\begin{cor}\label{mrg-cor-1}
Suppose $t\geq 3$. For any source $v$, the following hold:
\begin{itemize}
 \item [1)] $|\text{Out}(v)|\geq 1$.
 \item [2)] If $|\text{Out}(v)|=1$, then $\text{Out}^2(v)=\text{Out}(v')
 \neq\emptyset$, where $v'$ is the unique out-neighbor of $v$.
 \item [3)] If $\text{Out}(v)=\{v_1\}$ and $\text{Out}(v_1)=\{v_2\}$
 for some inner nodes $v_1$ and $v_2$, then
 $\text{Out}(v_2)\neq\emptyset$.
 \item [4)] If $\text{Out}(v)=\{v_1\}$ and $\text{Out}(v_1)=\{v_2\}$
 for some inner nodes $v_1$ and $v_2$, then $|\text{Out}(u)|\geq 2$
 for any source $u$ that belongs to $\text{In}(v_2)$.
 \item [5)] If $v$ and $w$ are two different sources and
 $|\text{Out}(v)|=|\text{Out}(w)|=1$, then the unique
 out-neighbor of $v$ is different from the unique out-neighbor of $w$.
\end{itemize}
\end{cor}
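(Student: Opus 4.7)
The overall strategy for all five parts is the same: apply Lemma~\ref{mrg-out} to a carefully chosen subset $E \subseteq [n]$ with $|E| \leq t$ (so we may take sets of up to three elements, using $t \geq 3$), then simplify $\text{Out}(E)$ using two facts about $G_{\lambda_0}$: it is acyclic (so no node lies in its own $\text{Out}$, and a source cannot appear in any $\text{Out}(u)$), and the defining source/inner-node structure. In each part, the chosen $E$ will be built from the given source(s) together with enough downstream inner nodes so that the left-hand side $|\text{Out}(E)|$ collapses to the quantity we want to bound from below.

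For part~1, take $E=\{v\}$: then $|E\cap\text{S}(G_{\lambda_0})|=1$, giving $|\text{Out}(v)|\geq 1$. For part~2, take $E=\{v,v'\}$. Since $v'\in\text{Out}(v)$, $v'$ is an inner node, so $|E\cap\text{S}(G_{\lambda_0})|=1$; using acyclicity ($v'\notin\text{Out}(v')$ and $v\notin\text{Out}(v')$ because $v$ is a source), the set $\text{Out}(E)$ simplifies to $\text{Out}(v')$, which by definition equals $\text{Out}^2(v)$, yielding the claim. For part~3, take $E=\{v,v_1,v_2\}$; only $v$ is a source, so Lemma~\ref{mrg-out} gives $|\text{Out}(E)|\geq 1$, and acyclicity shows $\text{Out}(E)=\text{Out}(v_2)$, so $\text{Out}(v_2)\neq\emptyset$. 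For part~4, take $E=\{v,v_1,u\}$; one checks $u\neq v$ (else $v_2\in\text{Out}(v)=\{v_1\}$, contradicting $v_1\neq v_2$) and $u\neq v_1$ (since $u$ is a source and $v_1$ is not). Both $v$ and $u$ are sources, so $|E\cap\text{S}(G_{\lambda_0})|=2$, and the usual simplifications reduce $\text{Out}(E)$ to either $\text{Out}(u)$ or $\text{Out}(u)\setminus\{v_1\}$, from which $|\text{Out}(u)|\geq 2$ follows in either case. For part~5, take $E=\{v,w\}$; both are sources, so $|E\cap\text{S}(G_{\lambda_0})|=2$, but if $v$ and $w$ shared the same unique out-neighbor $v_1$, then $\text{Out}(E)=\{v_1\}$ has size $1$, contradicting Lemma~\ref{mrg-out}.

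The only subtlety I expect is bookkeeping when simplifying $\text{Out}(E)$: one must verify that each element of $E$ that we need to discard from $\bigcup_{x\in E}\text{Out}(x)$ really sits there, and conversely that members of the target set (such as $v_2$ in part~4) are not inadvertently removed. The two standing facts — acyclicity and sources having empty in-neighborhood — handle every such case cleanly, so no part should require more than a short paragraph once $E$ is chosen. The hypothesis $t \geq 3$ is used essentially in parts~2 through~5 to license $|E| \in \{2,3\}$.
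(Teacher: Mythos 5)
Your proposal is correct and follows the same approach as the paper: for each claim, apply Lemma~\ref{mrg-out} to exactly the same sets $E$ ($\{v\}$, $\{v,v'\}$, $\{v,v_1,v_2\}$, $\{v,v_1,u\}$, $\{v,w\}$), the only cosmetic difference being that you compute $\text{Out}(E)$ directly while the paper phrases each part as a proof by contradiction. The bookkeeping steps you flag (acyclicity forbidding self-loops and 2-cycles, sources never appearing in any $\text{Out}(\cdot)$, and the checks $u\neq v$, $u\neq v_1$ in part~4) are exactly what is needed and are handled correctly.
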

\begin{proof}
We can prove all claims by contradiction.

1) Suppose $v$ has no out-neighbor. Picking $E=\{v\}$, then
$|\text{Out}(E)|=|\emptyset|=0<|E\cap\text{S}(G_{\lambda_0})|
=|\{v\}|=1$, which contradicts to Lemma \ref{mrg-out}. (e.g., see
1) of example \ref{ex-c1}.) Thus, $v$ must have at least one
out-neighbor.

2) Suppose $\text{Out}(v')=\emptyset$. Picking $E=\{v,v'\}$, then
$|\text{Out}(E)|=|\emptyset|=0<|E\cap\text{S}(G_{\lambda_0})|
=|\{v\}|=1$, which contradicts to Lemma \ref{mrg-out}. (e.g., see
2) of example \ref{ex-c1}.) So it must be that
$\text{Out}(v')\neq\emptyset$. Since $G_{\lambda_0}$ is acyclic
and $\{v'\}=\text{Out}(v)$, then $v'\in\text{Out}(v')$. By
\eqref{Def-Out-2}, $\text{Out}^2(v)=\text{Out}(v')\neq\emptyset$.

3) Suppose $\text{Out}(v_2)=\emptyset$. Picking $E=\{v,v_1,v_2\}$,
then $|\text{Out}(E)|=|\emptyset|=0<|E\cap\text{S}(G_{\lambda_0})|
=|\{v\}|=1$, which contradicts to Lemma \ref{mrg-out}. (e.g., see
3) of example \ref{ex-c1}.) So it must be that
$\text{Out}(v_2)\neq\emptyset$.

4) Suppose $|\text{Out}(u)|<2$. Since $u\in\text{In}(v_2)$, then
$\text{Out}(u)=\{v_2\}$. Picking $E=\{v,v_1,u\}$, we have
$|\text{Out}(E)|=|\{v_2\}|=1<|E\cap\text{S}(G_{\lambda_0})|
=|\{v,u\}|=2$, which contradicts to Lemma \ref{mrg-out}. (See 4)
of example \ref{ex-c1}.) So it must be that $|\text{Out}(u)|\geq
2$.

5) Suppose $\text{Out}(v)=\text{Out}(w)=\{v_1\}$. Picking
$E=\{v,w\}$, we have
$|\text{Out}(E)|=|\{v_1\}|=1<|E\cap\text{S}(G_{\lambda_0})|
=|\{v,w\}|=2$, which contradicts to Lemma \ref{mrg-out}. (e.g.,
see 5) of example \ref{ex-c1}.) Thus, the out-neighbor of $v$ and
$w$ must be different.
\end{proof}

The following example illustrates the arguments in the proof of
Corollary \ref{mrg-cor-1}.
\begin{exam}\label{ex-c1}
For the repair graph $G_{\lambda_0}$ in Fig. \ref{eg-fg-1}, we
have the following observations:

1) Let $v=1$. Note that $\text{Out}(1)=\emptyset$. If we pick
$E=\{1\}$, then we have
$|\text{Out}(E)|=|\emptyset|=0<|E\cap\text{S}(G_{\lambda_0})|
=|\{1\}|=1$.

2) Let $v=2$ and $v'=9$. Note that $\text{Out}(9)=\emptyset$. If
we pick $E=\{2,9\}$, then
$|\text{Out}(E)|=|\emptyset|=0<|E\cap\text{S}(G_{\lambda_0})|
=|\{2\}|=1$.

3) Let $v=5, v_1=11$ and $v_2=13$. Note that
$\text{Out}(13)=\emptyset$. If we pick $E=\{5,11,13\}$, then
$|\text{Out}(E)|=|\emptyset|=0<|E\cap\text{S}(G_{\lambda_0})|
=|\{5\}|=1$.

4) Let $v=6, v_1=12, v_2=14$ and $u=8$. Note that
$|\text{Out}(8)|=1$. If we pick $E=\{6,8,12\}$, then
$|\text{Out}(E)|=|\{14\}|=1<|E\cap\text{S}(G_{\lambda_0})|
=|\{6,8\}|=2$.

5) Let $v=6, w=7$ and $v_1=12$. If we pick $E=\{6,7\}$, then
$|\text{Out}(E)|=|\{12\}|=1<|E\cap\text{S}(G_{\lambda_0})|
=|\{6,7\}|=2$.
\end{exam}

\begin{rem}\label{rem-t-1-2}
In Corollary \ref{mrg-cor-1}, 1) holds for all $t\geq 1$ and 2),
5) hold for all $t\geq 2$. In fact, in the proof of 1),
contradiction is derived from a subset $E$ of size $1$. So the
proof is valid for all $t\geq 1$. Hence, 1) holds for all $t\geq
1$. Similarly, checking the proof of 2) and 5), we can see that
they hold for all $t\geq 2$.
\end{rem}

\renewcommand\figurename{Fig}
\begin{figure}[htbp]
\begin{center}
\includegraphics[height=2.5cm]{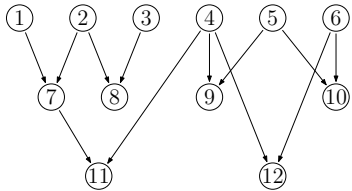}
\end{center}
\vspace{-0.2cm}\caption{An example repair graph $G$, where $n=12$
and $r=2$.}\label{eg-fg-3}
\end{figure}

\begin{cor}\label{mrg-cor-2}
Suppose $v\in\text{S}(G_{\lambda_0})$ and
$\text{Out}(v)=\{v_1,v_2\}$ for some inner nodes $v_1$ and $v_2$.
If $t\geq 3$, the following hold:
\begin{itemize}
 \item [1)] $\text{Out}(v_1)\neq\emptyset$ or
 $\text{Out}(v_2)\neq\emptyset$.
 \item [2)] If $v_1=\text{Out}(u)$ for some source $u$, then
 $\text{Out}(v_2)\neq\emptyset$.
 \item [3)] If $v_1=\text{Out}(u)$ for some source $u$, then
 $|\text{Out}(w)|\geq 2$ for any source $w$ that belongs to
 $\text{In}(v_2)$.
\end{itemize}
\end{cor}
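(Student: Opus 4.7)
The plan is to prove all three parts by the same contradiction strategy used in the proof of Corollary~\ref{mrg-cor-1}: assume the conclusion fails, construct a subset $E \subseteq [n]$ with $|E| \leq 3$ (permissible since $t \geq 3$) by combining $v$ with a few of the inner nodes or other sources named in the hypothesis, and then exhibit $|\text{Out}(E)| < |E \cap \text{S}(G_{\lambda_0})|$, contradicting Lemma~\ref{mrg-out}.

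For 1), suppose $\text{Out}(v_1) = \text{Out}(v_2) = \emptyset$ and set $E = \{v, v_1, v_2\}$. By \eqref{Def-Out-E}, $\text{Out}(E) = (\{v_1, v_2\} \cup \emptyset \cup \emptyset) \setminus E = \emptyset$, while $v$ alone puts the right-hand side at least $1$. For 2), assume additionally that $\text{Out}(u) = \{v_1\}$ for some source $u$ and that $\text{Out}(v_2) = \emptyset$, and take $E = \{v, u, v_2\}$; the three vertices are distinct because $|\text{Out}(v)| = 2 \neq 1 = |\text{Out}(u)|$ and because the inner node $v_2$ is not a source. Then $\text{Out}(E) = \{v_1, v_2\} \setminus E = \{v_1\}$, whereas $\{v, u\} \subseteq E \cap \text{S}(G_{\lambda_0})$, giving $1 < 2$.

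For 3), assume $\text{Out}(u) = \{v_1\}$ and that there exists a source $w \in \text{In}(v_2)$ with $|\text{Out}(w)| \leq 1$. Since $w \in \text{In}(v_2)$ forces $v_2 \in \text{Out}(w)$, one must have $\text{Out}(w) = \{v_2\}$, so the three sources $v, u, w$ are pairwise distinct because their out-neighborhoods $\{v_1, v_2\}$, $\{v_1\}$, $\{v_2\}$ differ pairwise. Taking $E = \{v, u, w\}$ then yields $\text{Out}(E) = \{v_1, v_2\}$ and $|E \cap \text{S}(G_{\lambda_0})| = 3$, again violating Lemma~\ref{mrg-out}. The only real hazard is the bookkeeping of checking that $|E| \leq t$ and that the chosen elements are pairwise distinct with the expected source/inner-node status, so that $\text{Out}(E)$ is correctly computed after the elements of $E$ are subtracted in \eqref{Def-Out-E}; once those checks are in place, each inequality is immediate.
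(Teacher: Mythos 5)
Your proof is correct and follows the paper's argument exactly: for each part you assume the conclusion fails, build the same three-element set $E$ as the paper ($\{v,v_1,v_2\}$, $\{v,u,v_2\}$, and $\{v,u,w\}$ respectively), and contradict Lemma~\ref{mrg-out}. The only difference is that you spell out the distinctness checks more explicitly than the paper does, which is a minor stylistic addition rather than a different route.
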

\begin{proof}
We can prove all claims by contradiction.

1) Suppose $\text{Out}(v_1)=\emptyset$ and
$\text{Out}(v_2)=\emptyset$. Picking $E=\{v,v_1,v_2\}$, we have
$|\text{Out}(E)|=|\emptyset|=0<|E\cap\text{S}(G_{\lambda_0})|
=|\{v\}|=1$, which contradicts to Lemma \ref{mrg-out}. (See 1) of
example \ref{ex-c2}.) So it must be that
$\text{Out}(v_1)\neq\emptyset$ or $\text{Out}(v_2)\neq\emptyset$.

2) Suppose $\text{Out}(v_2)=\emptyset$. Picking $E=\{u,v,v_2\}$,
we have
$|\text{Out}(E)|=|\{v_1\}|=1<|E\cap\text{S}(G_{\lambda_0})|
=|\{v,u\}|=2$, which contradicts to Lemma \ref{mrg-out}. (e.g.,
see 2) of example \ref{ex-c2}.) So it must be that
$\text{Out}(v_2)\neq\emptyset$.

3) Suppose $w\in\text{In}(v_2)$ is a source and $|\text{Out}(w)|<
2$. Then $\text{Out}(w)=\{v_2\}$. Picking $E=\{u,v,w\}$, we have
$|\text{Out}(E)|=|\{v_1,v_2\}|=2<|E\cap\text{S}(G_{\lambda_0})|
=|\{u,v,w\}|=3$, which contradicts to Lemma \ref{mrg-out}. (e.g.,
see 3) of example \ref{ex-c2}.) So it must be that
$|\text{Out}(w)|\geq 2$.
\end{proof}

The following example illustrates the arguments in the proof of
Corollary \ref{mrg-cor-2}.
\begin{exam}\label{ex-c2}
For the repair graph $G$ in Fig. \ref{eg-fg-3}, we have the
following observations:

1) Let $v=5, v_1=9$ and $v_2=10$. Note that
$\text{Out}(9)=\text{Out}(10)=\emptyset$. If we pick
$E=\{5,9,10\}$, then we have
$|\text{Out}(E)|=|\emptyset|=0<|E\cap\text{S}(G)| =|\{5\}|=1$.

2) Let $v=2, v_1=7, v_2=8$ and $u=1$. Note that
$\text{Out}(8)=\emptyset$. If we pick $E=\{1,2,8\}$, then
$|\text{Out}(E)|=|\{7\}|=1<|E\cap\text{S}(G_{\lambda_0})|
=|\{1,2\}|=2$.

3) Let $v=2, v_1=7, v_2=8, u=1$ and $w=3$. Note that
$|\text{Out}(3)|=1$. If we pick $E=\{1,2,3\}$, then
$|\text{Out}(E)|=|\{7,8\}|=2<|E\cap\text{S}(G_{\lambda_0})|
=|\{1,2,3\}|=3$.
\end{exam}

\section{Bound of Code Length}
In this section, we will prove a lower bound on the code length
$n$ for $(n,k,r,t)$-FLRC with $t\in\{2,3\}$.

\subsection{Code Length for $(n,k,r,2)$-FLRC}
The following theorem gives a lower bound on the code length of
$(n,k,r,2)$-FLRC.
\begin{thm}\label{bnd-t-1-2}
For $(n,k,r,2)$-FLRC, we have
\begin{align}\label{eq-bnd-t-1-2}
n\geq k+\left\lceil\frac{2k}{r}\right\rceil.
\end{align}
\end{thm}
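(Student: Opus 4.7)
The plan is to double-count the edges of a minimal repair graph $G_{\lambda_0} = (\mathcal V, \mathcal E)$. The upper bound $|\mathcal E| \leq (n - \delta^*) r$ is already in hand from \eqref{edge-num-left}, so I will aim for a matching lower bound $|\mathcal E| \geq 2\delta^*$. Combined with Lemma \ref{dim-dlt} (giving $\delta^* \geq k$) this yields $(n-\delta^*)r \geq 2k$, i.e.\ $nr \geq (r+2)k$, and the integrality of $n$ promotes the floor to the stated ceiling.

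To get $|\mathcal E| \geq 2\delta^*$, I partition the source set as $\text{S}(G_{\lambda_0}) = S_1 \sqcup S_{\geq 2}$, where $S_1 = \{v \in \text{S}(G_{\lambda_0}) : |\text{Out}(v)| = 1\}$ and $S_{\geq 2} = \{v \in \text{S}(G_{\lambda_0}) : |\text{Out}(v)| \geq 2\}$. Every source has at least one out-neighbor by Corollary \ref{mrg-cor-1}(1), which is valid for $t \geq 1$ by Remark \ref{rem-t-1-2}, so this partition covers $\text{S}(G_{\lambda_0})$. Writing $s = |S_1|$, summing the out-degrees of the sources gives at least $s + 2(\delta^* - s) = 2\delta^* - s$ edges, all with tails in $\text{S}(G_{\lambda_0})$.

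Next I need an extra $s$ edges whose tails lie off $\text{S}(G_{\lambda_0})$. For each $v \in S_1$ denote its unique out-neighbor by $v'$; this $v'$ is an inner node since $v \in \text{In}(v')$. By Corollary \ref{mrg-cor-1}(2) (valid for $t \geq 2$ by Remark \ref{rem-t-1-2}), $\text{Out}(v') \neq \emptyset$, so $v'$ contributes at least one outgoing edge. By Corollary \ref{mrg-cor-1}(5) (also valid for $t \geq 2$), the assignment $v \mapsto v'$ on $S_1$ is injective, so we get $s$ \emph{distinct} inner nodes each contributing an outgoing edge. Since the tails of these edges are inner nodes, they are disjoint from the $2\delta^* - s$ edges counted above. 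Adding the two contributions yields $|\mathcal E| \geq (2\delta^* - s) + s = 2\delta^*$, as desired.

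Combining the bounds gives $(n-\delta^*)r \geq 2\delta^*$, hence $nr \geq (r+2)\delta^* \geq (r+2)k = rk + 2k$, and since $n$ is an integer we conclude $n \geq k + \lceil 2k/r \rceil$. The one place to be careful is the edge-disjointness argument and the quiet use of $t \geq 2$ for parts (2) and (5) of Corollary \ref{mrg-cor-1}; both are handled directly by Remark \ref{rem-t-1-2}, so I do not expect any essential obstacle beyond book-keeping.
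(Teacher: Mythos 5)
Your proof is correct and follows essentially the same route as the paper's: both partition the source set by out-degree, lower-bound the source-tailed edges by $2\delta^*-|S_1|$, extract $|S_1|$ additional edges from the (pairwise distinct, by Corollary \ref{mrg-cor-1}(5)) unique out-neighbors of the out-degree-one sources using Corollary \ref{mrg-cor-1}(2) and Remark \ref{rem-t-1-2}, and then combine $|\mathcal E|\geq 2\delta^*$ with \eqref{edge-num-left} and Lemma \ref{dim-dlt} to obtain the bound. The only cosmetic difference is that the paper names these two edge classes ``red'' and ``green,'' whereas you argue directly.
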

\begin{proof}
Suppose $\{\mathcal C_\lambda; \lambda\in\Lambda\}$ is an
$(n,k,r,2)$-FLRC and $G_{\lambda_0}=(\mathcal V,\mathcal E)$ is a
minimal repair graph of $\{\mathcal C_\lambda;
\lambda\in\Lambda\}$, where $\lambda_0\in\Lambda$, $\mathcal
V=[n]$ is the node set of $G_{\lambda_0}$ and $\mathcal E$ is the
edge set of $G_{\lambda_0}$. We first prove $n\geq
\delta^*+\frac{2\delta^*}{r}$, where
$\delta^*=|\text{S}(G_{\lambda_0})|$.

By Remark \ref{rem-t-1-2} and 1) of Corollary \ref{mrg-cor-1},
each source of $G_{\lambda_0}$ has at least one out-neighbor. Let
$\mathcal E_{\text{red}}$ be the set of all edge $e$ such that the
tail of $e$ is a source. We call each edge in $\mathcal
E_{\text{red}}$ a \emph{red edge}. Let $A$ be the set of all
source $v$ such that $v$ has only one out-neighbor. Then the
number of all red edges is $|\mathcal E_{\text{red}}|\geq
|A|+2(|\text{S}(G_{\lambda_0})\backslash
A|)=|A|+2(|\text{S}(G_{\lambda_0})|-|A|)=2|\text{S}(G_{\lambda_0})|-|A|
=2\delta^*-|A|$. Thus, we have
\begin{align}\label{eq-bnd-t-1-2-1} |\mathcal E_{\text{red}}|\geq
2\delta^*-|A|.
\end{align}

For each $v\in A$, since $v$ has only one out-neighbor, by Remark
\ref{rem-t-1-2} and 2) of Corollary \ref{mrg-cor-1},
$\text{Out}^2(v)=\text{Out}(v')\neq\emptyset$, where $v'$ is the
unique out-neighbor of $v$. Let $\mathcal E_{\text{green}}(v)$ be
the set of all edges whose tail is $v'$. Then $\mathcal
E_{\text{green}}(v)\neq\emptyset$. Let $\mathcal
E_{\text{green}}=\bigcup_{v\in A}\mathcal E_{\text{green}}(v)$. We
call each edge in $\mathcal E_{\text{green}}$ a \emph{green edge}.
For any two different $v_1,v_2\in A$, let $v_1',v_2'$ be the
unique out-neighbor of $v_1,v_2$ respectively. By Remark
\ref{rem-t-1-2} and 5) of Corollary \ref{mrg-cor-1}, $v_1'\neq
v_2'$. So we have $\mathcal E_{\text{green}}(v_1)\cap \mathcal
E_{\text{green}}(v_2)=\emptyset$. Thus, the number of all green
edges is $|\mathcal E_{\text{green}}|=|\bigcup_{v\in A}\mathcal
E_{\text{green}}(v)|=\sum_{v\in A}|\mathcal
E_{\text{green}}(v)|\geq |A|$, i.e.,
\begin{align}\label{eq-bnd-t-1-2-2} |\mathcal E_{\text{green}}|\geq |A|.
\end{align}
Clearly, $\mathcal E_{\text{red}}\cap \mathcal
E_{\text{green}}=\emptyset$. Then by \eqref{eq-bnd-t-1-2-1} and
\eqref{eq-bnd-t-1-2-2}, we have $$|\mathcal E|\geq|\mathcal
E_{\text{red}}\cup \mathcal E_{\text{green}}|=|\mathcal
E_{\text{red}}|+|\mathcal E_{\text{green}}|\geq 2\delta^*.$$ On
the other hand, by \eqref{edge-num-left}, we have
$$(n-\delta^*)r\geq |\mathcal E|.$$ Thus, we have
$(n-\delta^*)r\geq 2\delta^*$, which implies that
$nr\geq\delta^*(r+2)$. So $n\geq
\frac{\delta^*(r+2)}{r}=\delta^*+\frac{2\delta^*}{r}$.

By Lemma \ref{dim-dlt}, $k\leq\delta^*=|\text{S}(G_{\lambda_0})|$.
So $n\geq\delta^*+\frac{2\delta^*}{r}\geq k+\frac{2k}{r}$.
Moreover, since $n$ is an positive integer, then we have $n\geq
k+\left\lceil\frac{2k}{r}\right\rceil$, which proves
\eqref{eq-bnd-t-1-2}.
\end{proof}

In \cite{Prakash-14}, it was proved that the code rate of an
$(n,k,r,2)$-ELRC satisfies bound \eqref{rate-bd-3}. Note that
\eqref{eq-bnd-t-1-2} also implies $\frac{k}{n}\leq\frac{r}{r+2}$.
So our result generalizes bound \eqref{rate-bd-3} to
$(n,k,r,2)$-FLRC.

\subsection{Code Length for $(n,k,r,3)$-FLRC}
The following theorem gives a lower bound on the code length of
$(n,k,r,3)$-FLRC.
\begin{thm}\label{bnd-t-3}
For $(n,k,r,3)$-FLRC, we have
\begin{align}\label{eq-bnd-t-3}
n\geq
k+\left\lceil\frac{2k+\lceil\frac{k}{r}\rceil}{r}\right\rceil.
\end{align}
\end{thm}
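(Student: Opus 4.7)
My plan is to extend the edge-counting argument from the proof of Theorem \ref{bnd-t-1-2}, leveraging the additional constraints on a minimal repair graph that are available only for $t\geq 3$ (Corollary \ref{mrg-cor-1} parts 3, 4 and Corollary \ref{mrg-cor-2}). Fix a minimal repair graph $G_{\lambda_0}=(\mathcal V,\mathcal E)$ of the given $(n,k,r,3)$-FLRC and write $\delta^* = |\text{S}(G_{\lambda_0})|$. Lemma \ref{dim-dlt} gives $k\leq \delta^*$ and \eqref{edge-num-left} gives $|\mathcal E|\leq (n-\delta^*)r$, so the target \eqref{eq-bnd-t-3} reduces to the sharper edge count
\[
|\mathcal E| \;\geq\; 2\delta^* + \left\lceil \tfrac{\delta^*}{r} \right\rceil,
\]
from which $n\geq \delta^* + \lceil(2\delta^*+\lceil\delta^*/r\rceil)/r\rceil \geq k+\lceil(2k+\lceil k/r\rceil)/r\rceil$ follows.

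The first $2\delta^*$ edges are collected exactly as in the proof of Theorem \ref{bnd-t-1-2}: partition $\text{S}(G_{\lambda_0})$ into $A$ (sources with a single out-neighbor) and $B=\text{S}(G_{\lambda_0})\setminus A$, take the ``red'' out-edges of the sources ($\geq |A|+2|B|$ of them) together with one ``green'' out-edge from the unique child $v_1$ of each $v\in A$ (an additional $|A|$ edges, pairwise distinct by Corollary \ref{mrg-cor-1} part 5 and disjoint from the red edges because their tails lie in $\text{Out}(\text{S}(G_{\lambda_0}))\subseteq\mathcal V\setminus\text{S}(G_{\lambda_0})$).

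For the extra $\lceil\delta^*/r\rceil$ edges I refine the partition as $A_1=\{v\in A:|\text{Out}(v_1)|=1\}$, $A_2=A\setminus A_1$, $B_2=\{v\in B:|\text{Out}(v)|=2\}$ and $B_{\geq 3}=B\setminus B_2$, and assign to each source one further edge not used in the first tier: for $v\in A_1$, an out-edge of the grandchild $v_2$ (which exists by Corollary \ref{mrg-cor-1} part 3); for $v\in A_2$, a second out-edge of $v_1$; for $v\in B_2$, an out-edge of one of the two out-neighbors of $v$ (guaranteed by Corollary \ref{mrg-cor-2} part 1); for $v\in B_{\geq 3}$, a third out-edge of $v$. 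The $|B_{\geq 3}|$ extras are automatically pairwise distinct since each has a unique source as its tail. For the remaining $|A_1|+|A_2|+|B_2|$ sources the extra has its tail at some inner node $u$, and the number of sources that can simultaneously choose the same $u$ is at most $|\text{In}(u)|\leq r$, because each such source contributes a distinct element of $\text{In}(u)$, either the source itself (in the $A_2$ or $B_2$ case) or its unique child $v_1$ (in the $A_1$ case, with distinctness of the $v_1$'s given by Corollary \ref{mrg-cor-1} part 5). Hence the number of distinct extras is at least $|B_{\geq 3}|+\lceil(|A_1|+|A_2|+|B_2|)/r\rceil \geq \lceil\delta^*/r\rceil$.

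The main obstacle will be ruling out collisions between the extras and the first-tier green edges. The most delicate case is when a grandchild $v_2$ of some $v\in A_1$ coincides with the unique child $v_1'$ of some $v'\in A$, which would threaten to identify the $A_1$-extra with the green edge of $v'$: this is ruled out directly by Corollary \ref{mrg-cor-1} part 4 applied to the chain $v\to v_1\to v_2$, which forces every source in $\text{In}(v_2)$ to have out-degree at least $2$, contradicting $v'\in A$. Analogous collisions involving $B_2$-extras are eliminated by a direct application of Lemma \ref{mrg-out} to a suitably chosen $3$-subset of $\mathcal V$ (exactly in the spirit of the proofs of Corollaries \ref{mrg-cor-1} and \ref{mrg-cor-2}), or are absorbed by choosing the Corollary \ref{mrg-cor-2} part 1 out-edge at the other out-neighbor of $v$.
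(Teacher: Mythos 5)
Your decomposition and edge-assignment strategy are substantively the same as the paper's. Your classes $A_1, A_2, B_2, B_{\geq 3}$ correspond exactly to the paper's $C_1, C_2, B, A$ (defined in \eqref{divide-a}--\eqref{divide-c2}), your ``first tier'' of $2\delta^*$ edges matches the non-excess part of the paper's red and green edges, and your ``extras'' are precisely the edges the paper harvests as extra red edges (from $A$), extra green edges (from $C_2$), and blue edges (from $B\cup C_1$). The paper proves $|\mathcal E|\geq 2\delta^* + \lceil\delta^*/r\rceil$ through Lemma \ref{mrg-edge-num}, whose heart is Claim 1 in Appendix A: each $v\in B\cup C_1$ owns a blue edge and each blue edge is owned by at most $r$ sources; your $\varphi$-style injection into $\text{In}(u')$ is verbatim what the paper does.

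The gap is precisely where you flag it, and it is not a cosmetic one. For $v\in B_2$ with $\text{Out}(v)=\{v_1,v_2\}$, Corollary \ref{mrg-cor-2} part 1 only gives you that one of $\text{Out}(v_1),\text{Out}(v_2)$ is nonempty. If the only nonempty one is $\text{Out}(v_1)$ and $v_1$ happens to be the unique child of some source, you cannot ``absorb'' by switching to $v_2$ without first establishing $\text{Out}(v_2)\neq\emptyset$; that requires Corollary \ref{mrg-cor-2} part 2, not part 1. And after switching you must still verify that the new edge out of $v_2$ is not itself a green edge, i.e.\ that $v_2$ is not the unique child of any single-out-degree source; that requires Corollary \ref{mrg-cor-2} part 3 (it forces $|\text{Out}(w)|\geq 2$ for every source $w\in\text{In}(v_2)$, so $v_2$ cannot be the unique out-neighbor of any source with out-degree 1). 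Your phrase ``a direct application of Lemma \ref{mrg-out} to a suitably chosen $3$-subset... exactly in the spirit of the proofs of Corollaries \ref{mrg-cor-1} and \ref{mrg-cor-2}'' does not identify the subset or carry out the count; the paper's Appendix A Case 2 of Claim 1 does exactly this chain of deductions. In short, the proof is complete once you invoke Corollary \ref{mrg-cor-2} parts 2 and 3 at the $B_2$ step; as written, that step is a plan rather than a proof.

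One small additional remark: your final inequality
\[
|B_{\geq 3}| + \left\lceil\tfrac{|A_1|+|A_2|+|B_2|}{r}\right\rceil \geq \left\lceil\tfrac{\delta^*}{r}\right\rceil
\]
is correct (since $m + \lceil s/r\rceil \geq \lceil (s+m)/r\rceil$), but it actually gives away slightly more than the paper's accounting, which keeps $|A_2|$ (the paper's $|C_2|$) outside the $/r$ bound since the second green edges at distinct $v_1$'s never collide; both versions suffice for \eqref{eq-bnd-t-3}, so this is a stylistic difference rather than an error.
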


Before proving Theorem \ref{bnd-t-3}, we first prove the following
Lemma \ref{mrg-edge-num}. In the rest of this subsection, we
always assume $\{\mathcal C_\lambda; \lambda\in\Lambda\}$ is an
$(n,k,r,3)$-FLRC and $G_{\lambda_0}=(\mathcal V,\mathcal E)$ is a
minimal repair graph of $\{\mathcal C_\lambda;
\lambda\in\Lambda\}$, where $\lambda_0\in\Lambda$, $\mathcal
V=[n]$ is the node set of $G_{\lambda_0}$ and $\mathcal E$ is the
edge set of $G_{\lambda_0}$. Then
$\delta^*=|\text{S}(G_{\lambda_0})|$, where $\delta^*$ is defined
by \eqref{dlt-star}.
\begin{lem}\label{mrg-edge-num}
For $(n,k,r,3)$-FLRC, we have
\begin{align}\label{edge-num} (n-\delta^*)r\geq |\mathcal E|
\geq 2\delta^*+\left\lceil\frac{\delta^*}{r}
\right\rceil.\end{align}
\end{lem}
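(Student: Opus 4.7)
The left inequality $(n-\delta^*)r \geq |\mathcal E|$ is immediate from \eqref{edge-num-left}, so all of the work lies in proving $|\mathcal E| \geq 2\delta^* + \lceil \delta^*/r \rceil$. My plan is to extend the two-color edge-counting from the proof of Theorem \ref{bnd-t-1-2} to a three-color scheme, introducing a blue color that captures the strictly stronger chain constraints in Corollary \ref{mrg-cor-1} parts 3)--4) and in Corollary \ref{mrg-cor-2} (both of which require $t \geq 3$), then amortizing against the in-degree bound $|\text{In}(u)| \leq r$ to gain the $\lceil \delta^*/r \rceil$ term.

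Let $A = \{v \in \text{S}(G_{\lambda_0}) : |\text{Out}(v)| = 1\}$ with $a = |A|$, and partition $A = A_1 \cup A_2$ according to whether the unique out-neighbor $v'$ of $v$ itself has a unique out-neighbor $v''$. I would carry over the $t=2$ counting: the red edges (tails in $\text{S}(G_{\lambda_0})$) number at least $2\delta^*-a$, and the green edges (out-edges of $v'$ as $v$ ranges over $A$) number at least $a+|A_2|$, the extra $|A_2|$ coming from the fact that when $v \in A_2$ the node $v'$ has at least two out-edges. For the new blue color I would take the out-edges of $v''$ as $v$ ranges over $A_1$: Corollary \ref{mrg-cor-1} part 3) guarantees $|\text{Out}(v'')| \geq 1$; Corollary \ref{mrg-cor-1} part 4) rules out $v'' \in \{v' : v \in A\}$, making blue tails disjoint from green; and since distinct $v \in A_1$ with a common $v''$ push distinct $v' \in \text{In}(v'')$ by Corollary \ref{mrg-cor-1} part 5), the bound $|\text{In}(v'')| \leq r$ delivers at least $\lceil |A_1|/r \rceil$ distinct blue tails, each contributing at least one blue edge.

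If every source lies in $A$, summing red, green, and blue already gives $|\mathcal E| \geq 2\delta^* + |A_2| + \lceil |A_1|/r \rceil$, which is at least $2\delta^* + \lceil \delta^*/r \rceil$ by the elementary ceiling inequality $|A_2| + \lceil |A_1|/r \rceil \geq \lceil (|A_1|+|A_2|)/r \rceil$. The main obstacle is the mixed case where $B = \{v : |\text{Out}(v)| = 2\}$ and $C = \{v : |\text{Out}(v)| \geq 3\}$ are nonempty. Sources in $C$ are absorbed by the three or more red edges each contributes, and for $v \in B$ Corollary \ref{mrg-cor-2} part 1) supplies an extra yellow edge from a chosen out-neighbor $c(v)$, amortized to at least $\lceil |B|/r \rceil$ distinct tails by the in-degree argument. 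The delicate point is disjointness of yellow from green and blue: I plan to use Corollary \ref{mrg-cor-2} part 2) to reroute $c(v)$ to the other out-neighbor of $v$ whenever one of $v_1,v_2$ already lies in $\{v' : v \in A\}$, and Corollary \ref{mrg-cor-2} part 3) to handle the residual case in which both out-neighbors of $v$ are green tails by observing that green then already contains two edges at those tails, absorbing the would-be yellow contribution. Once disjointness is secured, the resulting count $|\mathcal E| \geq 2\delta^* + |A_2| + |C| + \lceil |A_1|/r \rceil + \lceil |B|/r \rceil$ reduces to $2\delta^* + \lceil \delta^*/r \rceil$ via a single ceiling manipulation that uses $|A_2|+|C| \geq (|A_2|+|C|)/r$ for $r \geq 1$.
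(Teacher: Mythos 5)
Your partition of the source set (your $C, B, A_1, A_2$ are the paper's $A, B, C_1, C_2$) and your red and green counts reproduce the paper's exactly. The divergence, and the gap, is in how you handle the third contribution. You split it into a blue class (out-edges of $v''$ for $v \in A_1$) and a yellow class (out-edges of a chosen out-neighbor $c(v)$ for $v \in B$), and you want the final count to succeed by showing these classes have mutually disjoint tail sets, also disjoint from the green tails. But yellow-versus-blue disjointness is never actually secured, and it genuinely fails. Corollary \ref{mrg-cor-1} part 4) says that a source $u \in \text{In}(v'')$ must satisfy $|\text{Out}(u)| \geq 2$; that rules out $u \in A$ (so blue tails avoid green tails), but it does \emph{not} rule out $u \in B$. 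Concretely, take $v \in A_1$ with chain $v \to v' \to v''$, a source $w \in B$ with $\text{Out}(w) = \{w_1, v''\}$, and $w_1$ a green tail; your rerouting then forces $c(w) = v''$, which is a blue tail. If $v''$ has out-degree $1$, the single out-edge of $v''$ is being counted once as blue and once as yellow, and the inequality you sum to is false. Nothing in Corollaries \ref{mrg-cor-1}--\ref{mrg-cor-2} forbids this configuration; parts 2) and 3) of Corollary \ref{mrg-cor-2} only control green tails. (Incidentally, the ``residual case in which both $v_1, v_2$ are green tails'' cannot occur at all: part 3) directly shows that if $v_1$ is the out-image of some $u \in A$ then no out-degree-1 source lies in $\text{In}(v_2)$, so $v_2$ is not a green tail.)

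The paper sidesteps this by never trying to force distinct tails. It defines a \emph{single} blue class $\mathcal E_{\text{blue}}$ for all of $B \cup C_1$, excluding red and green edges, and explicitly permits one blue edge to ``belong'' to several different $v \in B \cup C_1$. Claim 1 then proves two things: (a) every $v \in B \cup C_1$ has at least one blue edge belonging to it, and (b) each blue edge $(u',u'')$ belongs to at most $r$ different $v$, via the injection $\varphi$ that sends $v \in B$ to $v \in \text{In}(u')$ and $v \in C_1$ to its out-image $v' \in \text{In}(u')$. Double-counting then gives $|\mathcal E_{\text{blue}}| \geq \frac{|B|+|C_1|}{r}$, and the integrality of $|\mathcal E|$ is applied once at the end. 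To repair your argument you would essentially need to merge blue and yellow and replace tail-disjointness with this multiplicity bound — at which point you have the paper's proof.
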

\begin{proof}
By \eqref{edge-num-left}, we have $(n-\delta^*)r\geq |\mathcal
E|$, which proves the first inequality of \eqref{edge-num}. So we
only need to prove the second inequality of \eqref{edge-num}. To
do this, we will divide the source set $\text{S}(G_{\lambda_0})$
and the edge set $\mathcal E$ into mutually disjoint subsets.

We can divide the source set $\text{S}(G_{\lambda_0})$ into four
subsets $A, B, C_1$ and $C_2$ as follows:
\begin{align}\label{divide-a}
A=\{v\in\text{S}(G_{\lambda_0}); |\text{Out}(v)|\geq
3\},\end{align}
\begin{align}\label{divide-b}B=\{v\in\text{S}(G_{\lambda_0});
|\text{Out}(v)|=2\},\end{align}
\begin{align}
\label{divide-c1}C_1=\{v\in\text{S}(G_{\lambda_0});
|\text{Out}(v)|=1 \text{~and~} |\text{Out}^2(v)|=1\}\end{align}
and
\begin{align}\label{divide-c2}C_2=\{v\in\text{S}(G_{\lambda_0});
|\text{Out}(v)|=1
\text{~and~} |\text{Out}^2(v)|\geq 2\}.\end{align} Clearly, $A, B,
C_1$ and $C_2$ are mutually disjoint. Moreover, by 1), 2) of
Corollary \ref{mrg-cor-1}, $\text{S}(G_{\lambda_0})=A\cup B\cup
C_1\cup C_2$. Hence, \begin{align}\label{num-source}
\delta^*=|\text{S}(G_{\lambda_0})|=|A|+|B|+|C_1|+|C_2|.
\end{align}

We can divide the edge set $\mathcal E$ into three subsets as
follows.

Firstly, an edge is called a \emph{red edge} if its tail is a
source. For each $v\in\text{S}(G_{\lambda_0})$, let $\mathcal
E_{\text{red}}(v)$ be the set of all red edges whose tail is $v$
and $\mathcal
E_{\text{red}}=\bigcup_{v\in\text{S}(G_{\lambda_0})}\mathcal
E_{\text{red}}(v)$ be the set of all red edges. Clearly,
$|\mathcal E_{\text{red}}(v)|=|\text{Out}(v)|$ and $\mathcal
E_{\text{red}}(w)\cap\mathcal E_{\text{red}}(v)=\emptyset$ for any
source $w\neq v$. So by \eqref{divide-a}$-$\eqref{divide-c2}, we
have
\begin{align}\label{num-red-edge}
|\mathcal
E_{\text{red}}|=\sum_{v\in\text{S}(G_{\lambda_0})}|\text{Out}(v)|\geq
3|A|+2|B|+|C_1|+|C_2|.
\end{align}

Secondly, an edge is called a \emph{green edge} if its tail is the
unique out-neighbor of some source in $C_1\cup C_2$. For each
$v\in
C_1\cup C_2$, 
let $\mathcal E_{\text{green}}(v)$ be the set of all green edges
whose tail is the unique out-neighbor of $v$ and $\mathcal
E_{\text{green}}=\bigcup_{v\in C_1\cup C_2}\mathcal
E_{\text{green}}(v)$ be the set of all green edges. Note that by
2) of Corollary \ref{mrg-cor-1},
$\text{Out}^2(v)=\text{Out}(v')\neq\emptyset$, where $v'$ is the
unique out-neighbor of $v$. Then $|\mathcal
E_{\text{green}}(v)|=|\text{Out}^2(v)|$. Moreover, if $v,w\in
C_1\cup C_2$ are different, then by 5) of Corollary
\ref{mrg-cor-1}, their out-neighbors are different. So $\mathcal
E_{\text{green}}(v)\cap\mathcal E_{\text{green}}(w)=\emptyset$.
Hence, by \eqref{divide-c1} and \eqref{divide-c2}, we have
\begin{align}\label{num-green-edge}
|\mathcal E_{\text{green}}|=\sum_{v\in C_1\cup
C_2}|\text{Out}^2(v)|\geq|C_1|+2|C_2|.
\end{align}

Thirdly, suppose $v\in B\cup C_1$ and $e\in\mathcal E$ such that
$e$ is neither a red edge nor a green edge. Then $e$ is called a
\emph{blue edge} belonging to $v$ if one of the following
conditions hold:
\begin{itemize}
 \item[(a)] $v\in B$ and the tail of $e$ belongs to $\text{Out}(v)$.
 \item[(b)] $v\in C_1$ and the tail of $e$ belongs to $\text{Out}^2(v)$.
\end{itemize}
Let $\mathcal E_{\text{blue}}(v)$ denote the set of all blue edges
belonging to $v$ and $\mathcal E_{\text{blue}}=\bigcup_{v\in B\cup
C_1}\mathcal E_{\text{blue}}(v)$. We have the following claim 1,
whose proof is given in Appendix A.

\textbf{Claim 1}: The number of blue edges is bounded by
\begin{align}\label{num-blue-edge}
|\mathcal E_{\text{blue}}|\geq\frac{|B|+|C_1|}{r}.
\end{align}

Clearly, $\mathcal E_{\text{red}},\mathcal E_{\text{green}}$ and
$\mathcal E_{\text{blue}}$ are mutually disjoint. Then by
\eqref{num-source}-\eqref{num-blue-edge}, we have
\begin{align*}
|\mathcal E|&\geq|\mathcal E_{\text{red}}|+|\mathcal
E_{\text{green}}|+|\mathcal E_{\text{blue}}|\\&\geq
(3|A|+2|B|+|C_1|+|C_2|)\\& ~ ~ ~ +(|C_1|+2|C_2|)+\frac{|B|+|C_1|}{r}\\
&=
2(|A|+|B|+|C_1|+|C_2|)\\& ~ ~ ~ +(|A|+|C_2|+\frac{|B|+|C_1|}{r})\\
&=
2\delta^*+\frac{r|A|+r|C_2|+|B|+|C_1|}{r}\\
&\geq
2\delta^*+\frac{|A|+|C_2|+|B|+|C_1|}{r}\\
&=2\delta^*+\frac{\delta^*}{r}.
\end{align*}
Note that $|\mathcal E|$ is an integer. Then we have $|\mathcal
E|\geq 2\delta^*+\left\lceil\frac{\delta^*}{r}\right\rceil$, which
proves the second inequality of \eqref{edge-num}.

By the above discussion, we proved \eqref{edge-num}, which in turn
proves Lemma \ref{mrg-edge-num}.
\end{proof}

To help the reader to understand the proof of Lemma
\ref{mrg-edge-num}, we give an example as follows.
\begin{exam}
Consider the graph in Fig. \ref{eg-fg-4}. Using the notations
defined in the proof of Lemma \ref{mrg-edge-num}, we have
$A=\{2,4,7\}$, $B=\{3,6\}$, $C_1=\{1\}$ and $C_2=\{5\}$.

It is easy to find all red edges. We can also easily find that
$\mathcal E_{\text{green}}(1)=\{(8,11)\}$ and $\mathcal
E_{\text{green}}(5)=\{(10,12), (10,13)\}$.

Since $1\in C_1$ and $11\in\text{Out}^2(1)$, then
$(11,14)\in\mathcal E_{\text{blue}}(1)$; Since
$11\in\text{Out}(6)$ and $6\in B$, then $(11,14)\in\mathcal
E_{\text{blue}}(6)$; Since $3\in B$ and $9\in\text{Out}(3)$, then
$(9,11)\in\mathcal E_{\text{blue}}(3)$. We can further check that
$\mathcal E_{\text{blue}}(1)=\mathcal
E_{\text{blue}}(6)=\{(11,14)\}$ and $\mathcal
E_{\text{blue}}(3)=\{(9,11)\}$.
\end{exam}

\renewcommand\figurename{Fig}
\begin{figure}[htbp]
\begin{center}
\includegraphics[height=3.6cm]{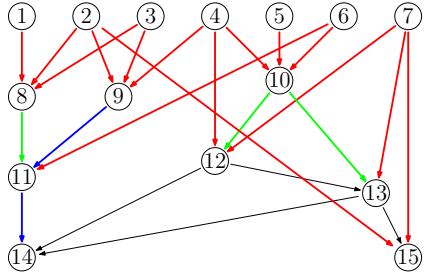}
\end{center}
\vspace{-0.2cm}\caption{An example of partitioning the edge set of
minimal repair graph: The red (resp. green, blue) edges are
colored by red (resp. green, blue).}\label{eg-fg-4}
\end{figure}

Now, using Lemma \ref{mrg-edge-num} and Lemma \ref{dim-dlt}, we
can give a simple proof of Theorem \ref{bnd-t-3}.
\begin{proof}[Proof of Theorem \ref{bnd-t-3}]
By Lemma \ref{mrg-edge-num}, we have
\begin{align*}(n-\delta^*)r\geq |\mathcal E| \geq
2\delta^*+\left\lceil\frac{\delta^*}{r} \right\rceil.\end{align*}
So
\begin{align*}(n-\delta^*)r\geq
2\delta^*+\left\lceil\frac{\delta^*}{r} \right\rceil.\end{align*}
Solving $n$ from the above equation, we can obtain
\begin{align}\label{n-dlt-r} n\geq\delta^*+\frac{2\delta^*+
\left\lceil\frac{\delta^*}{r} \right\rceil}{r} .\end{align} By
Lemma \ref{dim-dlt}, we have $\delta^*\geq k$. So
\begin{align}\label{dlt-r-k} \delta^*+\frac{2\delta^*+
\left\lceil\frac{\delta^*}{r}\right\rceil}{r}\geq k+\frac{2k+
\left\lceil\frac{k}{r} \right\rceil}{r} .\end{align} From
\eqref{n-dlt-r} and \eqref{dlt-r-k}, we have
\begin{align*} n\geq k+\frac{2k+
\left\lceil\frac{k}{r} \right\rceil}{r}.\end{align*} Since $n$ is
a positive integer, then we have $n\geq k+\left\lceil\frac{2k+
\left\lceil\frac{k}{r} \right\rceil}{r}\right\rceil$, which proves
Theorem \ref{bnd-t-3}.
\end{proof}

\renewcommand\figurename{Fig}
\begin{figure}[htbp]
\begin{center}
\includegraphics[height=6cm]{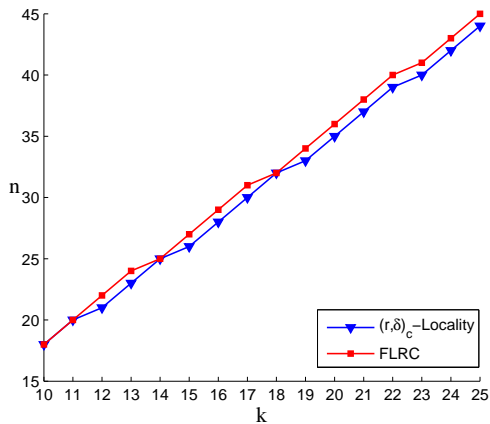}
\end{center}
\vspace{-0.2cm}\caption{Comparison of the code length bounds for
$t=r=3$ and $\delta=t+1=4$. }\label{bj-fg-1}
\end{figure}

We next show that the bound \eqref{eq-bnd-t-3} improves the bound
\eqref{rate-bd-2} for codes with all-symbol $(r,4)_c$-locality.
Note that for such codes, the bound \eqref{rate-bd-2} is
equivalent to
\begin{align}\label{bj-1}
n\geq \frac{r+1}{r}\frac{2r+1}{2r}\frac{3r+1}{3r}k.\end{align}
Also note that codes with all-symbol $(r,4)_c$-locality are
$(n,k,r,3)$-ELRC. Then by \eqref{eq-bnd-t-3}, we have
\begin{align}\label{bj-2}
n\geq
k+\left\lceil\frac{2k+\lceil\frac{k}{r}\rceil}{r}\right\rceil\geq
k+\frac{2k+\lceil\frac{k}{r}\rceil}{r}.\end{align} It is easy to
check that
\begin{align*}
&\left(k+\frac{2k+\lceil\frac{k}{r}\rceil}{r}\right)
-\left(\frac{r+1}{r}\frac{2r+1}{2r}\frac{3r+1}{3r}k\right)
\\&=\frac{1}{r}\left(\left\lceil\frac{k}{r}\right\rceil-\frac{k}{r}\right)+
\frac{k}{6r}\left(1-\frac{1}{r^2}\right)\\&\geq 0.\end{align*} So
\eqref{bj-2} is an improvement of \eqref{bj-1}.

An illustration of the gap between the bounds \eqref{eq-bnd-t-3}
and \eqref{rate-bd-2} for the parameters $t=r=3$ is given in Fig.
\ref{bj-fg-1}, from which we can see that \eqref{eq-bnd-t-3} is
tighter than \eqref{rate-bd-2} for $t=3$.

\renewcommand\figurename{Fig}
\begin{figure}[htbp]
\begin{center}
\includegraphics[height=7.0cm]{idx.2}
\end{center}
\vspace{-0.2cm}\caption{Relationship of discussions in Section
\uppercase\expandafter{\romannumeral 5}.}\label{TL-5}
\end{figure}

\section{Code construction}
In this section, we give some constructions of $(n,k,r,2)$-ELRCs
and $(n,k,r,3)$-ELRCs whose length $n$ achieve the bounds
\eqref{eq-bnd-t-1-2} and \eqref{eq-bnd-t-3} respectively. We call
such codes optimal $(n,k,r,2)$-ELRC and optimal $(n,k,r,3)$-ELRC
respectively. By these constructions, we prove the tightness of
the bound \eqref{eq-bnd-t-1-2} and \eqref{eq-bnd-t-3}. Moreover
interestingly, our results show that for some sets of parameters,
exact LRCs is sufficient to achieve the optimal code length of
functional LRCs. Our discussions are summarized and illustrated in
Fig. \ref{TL-5}.

We begin with a lemma that gives a method to construct subsets of
$[n]$ that can be used to construct repair set for LRC.
\begin{lem}\label{M-2-S}
Let $\mathcal L=\{C_1,\cdots,C_N\}$ be a collection of pairwise
disjoint subsets of $[n]$ and $(r_1,r_2,\cdots,r_K)$ be a
$K$-tuple of positive integers such that
$\sum_{i=1}^N|C_i|=\sum_{i=1}^Kr_i$. Let $M$ be a $K\times N$
binary matrix such that for each $i\in [K]$ and each $j\in[N]$,
the sum of the $i$th row is $r_i$ and the sum of the $j$th column
is $|C_j|$. Then there exists a collection $\{B_1,\cdots,B_K\}$ of
subsets of $\bigcup_{j=1}^NC_j$ such that:
\begin{itemize}
 \item [(\romannumeral 1)] $B_1,\cdots,B_K$ are pairwise disjoint and
 $\bigcup_{i=1}^KB_i=\bigcup_{j=1}^NC_j$;
 \vspace{0.2cm}\item [(\romannumeral 2)] $|B_i|=r_i$ for all $i\in[K]$;
 \vspace{0.1cm}\item [(\romannumeral 3)] $|B_i\cap C_j|\leq 1$ for all $i\in[K]$
 and $j\in[N]$.
\end{itemize}
\end{lem}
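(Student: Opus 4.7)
The plan is to build the sets $B_1,\ldots,B_K$ one column (i.e.\ one $C_j$) at a time, using the column sums of $M$ to decide how to distribute the elements of each $C_j$ across the $B_i$'s. The binary matrix $M$ is essentially a blueprint telling us that $|B_i\cap C_j|$ should equal $M_{ij}$, and the row/column sum conditions are exactly what is needed for (ii) and for $C_j$ to be fully covered.

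Concretely, for each $j\in[N]$, let $R_j=\{i\in[K]:M_{ij}=1\}$. By hypothesis the $j$th column of $M$ sums to $|C_j|$, so $|R_j|=|C_j|$. I fix an arbitrary bijection $\phi_j\colon C_j\to R_j$, and then define, for each $i\in[K]$,
\[
B_i=\bigcup_{j=1}^{N}\phi_j^{-1}(i),
\]
where $\phi_j^{-1}(i)=\emptyset$ whenever $i\notin R_j$. Since the $C_j$ are pairwise disjoint, each element of $\bigcup_j C_j$ lies in a unique $C_j$ and is sent by the corresponding $\phi_j$ to exactly one index $i$, so the $B_i$ partition $\bigcup_j C_j$; this gives (i). For (iii), note that $B_i\cap C_j=\phi_j^{-1}(i)$, which is either a single point (when $M_{ij}=1$) or empty (when $M_{ij}=0$), so $|B_i\cap C_j|=M_{ij}\le 1$. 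For (ii), I sum over $j$ and use the row-sum condition on $M$:
\[
|B_i|=\sum_{j=1}^{N}|B_i\cap C_j|=\sum_{j=1}^{N}M_{ij}=r_i.
\]

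There is no serious obstacle: the only structural ingredient is that a binary matrix with prescribed row sums $r_i$ and column sums $|C_j|$ (which exists by hypothesis, namely $M$ itself) lets one realize $|B_i\cap C_j|=M_{ij}$, and the bijections $\phi_j$ are trivially available because $|R_j|=|C_j|$. So the whole proof is essentially bookkeeping, and the identity $\sum_i r_i=\sum_j|C_j|$ only appears implicitly, through the consistency of the row and column sums of $M$.
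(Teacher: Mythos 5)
Your proof is correct and is essentially the same construction as in the paper: the paper replaces the ones in the $j$th column of $M$ by the elements of $C_j$ (one each) and reads off $B_i$ as the nonzero entries of row $i$, which is precisely your bijection $\phi_j\colon C_j\to R_j$ followed by $B_i=\bigcup_j\phi_j^{-1}(i)$. The verification of (i)--(iii) is the same bookkeeping in both.
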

\begin{proof}
For each $j\in[N]$, since the sum of the $j$th column of $M$ is
$|C_j|$, we can replace the ones of the $j$th column by elements
of $C_j$ such that each element of $C_j$ appears exactly once.
Denote the resulted matrix by $M'$. Now for each $i\in[K]$, let
$B_i$ be the elements of the $i$th row of $M'$ except the zeros.

Since $C_1,\cdots,C_N$ are pairwise disjoint and for each
$j\in[N]$, each element of $C_j$ appears exactly once in the $j$th
column of $M'$, then each element of $\bigcup_{j=1}^NC_j$ appears
exactly once in $M'$, which implies conditions (\romannumeral 1)
and (\romannumeral 3). Moreover, since the sum of the $i$th row of
$M$ is $r_i$, then $|B_i|=r_i$ for all $i\in[K]$. So condition
(\romannumeral 2) is satisfied.
\end{proof}

We give an example in the below to demonstrate the construction
method used in the proof Lemma \ref{M-2-S}.
\begin{exam}\label{ex-t-2}
Let $C_1=\{1,2,3,4,5\}$, $C_2=\{6,7,8,9,10\}$,
$C_3=\{11,12,13,14,15\}$, $C_4=\{16,17,18,19,20\}$,
$C_5=\{22,23,24,25\}$, $C_6=\{27,28,29,30\}$ and
$C_7=\{31,32,33\}$. Let $r_1=\cdots=r_5=5$ and $r_6=r_7=3$. Then
we have $\sum_{i=1}^7|C_i|=31=\sum_{i=1}^7r_i$. Let
\begin{eqnarray*} M=\left(\begin{array}{ccccccc}
1 & 1 & 1 & 1 & 1 & 0 & 0\\
0 & 1 & 1 & 1 & 0 & 1 & 1\\
1 & 1 & 1 & 1 & 1 & 0 & 0\\
1 & 0 & 1 & 1 & 0 & 1 & 1\\
1 & 1 & 0 & 0 & 1 & 1 & 1\\
1 & 1 & 1 & 0 & 0 & 0 & 0\\
0 & 0 & 0 & 1 & 1 & 1 & 0\\
\end{array}\right).
\end{eqnarray*}
We can check that for each $i,j\in\{1,2,\cdots,7\}$, the sum of
the $i$th row is $r_i$ and the sum of the $j$th column is $|C_j|$.
Replacing the ones of the $j$th column of $M$ by elements of
$C_j$, we obtain
\begin{eqnarray*} M'=\left(\begin{array}{ccccccc}
1 & 6 & 11 & 16 & 22 & 0 & 0\\
0 & 7 & 12 & 17 & 0 & 27 & 31\\
2 & 8 & 13 & 18 & 23 & 0 & 0\\
3 & 0 & 14 & 19 & 0 & 28 & 32\\
4 & 9 & 0 & 0 & 24 & 29 & 33\\
5 & 10 & 15 & 0 & 0 & 0 & 0\\
0 & 0 & 0 & 20 & 25 & 30 & 0\\
\end{array}\right)
\end{eqnarray*}
From $M'$, we can obtain subsets $B_1=\{1,6,11,16,22\}$,
$B_2=\{7,12,17,27,31\}$, $B_3=\{2,8,13,18,23\}$,
$B_4=\{3,14,19,28,32\}$, $B_5=\{4,9,24,29,33\}$, $B_6=\{5,10,15\}$
and $B_7=\{20,25,30\}$. It is easy to check that conditions
(\romannumeral 1)$-$(\romannumeral 3) of Lemma \ref{M-2-S} are
satisfied.
\end{exam}

\begin{cor}\label{subset-eq-c}
Let $\mathcal L=\{C_1,\cdots,C_N\}$ be a collection of pairwise
disjoint $\delta$-subsets of $[n]$ and $\vec{r}=(r_1,\cdots,r_K)$
be a $K$-tuple of positive integers such that
$\sum_{i=1}^Kr_i=\delta N$ and $r_i\leq|\mathcal L|=N$ for all
$i\in[K]$. Then there exists a collection $\{B_1,\cdots,B_K\}$ of
subsets of $\bigcup_{j=1}^NC_j$ such that:
\begin{itemize}
 \item [(\romannumeral 1)] $B_1,\cdots,B_K$ are pairwise disjoint and
 $\bigcup_{i=1}^KB_i=\bigcup_{j=1}^NC_j$;
 \vspace{0.2cm}\item [(\romannumeral 2)] $|B_i|=r_i$ for all $i\in[K]$;
 \vspace{0.1cm}\item [(\romannumeral 3)] $|B_i\cap C_j|\leq 1$ for all $i\in[K]$
 and $j\in[N]$.
\end{itemize}
\end{cor}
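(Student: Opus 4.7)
The plan is to reduce Corollary \ref{subset-eq-c} directly to Lemma \ref{M-2-S}. Under the hypotheses of the corollary, the partial sums match: $\sum_{j=1}^{N}|C_j| = \delta N = \sum_{i=1}^{K} r_i$. So what is needed is to exhibit a $K \times N$ binary matrix $M$ whose $i$-th row sums to $r_i$ and whose $j$-th column sums to $|C_j| = \delta$. Once such an $M$ is produced, the three conclusions of the corollary follow by applying Lemma \ref{M-2-S} verbatim.

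The construction of $M$ will be an elementary cyclic (round-robin) placement. Set the partial sums $s_0 = 0$ and $s_i = r_1 + r_2 + \cdots + r_i$, so that $s_K = \delta N$. I will place a $1$ in position $(i,j)$ precisely when there exists an integer $\ell$ with $s_{i-1} \le \ell < s_i$ satisfying $\ell \equiv j-1 \pmod{N}$, and a $0$ otherwise. In words, I enumerate $\delta N$ slots, allocate the first $r_1$ to row $1$, the next $r_2$ to row $2$, and so on, and send the $\ell$-th slot to column $(\ell \bmod N)+1$.

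To verify the row sums, note that row $i$ receives exactly $r_i$ slots, labeled by $r_i$ consecutive integers $\ell$. Since $r_i \le N$ by hypothesis, these $r_i$ integers have pairwise distinct residues modulo $N$, so the row contains $r_i$ ones in $r_i$ different columns. To verify the column sums, observe that for a fixed residue class $j-1$ modulo $N$, the interval $\{0, 1, \ldots, \delta N - 1\}$ contains exactly $\delta$ representatives; each such representative lies in a unique row (the one whose interval $[s_{i-1}, s_i)$ contains it), so column $j$ receives exactly $\delta$ ones in total. Hence $M$ has the desired row sums and column sums.

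I do not expect any real obstacle beyond choosing a clean construction of $M$; the Gale--Ryser realizability is automatic once $r_i \le N$ for all $i$, and the cyclic placement above makes this concrete. With $M$ in hand, Lemma \ref{M-2-S} produces subsets $B_1, \ldots, B_K$ satisfying conditions (i)--(iii), which are exactly the conditions (i)--(iii) of the corollary. This completes the proof.
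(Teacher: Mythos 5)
Your proof is correct, and it takes a route that is the same at the top level (reduce to Lemma \ref{M-2-S} by exhibiting a $K\times N$ binary matrix with row sums $r_i$ and constant column sums $\delta$) but differs in how that matrix is produced. The paper simply invokes the Gale--Ryser theorem, noting that the conditions $\sum_i r_i = \delta N$ and $r_i \le N$ guarantee realizability of the degree sequence. You instead give an explicit round-robin construction, placing the $\ell$-th slot in column $(\ell \bmod N)+1$ and giving rows $r_1, r_2, \ldots$ consecutive blocks of slots; the constraint $r_i \le N$ is exactly what ensures each row's block hits $r_i$ distinct columns, and the uniform column sum $\delta$ then falls out because the $\delta$ representatives of each residue class land in $\delta$ distinct blocks. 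Your argument is more elementary and self-contained (it avoids citing Gale--Ryser, which in full generality requires a nontrivial majorization check), and it is constructive, which can matter if one actually wants to write down the code. What it gives up is generality: the cyclic trick exploits the fact that all column sums are equal, whereas Gale--Ryser would also handle non-uniform column sums as used elsewhere in the paper's appendices (e.g.\ in the proof of Lemma \ref{subset-1}, Case 2, where the columns $|C_j|$ are not all equal). So your shortcut is valid for this corollary but does not replace the paper's reliance on Gale--Ryser everywhere.
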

\begin{proof}
Since $\sum_{i=1}^Kr_i=\delta N$ and $r_i\leq N$ for all
$i\in[K]$, using the Gale-Ryser Theorem $($see Manfred
\cite{Manfred}$)$, we can construct a $K\times N$ binary matrix
$M$ such that for each $i\in[K]$ and each $j\in[N]$, the sum of
the $i$th row of $M$ is $r_i$ and the sum of the $j$th column of
$M$ is $\delta=|C_j|$. By Lemma \ref{M-2-S}, there exists a
collection $\{B_1,\cdots,B_K\}$ of subsets of $\bigcup_{j=1}^NC_j$
that satisfies the conditions (\romannumeral 1)$-$(\romannumeral
3).
\end{proof}

The following two lemmas give a sufficient condition of
$(n,k,r,2)$-ELRC and $(n,k,r,3)$-ELRC respectively.

\begin{lem}\label{be-2-elrc}
Let $\mathcal C$ be an $[n,k]$ linear code and $[n]=S\cup T$ such
that $S\cap T=\emptyset$. Then $\mathcal C$ is an $(n,k,r,2)$-ELRC
if the following two conditions hold:
\begin{itemize}
 \item [(\romannumeral 1)] Each $i\in S$ has two disjoint
 $(r,\mathcal C)$-repair sets;
 \item [(\romannumeral 2)] Each $i\in T$ has an $(r,\mathcal C)$-repair
 set $R\subseteq S$.
\end{itemize}
\end{lem}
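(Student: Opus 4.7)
The plan is to invoke Lemma \ref{lem-ELRC}, which reduces the task to showing that for every $E \subseteq [n]$ with $|E| \le 2$, some coordinate in $E$ admits an $(r,\mathcal{C})$-repair set disjoint from $E$. Once this reformulation is in place, the argument splits naturally according to how $E$ meets the partition $[n] = S \cup T$.

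First I would dispose of the case $|E|=1$: if $E = \{i\}$, then any $(r,\mathcal{C})$-repair set of $i$ is already contained in $[n]\setminus\{i\} = [n]\setminus E$, and such a set exists by either (\romannumeral 1) or (\romannumeral 2) depending on whether $i\in S$ or $i\in T$.

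For $|E|=2$, write $E = \{i,j\}$ and split into three subcases. \emph{Subcase 1}: $E \cap S \ne \emptyset$, say $i \in S$. By (\romannumeral 1), $i$ has two disjoint repair sets $R_1, R_2$. Since $R_1 \cap R_2 = \emptyset$, the element $j$ lies in at most one of them, so at least one $R_\ell \subseteq [n]\setminus\{i,j\} = [n]\setminus E$. This handles both the case $i,j \in S$ and the mixed case $i \in S$, $j \in T$. \emph{Subcase 2}: $E \subseteq T$. Pick any $i \in E$; by (\romannumeral 2), $i$ has an $(r,\mathcal{C})$-repair set $R \subseteq S$. Since $E \subseteq T$ and $S \cap T = \emptyset$, automatically $R \cap E = \emptyset$, so $R \subseteq [n]\setminus E$.

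In every case some $i \in E$ has a repair set contained in $[n]\setminus E$, so by Lemma \ref{lem-ELRC} the code $\mathcal{C}$ is an $(n,k,r,2)$-ELRC. There is no real obstacle here; the only subtlety is to notice that condition (\romannumeral 1) (two \emph{disjoint} repair sets) is exactly what lets us dodge the single extra erased coordinate $j$ in Subcase 1, and that condition (\romannumeral 2) (repair sets landing in $S$) gives an ``automatic'' avoidance of the other erasure in Subcase 2 because the second erasure also lies in $T$.
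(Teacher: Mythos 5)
Your proof is correct and follows essentially the same route as the paper's: invoke Lemma \ref{lem-ELRC} and split on whether $E$ meets $S$, using the disjointness of the two repair sets when it does and the containment of repair sets in $S$ when it does not. Making the $|E|=1$ case explicit is a cosmetic addition; the paper absorbs it into the general $|E|\le 2$ argument.
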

\begin{proof}
We will prove that for any $E\subseteq[n]$ of size $|E|\leq 2$,
there exists an $i\subseteq E$ such that $i$ has an $(r,\mathcal
C)$-repair set $R\subseteq [n]\backslash E$. We have the following
two cases:

Case 1: $E\cap S=\emptyset$. Then $E\subseteq T$ and by condition
(\romannumeral 2) each $i\in E$ has an $(r,\mathcal C)$-repair set
$R\subseteq S\subseteq[n]\backslash E$.

Case 2: $E\cap S\neq\emptyset$. Suppose $i\in E\cap S$. By
condition (\romannumeral 1), $i$ has two disjoint $(r,\mathcal
C)$-repair sets, say $R_1$ and $R_2$. Note that $|E|\leq 2$ and
$i\notin R_1\cup R_2$, then either $E\cap R_1=\emptyset$ or $E\cap
R_2=\emptyset$. Without loss of generality, assume $E\cap
R_1=\emptyset$. Then we have $R_1\subseteq [n]\backslash E$.

Thus, we can always find an $i\in E$ that has an $(r,\mathcal
C)$-repair set $R\subseteq [n]\backslash E$. By Lemma
\ref{lem-ELRC}, $\mathcal C$ is an $(n,k,r,2)$-ELRC.
\end{proof}

\begin{lem}\label{be-3-elrc}
Let $\mathcal C$ be an $[n,k]$ linear code and $[n]=S\cup T$ such
that $S\cap T=\emptyset$. Then $\mathcal C$ is an $(n,k,r,3)$-ELRC
if the following two conditions hold:
\begin{itemize}
 \item [(\romannumeral 1)] Each $i\in S$ has two disjoint
 $(r,\mathcal C)$-repair sets, say $R_1$ and
 $R_2$, such that each $j\in R_1$ has an $(r,\mathcal C)$-repair set
 $R\cap (R_2\cup\{i\})=\emptyset$;
 \item [(\romannumeral 2)] Each $i\in T$ has an $(r,\mathcal C)$-repair
 set $R\subseteq S$;
\end{itemize}
\end{lem}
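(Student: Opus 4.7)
The plan is to apply Lemma \ref{lem-ELRC}: it suffices to show that for every $E\subseteq[n]$ with $|E|\leq 3$, there is some $i\in E$ that has an $(r,\mathcal C)$-repair set contained in $[n]\backslash E$. I would split on whether $E$ meets $S$, mirroring the case analysis used in the proof of Lemma \ref{be-2-elrc}, but adding a new subcase to accommodate the extra element allowed when $|E|=3$.

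First, if $E\cap S=\emptyset$, then $E\subseteq T$, and condition (\romannumeral 2) directly gives any $i\in E$ an $(r,\mathcal C)$-repair set $R\subseteq S$, which is disjoint from $E$. Otherwise, pick $i\in E\cap S$ and let $R_1,R_2$ be the two disjoint $(r,\mathcal C)$-repair sets of $i$ guaranteed by condition (\romannumeral 1). Since $i\notin R_1\cup R_2$, the set $E\setminus\{i\}$ has at most two elements. If $|E\setminus\{i\}|\leq 1$ or if one of $R_1,R_2$ happens to be disjoint from $E$, then we finish as in Lemma \ref{be-2-elrc} by using that repair set for $i$.

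The only new situation is $|E|=3$ with both $R_1$ and $R_2$ meeting $E\setminus\{i\}$. Because $R_1$ and $R_2$ are disjoint and $|E\setminus\{i\}|=2$, this forces $|E\cap R_1|=|E\cap R_2|=1$; write $E\setminus\{i\}=\{j_1,j_2\}$ with $j_1\in R_1$ and $j_2\in R_2$. Here is where condition (\romannumeral 1) does its real work: applied to $j_1\in R_1$, it supplies an $(r,\mathcal C)$-repair set $R$ of $j_1$ with $R\cap(R_2\cup\{i\})=\emptyset$. Then $i\notin R$, $j_2\in R_2$ is not in $R$, and $j_1\notin R$ by the definition of a repair set, so $R\subseteq[n]\backslash E$ and $j_1$ serves as the required index.

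The main (and essentially only) obstacle is correctly identifying this last subcase and seeing that condition (\romannumeral 1) is precisely designed to defeat it; the rest of the argument is a routine extension of the case analysis for the $t=2$ lemma. With all cases producing an $i\in E$ having an $(r,\mathcal C)$-repair set inside $[n]\backslash E$, Lemma \ref{lem-ELRC} concludes that $\mathcal C$ is an $(n,k,r,3)$-ELRC.
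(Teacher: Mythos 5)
Your proof is correct and follows essentially the same case analysis as the paper: reduce via Lemma \ref{lem-ELRC}, handle $E\cap S=\emptyset$ by condition (ii), then split on whether one of $R_1,R_2$ misses $E$, and in the remaining subcase use condition (i) on the element of $E\cap R_1$ to produce the needed repair set. You even make explicit the small point the paper glosses over, namely that $j_1\notin R$ because a repair set of a coordinate never contains that coordinate.
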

\begin{proof}
For any $E\subseteq[n]$ of size $|E|\leq 3$, similar to the proof
of Lemma \ref{be-2-elrc}, we have the following two cases:

Case 1: $E\cap S=\emptyset$. Then $E\subseteq T$ and by condition
(\romannumeral 2) each $i\in E$ has an $(r,\mathcal C)$-repair set
$R\subseteq S\subseteq[n]\backslash E$.

Case 2: $E\cap S\neq\emptyset$. Let $i\in E\cap S$. By condition
(\romannumeral 1), $i$ has two disjoint $(r,\mathcal C)$-repair
sets, say $R_1$ and $R_2$, such that each $j\in R_1$ has an
$(r,\mathcal C)$-repair set $R\cap (R_2\cup\{i\})=\emptyset$. Then
we have the following two subcases:

Case 2.1: $E\cap R_1=\emptyset$ or $E\cap R_2=\emptyset$. If
$E\cap R_1=\emptyset$, then $R_1\subseteq [n]\backslash E$; If
$E\cap R_2=\emptyset$, then $R_2\subseteq [n]\backslash E$. So in
this subcase, $i$ has an $(r,\mathcal C)$-repair set contained in
$[n]\backslash E$.

Case 2.2: $E\cap R_1\neq\emptyset$ and $E\cap R_2\neq\emptyset$.
Assume $j\in E\cap R_1$ and $j'\in E\cap R_2$. Then by condition
(\romannumeral 1), $j$ has an $(r,\mathcal C)$-repair set $R\cap
(R_2\cup\{i\})=\emptyset$. So
\begin{align}\label{R-cap}
R\cap (R_2\cup\{i,j\})=\emptyset. \end{align} On the other hand,
since $R_1\cap R_2=\emptyset$ and $|E|\leq 3$, then $j\neq j'$ and
\begin{align}\label{E-eq}
E=\{i,j,j'\}\subseteq R_2\cup\{i,j\}. \end{align} Combining
\eqref{E-eq} and \eqref{R-cap}, we have $R\subseteq [n]\backslash
E$. So in this subcase, $j\in E$ has an $(r,\mathcal C)$-repair
set $R\subseteq [n]\backslash E$.

Thus, we can find an element of $E$ that has an $(r,\mathcal
C)$-repair set $R\subseteq [n]\backslash E$. By Lemma
\ref{lem-ELRC}, $\mathcal C$ is an $(n,k,r,3)$-ELRC.
\end{proof}

\subsection{Optimal $(n,k,r,2)$-ELRC}
In this subsection, we give a method for constructing
$(n=k+\lceil\frac{2k}{r}\rceil,k,r,2)$-ELRC. Our construction is
based on the following lemma.

\begin{lem}\label{subset-1}
Suppose $\left\lfloor\frac{k}{r}\right\rfloor\geq r$. There exists
a collection $\mathcal A=\{A_1,\cdots,A_\eta\}$ of
$\eta=\left\lceil\frac{2k}{r}\right\rceil$ subsets of $[k]$ such
that:
\begin{itemize}
 \item [(\romannumeral 1)] $|A_i|\leq r$ for each $i\in[\eta]$;
 \item [(\romannumeral 2)] $|A_i\cap A_j|\leq 1$ for all $\{i,j\}\subseteq[\eta]$;
 \item [(\romannumeral 3)] Each $i\in[k]$ belongs to exactly two subsets in
 $\mathcal A$;
\end{itemize}
\end{lem}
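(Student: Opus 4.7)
My plan is to reformulate the lemma as a simple-graph realization question. Associate each candidate subset $A_j$ with a vertex $v_j$ of a graph $H$ on the vertex set $\{v_1,\dots,v_\eta\}$, and associate each element $i\in[k]$ with an edge of $H$ joining the two vertices that correspond to the (exactly two) subsets required to contain $i$. Under this dictionary, condition (iii) is equivalent to ``$H$ has exactly $k$ edges and no self-loops'', condition (ii) is equivalent to ``$H$ is simple'' (no two vertices are joined by more than one edge), and condition (i) is equivalent to ``every vertex of $H$ has degree at most $r$''. Hence it suffices to produce a simple graph on $\eta$ vertices with exactly $k$ edges and maximum degree at most $r$.

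Next I would verify that the two numerical feasibility conditions for such a graph, namely $2k\le r\eta$ and $k\le\binom{\eta}{2}$, are guaranteed by the hypotheses. The first is immediate from $\eta=\lceil 2k/r\rceil\ge 2k/r$. The second is where $\lfloor k/r\rfloor\ge r$ enters: it gives $k\ge r^{2}$, so $\eta\ge 2k/r\ge 2r$, hence $\eta-1\ge r$, and therefore $\eta(\eta-1)\ge r\eta\ge 2k$.

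Then I would realize $H$ explicitly through a near-regular degree sequence. Set $s=r\eta-2k\ge 0$; a short arithmetic check using $\eta\le 2k/r+1$ and $k\ge r^{2}$ gives $(r-1)\eta\le 2k$, i.e., $s\le\eta$. Assign degree $r-1$ to $s$ vertices and degree $r$ to the remaining $\eta-s$ vertices; the total degree equals $2k$, the maximum degree $r$ satisfies $r\le\eta-1$, and the sequence differs from an $r$-regular one in a very controlled way. This almost-regular sequence is realizable by a simple graph (for instance by the Erd\H{o}s--Gallai theorem, or concretely by starting with an $r$-regular circulant on $\eta$ vertices when $r\eta$ is even and deleting $s/2$ pairwise non-adjacent edges, with a minor parity adjustment in the odd case). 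Reading the edge-incidence structure of the resulting $H$ directly yields the collection $\mathcal{A}=\{A_1,\dots,A_\eta\}$.

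I expect the main obstacle to be this final realizability step: producing a \emph{simple} graph with precisely $k$ edges and maximum degree exactly $r$, uniformly across all admissible $(k,r)$. The reduction to graph theory is clean, but some care is needed to verify $s\le\eta$ (which is exactly where the hypothesis $\lfloor k/r\rfloor\ge r$ is essential, preventing degenerate small-$k$ regimes) and to give a single construction that covers the even and odd parities of $r\eta$.
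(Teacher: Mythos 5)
Your argument is correct, and it follows a genuinely different route from the paper's. You recast conditions (i)--(iii) as the existence of a simple graph $H$ on $\eta$ vertices with exactly $k$ edges and maximum degree at most $r$, then observe that a near-regular degree sequence (degrees all in $\{r-1,r\}$, even sum, $r\le\eta-1$) is graphical. The paper instead lays out $[k]$ in an explicit $r\times m$ array, splits into the cases $r\mid k$ and $r\nmid k$, takes the columns as the first family of subsets, and then invokes the Gale--Ryser theorem (through Lemma~\ref{M-2-S} and Corollary~\ref{subset-eq-c}) to partition the array into a second family of transversal-like subsets. Both proofs ultimately rest on a classical realizability theorem---Gale--Ryser for bipartite degree sequences in the paper, Erd\H{o}s--Gallai (or the folklore fact that near-regular sequences with $d_1\le n-1$ and even sum are graphical) in yours. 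Your reduction is tidier: it is a single unified argument with no case split on divisibility, and the role of the hypothesis $\lfloor k/r\rfloor\ge r$ (forcing $k\ge r^2$, hence $\eta\ge 2r$, hence $\binom{\eta}{2}\ge k$ and $r\le\eta-1$) is transparent. What the paper's route buys is a fully explicit array description of the subsets, which is what its later mesh constructions (Appendices C--D) build on; the array picture is reused there. One small caution: your ``circulant minus $s/2$ non-adjacent edges'' shortcut does need the parity fix you mention when $r\eta$ is odd; leaning on Erd\H{o}s--Gallai (or Havel--Hakimi, which also gives an algorithmic construction) sidesteps that cleanly, so I would state that as the primary realization argument and relegate the circulant picture to a remark.
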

\begin{proof}
The proof is given in Appendix B.
\end{proof}

The following are two examples of subsets that satisfy conditions
(\romannumeral 1)$-$(\romannumeral 3) of Lemmas \ref{subset-1}.

\begin{exam}\label{exm-t-2-1}
For $k=12$ and $r=3$, we have
$\eta=\left\lceil\frac{2k}{r}\right\rceil=8$. Let $\mathcal
A=\{A_1,\cdots,A_8\}$ be as in Fig. \ref{fg-subset-1}(a), where
each subset in $\{A_1,\cdots,A_4\}$ is represented by a red line
and each subset in $\{A_5,\cdots,A_8\}$ is represented by a blue
line. We can check that conditions (\romannumeral
1)$-$(\romannumeral 3) of Lemmas \ref{subset-1} are satisfied.
\end{exam}

\begin{exam}\label{exm-t-2-2}
For $k=10$ and $r=3$, we have
$\eta=\left\lceil\frac{2k}{r}\right\rceil=7$. Let $\mathcal
A=\{A_1,\cdots,A_7\}$ be as in Fig. \ref{fg-subset-1}(b), where
each subset in $\{A_1,A_2,A_3\}$ is represented by a red solid
line, $A_4$ is represented by a red dashed line and each subset in
$\{A_5,A_6,A_7\}$ is represented by a blue line. We can check that
conditions (\romannumeral 1)$-$(\romannumeral 3) of Lemmas
\ref{subset-1} are satisfied.
\end{exam}

\renewcommand\figurename{Fig}
\begin{figure}[htbp]
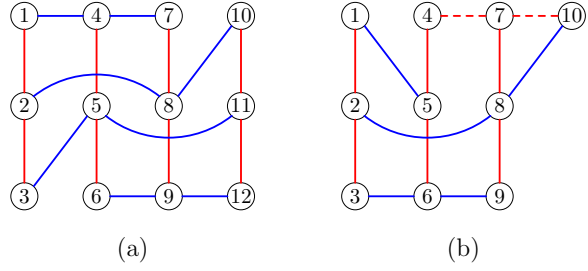

\begin{center}
\includegraphics[height=3.5cm]{eg.1}
\hspace{1cm}\includegraphics[height=3.5cm]{eg.2}
\end{center}
\vspace{-0.2cm}\caption{Subsets of $[k]$ that satisfy conditions
of Lemmas \ref{subset-1}: (a) is for $k=12$ and (b) is for
$k=10$.}\label{fg-subset-1}
\end{figure}

Now we have the following construction.

\textbf{Construction 1}: Let
$\left\lfloor\frac{k}{r}\right\rfloor\geq r$ and $\mathcal
A=\{A_1,\cdots,A_\eta\}$ be constructed as in Lemma
\ref{subset-1}, where $\eta=\left\lceil\frac{2k}{r}\right\rceil$.
Let $x_{1}, \cdots, x_k$ be $k$ information symbols. Then we can
construct a $[k+\eta,k]$ systematic linear code $\mathcal C$ over
$\mathbb F_2$ with $\eta$ parities $x_{k+1}, \cdots, x_{k+\eta}$
such that $x_{k+i}=\sum_{j\in A_i}x_j$ for each $i\in[\eta]$.

\begin{thm}\label{ach-bnd-t-2}
The code $\mathcal C$ obtained by Construction 1 is an
$(n=k+\left\lceil\frac{2k}{r}\right\rceil,k,r,2)$-ELRC.
\end{thm}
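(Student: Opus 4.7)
The plan is to invoke Lemma \ref{be-2-elrc} with $S=[k]$ playing the role of the information coordinates and $T=\{k+1,\ldots,k+\eta\}$ playing the role of the parity coordinates, and to verify its two hypotheses by reading off the combinatorial properties of $\mathcal{A}=\{A_1,\ldots,A_\eta\}$ guaranteed by Lemma \ref{subset-1}.

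First I would check condition (\romannumeral 2) of Lemma \ref{be-2-elrc}, which is essentially immediate. For each parity coordinate $k+i\in T$, Construction 1 gives $x_{k+i}=\sum_{j\in A_i}x_j$, so $A_i$ itself is an $(r,\mathcal C)$-repair set of $k+i$; by property (\romannumeral 1) of Lemma \ref{subset-1}, $|A_i|\le r$, and by definition $A_i\subseteq[k]=S$.

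Next I would verify condition (\romannumeral 1) of Lemma \ref{be-2-elrc}. Fix $j\in S=[k]$. By property (\romannumeral 3) of Lemma \ref{subset-1}, $j$ belongs to exactly two subsets in $\mathcal A$, say $A_{i_1}$ and $A_{i_2}$ with $i_1\neq i_2$. For each $\ell\in\{1,2\}$, rearranging $x_{k+i_\ell}=\sum_{j'\in A_{i_\ell}}x_{j'}$ over $\mathbb F_2$ yields
\begin{equation*}
x_j=x_{k+i_\ell}+\sum_{j'\in A_{i_\ell}\setminus\{j\}}x_{j'},
\end{equation*}
so $R_\ell:=(A_{i_\ell}\setminus\{j\})\cup\{k+i_\ell\}$ is an $(r,\mathcal C)$-repair set of $j$, with $|R_\ell|=|A_{i_\ell}|\le r$.

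The only nontrivial point is to show $R_1\cap R_2=\emptyset$, and this is where properties (\romannumeral 2) and (\romannumeral 3) of Lemma \ref{subset-1} come in. The parity coordinates $k+i_1$ and $k+i_2$ differ since $i_1\ne i_2$, and neither lies in $A_{i_1}\cup A_{i_2}\subseteq[k]$. Any common element of $A_{i_1}\setminus\{j\}$ and $A_{i_2}\setminus\{j\}$ would lie in $A_{i_1}\cap A_{i_2}$; by property (\romannumeral 2), $|A_{i_1}\cap A_{i_2}|\le 1$, and since $j\in A_{i_1}\cap A_{i_2}$, that intersection is exactly $\{j\}$, which has been removed. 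Hence $R_1\cap R_2=\emptyset$, finishing hypothesis (\romannumeral 1) of Lemma \ref{be-2-elrc} and concluding that $\mathcal C$ is an $(n,k,r,2)$-ELRC with $n=k+\eta=k+\lceil 2k/r\rceil$. I do not anticipate a real obstacle here: the whole argument is a bookkeeping exercise once Lemma \ref{subset-1} supplies the set system with the right incidence structure, and the disjointness of the two repair sets is the only place any inequality is used in an essential way.
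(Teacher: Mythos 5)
Your proof is correct and follows essentially the same route as the paper's: both apply Lemma \ref{be-2-elrc} with $S=[k]$, $T=\{k+1,\ldots,k+\eta\}$, use property (\romannumeral 1) of Lemma \ref{subset-1} to bound the repair-set sizes, property (\romannumeral 3) to get two covering sets $A_{i_1},A_{i_2}$ for each $j\in[k]$, and property (\romannumeral 2) to ensure $A_{i_1}\cap A_{i_2}=\{j\}$ so that the induced repair sets $R_1,R_2$ are disjoint. Your writeup merely makes explicit the disjointness bookkeeping that the paper states more compactly.
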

\begin{proof}
Let $S=[k]$ and $T=\{k+1,\cdots,k+\eta\}$, where
$\eta=\left\lceil\frac{2k}{r}\right\rceil$. Then we have $S\cap
T=\emptyset$. By conditions (\romannumeral 2), (\romannumeral 3)
of Lemma \ref{subset-1}, for each $i\in S$, there exist two
subsets, say $A_{i_1}$ and $A_{i_2}$, such that $A_{i_1}\cap
A_{i_2}=\{i\}$. By Construction 1 and condition (\romannumeral 1)
of Lemma \ref{subset-1}, $R_1=A_{i_1}\cup\{k+i_1\}\backslash
\{i\}$ and $R_2=A_{i_2}\cup\{k+i_2\}\backslash \{i\}$ are two
disjoint $(r,\mathcal C)$-repair sets of $i$. Moreover, for each
$i\in T$, again by Construction 1 and condition (\romannumeral 1)
of Lemma \ref{subset-1}, $A_{i-k}$ is an $(r,\mathcal C)$-repair
set of $i$. So by Lemma \ref{be-2-elrc}, $\mathcal C$ is an
$(n,k,r,2)$-ELRC.
\end{proof}

Note that the code $\mathcal C$ obtained by Construction 1 has
length $n=k+\eta=k+\left\lceil\frac{2k}{r}\right\rceil$, which
meets the bound \eqref{eq-bnd-t-1-2}. So from Theorem
\ref{ach-bnd-t-2}, we can directly obtain the following theorem.
\begin{thm}\label{ext-bnd-t-2}
If $\left\lfloor\frac{k}{r}\right\rfloor\geq r$, then there exist
$(n,k,r,2)$-ELRC over the binary field that meet the bound
\eqref{eq-bnd-t-1-2}.
\end{thm}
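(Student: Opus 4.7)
The plan is to observe that Theorem \ref{ext-bnd-t-2} is essentially an immediate consequence of the machinery already assembled: Lemma \ref{subset-1}, Construction 1, Theorem \ref{ach-bnd-t-2}, and the lower bound Theorem \ref{bnd-t-1-2}. No new combinatorial or algebraic work is required; the statement simply packages the construction together with the matching lower bound.

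First I would invoke the hypothesis $\lfloor k/r\rfloor \geq r$ to apply Lemma \ref{subset-1}, which yields a collection $\mathcal{A} = \{A_1, \ldots, A_\eta\}$ of subsets of $[k]$ with $\eta = \lceil 2k/r \rceil$ satisfying conditions (i)--(iii). Next, using exactly this $\mathcal{A}$, I would invoke Construction 1 to produce the systematic $[k+\eta, k]$ binary linear code $\mathcal{C}$ with parity symbols $x_{k+i} = \sum_{j \in A_i} x_j$, which by definition is defined over $\mathbb{F}_2$.

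Then I would apply Theorem \ref{ach-bnd-t-2}, which asserts that this $\mathcal{C}$ is an $(n, k, r, 2)$-ELRC with $n = k + \lceil 2k/r \rceil$. Finally, comparing against the bound \eqref{eq-bnd-t-1-2} from Theorem \ref{bnd-t-1-2}, namely $n \geq k + \lceil 2k/r \rceil$, I would conclude that the constructed code meets the bound with equality, which is the claim of Theorem \ref{ext-bnd-t-2}.

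There is no real obstacle in this step: all genuine content (the combinatorial subset system, the verification that the resulting code has the local-repair property, and the matching lower bound) has already been established in the earlier results. The only thing worth double-checking is that Construction 1's field is $\mathbb{F}_2$ (it is, by definition) so that the ``over the binary field'' clause of the theorem is indeed satisfied; everything else is simply invocation.
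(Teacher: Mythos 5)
Your proposal is correct and matches the paper's reasoning exactly: the paper also treats this theorem as a direct corollary of Lemma \ref{subset-1}, Construction 1, and Theorem \ref{ach-bnd-t-2}, observing that the constructed binary code has length $n=k+\lceil 2k/r\rceil$ and hence meets \eqref{eq-bnd-t-1-2} with equality.
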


The authors in \cite{Wang15} constructed binary codes with
all-symbol locality $r$, availability $t$ and code rate
$\frac{r}{r+t}$ for $n={r+t\choose r}$ and any positive integer
$r$ and $t~($such codes are a subclass of $(n,k,r,t)$-ELRC$)$. For
$t=2$, we have $n=\frac{(r+2)(r+1)}{2}$ and
$k=\frac{r}{r+2}n=\frac{r(r+1)}{2}$. In our construction, we
require that $\lfloor\frac{k}{r}\rfloor\geq r$, which implies that
$k\geq r^2>\frac{r(r+1)}{2}$ if $r>1$.


\subsection{Optimal $(n,k,r,3)$-ELRC}
In this subsection, we give a method for constructing
$(n=k+\left\lceil\frac{2k+\lceil\frac{k}{r}\rceil}{r}\right\rceil,
k, r, 3)$-ELRC. We always denote
$$m=\left\lceil\frac{k}{r}\right\rceil$$ and
$$\ell=\left\lceil\frac{2k+\lceil\frac{k}{r}\rceil}{r}\right\rceil
-\left\lceil\frac{k}{r}\right\rceil=\left\lceil\frac{2k+m}{r}\right\rceil-m.$$
Then we have
$$n=k+\left\lceil\frac{2k+\lceil\frac{k}{r}\rceil}{r}\right\rceil=k+m+\ell.$$
Our construction is closely related to the following concept.
\begin{defn}\label{def-mesh}
A mesh of $[n]$ is a collection $\mathcal R\cup\mathcal B$ of
subsets of $[n]$, where $\mathcal R=\{RL_1, \cdots, RL_{m}\}$ and
$\mathcal B=\{BL_1,\cdots, BL_{\ell}\}$ are called red lines and
blue lines respectively, that satisfies the following conditions:
\begin{itemize}
 \item [(\romannumeral 1)] For each $i\in[m]$, $RL_i\subseteq[k+m]$, $|RL_i|=r+1$
 and $RL_i\cap\{k+1,\cdots,k+m\}=\{k+i\}$;
 \item [(\romannumeral 2)] For each $j\in[\ell]$,
 $BL_j\cap\{k+m+1, \cdots, n\}=\{k+m+j\}$ and $|BL_j|\leq r+1$;
 \item [(\romannumeral 3)] Each $i\in[k+m]$ belongs to exactly two lines,
 at least one is a red line;
 \item [(\romannumeral 4)] Any two different lines have at most one point in
 common;
 \item [(\romannumeral 5)] Any two different lines that intersect with the same
 red line are disjoint.
\end{itemize}
Here a line means a subset in $\mathcal R\cup\mathcal B$ $($i.e.,
a red line or a blue line$)$ and a point means an element of
$[n]$.
\end{defn}

\renewcommand\figurename{Fig}
\begin{figure}[htbp]
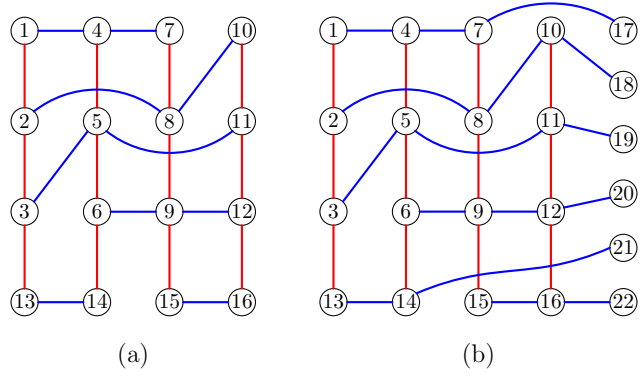

\begin{center}
\includegraphics[height=4.7cm]{eg.3}
\hspace{0.7cm}\includegraphics[height=4.9cm]{eg.4}
\end{center}
\vspace{-0.2cm}\caption{Construction of a mesh of $[n]$, where
$k=12, r=3$ and $n=22$.}\label{fg-subset-2}
\end{figure}

\renewcommand\figurename{Fig}
\begin{figure*}[htbp]
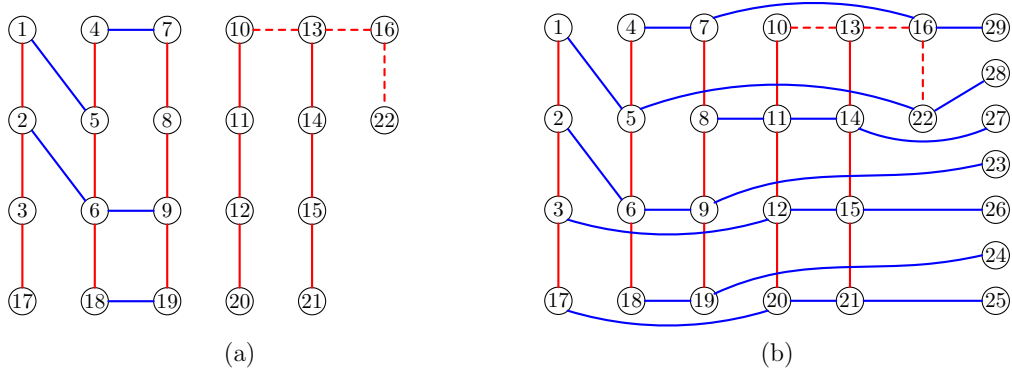

\begin{center}
\includegraphics[height=4.7cm]{eg.5}\vspace{0.5cm}
\hspace{0.8cm}\includegraphics[height=4.9cm]{eg.6}
\end{center}
\vspace{-0.2cm}\caption{Construction of a mesh of $[n]$, where
$k=16, r=3$ and $n=29$.}\label{fg-subset-3}
\end{figure*}

\begin{exam}\label{eg-line-1}
For $k=12$ and $r=3$, we have $m=\lceil\frac{k}{r}\rceil=4$,
$\ell=\left\lceil\frac{2k+\lceil\frac{k}{r}\rceil}{r}\right\rceil
-\left\lceil\frac{k}{r}\right\rceil=6$ and $n=k+m+\ell=22$. Let
$\mathcal R=\{RL_1,\cdots,RL_4\}$ be the red lines and
$\{B_1,\cdots,B_6\}$ be the blue lines in Fig.
\ref{fg-subset-2}(a). Then extend each $B_i$ to a blue line $BL_i$
as in Fig. \ref{fg-subset-2}(b). Let $\mathcal
B=\{BL_1,\cdots,BL_6\}$. We can check that $\mathcal R\cup\mathcal
B$ is a mesh of $[n]$.
\end{exam}

\begin{exam}\label{eg-line-2}
For $k=16$ and $r=3$, we have $m=6$, $\ell=7$ and $n=29$. Let
$\mathcal R=\{RL_1,\cdots,RL_5,RL_6\}$, where $RL_1,\cdots,RL_5$
are the red solid lines in Fig. \ref{fg-subset-3}(a) and $R_6$ is
the red dashed line in Fig. \ref{fg-subset-3}(a). We partition the
first three columns into $B_1=\{2,6,9\}$, $B_2=\{18,19\}$,
$B_3=\{17\}$, $B_4=\{3\}$, $B_5=\{8\}$, $B_6=\{1,5\}$ and
$B_7=\{4,7\}$. In Fig. \ref{fg-subset-3}(a), each $B_i$ of size
$|B_i|\geq 2$ is represented by a blue line and the other points
of the first three columns represent the $B_i$s of size $1$.
Further, we extend each $B_i$ to a blue line $BL_i$ as in Fig.
\ref{fg-subset-3}(b). Let $\mathcal B=\{BL_1,\cdots,BL_6\}$. Then
we can check that $\mathcal R\cup\mathcal B$ is a mesh of $[n]$.
\end{exam}

The following two lemmas and their proofs give some constructions
of mesh of $[n]$.
\begin{lem}\label{line-1}
If $r|k$ and $m\geq r$, there exists a mesh of $[n]$.
\end{lem}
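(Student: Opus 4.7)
The plan is to reduce the construction of a mesh of $[n]$ to an application of Corollary~\ref{subset-eq-c}. Under the assumptions $r\mid k$ and $m\geq r$ we have $m=k/r$, and a direct computation gives
\begin{align*}
\ell=\left\lceil\frac{2k+m}{r}\right\rceil-m=\left\lceil 2m+\frac{m}{r}\right\rceil-m=m+\left\lceil\frac{m}{r}\right\rceil.
\end{align*}
In particular $\ell r\geq mr+m=m(r+1)=k+m$, a numerical inequality that will be needed when choosing the blue blocks.

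First I would construct the red lines in the most natural way: partition $[k]$ into $m$ consecutive blocks of size $r$, namely $D_i=\{(i-1)r+1,\ldots,ir\}$ for $i\in[m]$, and set $RL_i=D_i\cup\{k+i\}$. Then $RL_1,\ldots,RL_m$ are pairwise disjoint $(r+1)$-subsets of $[k+m]$ whose union is $[k+m]$, and each $RL_i$ meets $\{k+1,\ldots,k+m\}$ exactly in $\{k+i\}$, so condition (i) of Definition~\ref{def-mesh} holds.

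Next, to construct the blue lines I would apply Corollary~\ref{subset-eq-c} to the collection $\mathcal L=\{RL_1,\ldots,RL_m\}$ with $N=m$ and $\delta=r+1$. I choose a tuple $(r_1,\ldots,r_\ell)$ of positive integers with $r_i\leq r$ for all $i\in[\ell]$ and $\sum_{i=1}^{\ell}r_i=m(r+1)=k+m$; such a tuple exists because $\ell r\geq k+m$ (for instance, take $\lfloor(k+m)/r\rfloor$ entries equal to $r$ and one residual entry of size $(k+m)\bmod r$, padded with zeros turned into ones by reallocating). Since $r_i\leq r\leq m=N$, the hypotheses of Corollary~\ref{subset-eq-c} are satisfied, and it yields pairwise disjoint subsets $B_1,\ldots,B_\ell$ of $[k+m]$ with $|B_i|=r_i$, $\bigcup_i B_i=[k+m]$, and $|B_i\cap RL_j|\leq 1$ for every $i,j$. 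I then set $BL_i=B_i\cup\{k+m+i\}$ for $i\in[\ell]$.

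Finally I would verify the five conditions. (ii) holds by construction: $BL_i\cap\{k+m+1,\ldots,n\}=\{k+m+i\}$ and $|BL_i|=r_i+1\leq r+1$. For (iii), each element of $[k+m]$ lies in exactly one red line (the red lines partition $[k+m]$) and in exactly one blue line (the $B_i$'s also partition $[k+m]$), so it lies in exactly two lines, one of which is red. Condition (iv) follows because two red lines are disjoint, $|BL_i\cap RL_j|=|B_i\cap RL_j|\leq 1$, and two blue lines are disjoint on $[k+m]$ and have distinct unique representatives in $\{k+m+1,\ldots,n\}$, hence share no point. Condition (v) is automatic from the same observations: any two distinct blue lines are globally disjoint, and no two distinct red lines meet a common red line. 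The only genuine obstacle is the size bookkeeping that $\ell r\geq k+m$ together with $r_i\leq r\leq m$, which reduces the construction to Corollary~\ref{subset-eq-c}; the remaining verifications are routine.
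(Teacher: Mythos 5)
Your proposal is correct and follows essentially the same route as the paper's Appendix~C: partition $[k+m]$ into $m$ disjoint blocks of size $r+1$ to form the red lines, observe $\ell=\lceil(k+m)/r\rceil$ so that $k+m$ can be written as a sum of $\ell$ positive integers each at most $r$, invoke Corollary~\ref{subset-eq-c} with $\delta=r+1$ and $N=m$ (using $m\ge r$ to satisfy $r_i\le N$), and then append the distinct tags $k+m+i$ to the resulting $B_i$ to obtain the blue lines. The verification of Definition~\ref{def-mesh} conditions (i)--(v) via the three observations (red lines partition $[k+m]$, blue lines partition $[n]$, $|B_i\cap RL_j|\le1$) matches the paper's argument.
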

\begin{proof}
The proof is given in Appendix C.
\end{proof}

\begin{lem}\label{line-2}
Suppose $\lambda=r\mod k >0$. If $\ell\geq r+\lambda+1$ and $m\geq
2r-\lambda+1$, then there exists a mesh of $[n]$.
\end{lem}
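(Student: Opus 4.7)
My plan is to construct the mesh explicitly, extending the construction of Lemma~\ref{line-1} to the case where $r$ does not divide $k$. Writing $k = qr + \lambda$ with $0 < \lambda < r$, we have $m = q + 1$ and the hypotheses read $q \geq 2r - \lambda$ and $\ell \geq r + \lambda + 1$. The strategy is first to lay out the red lines so that most information symbols are covered exactly once and a small controlled set is covered twice (to absorb the surplus $rm = k + r - \lambda > k$), and then to complete the mesh with blue lines.

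For the red lines, I would arrange the first $rq = k - \lambda$ information symbols into an $r \times q$ rectangle and let $RL_i$ consist of the $i$-th column together with the red parity $k + i$, for $i \in [q]$. For the final red line $RL_m$ I would take the $\lambda$ leftover information symbols, the red parity $k + m$, and $r - \lambda$ ``shared'' symbols chosen one from each of $r - \lambda$ distinct columns (for instance, the top entry of each of the first $r - \lambda$ columns). Because the shared symbols lie in distinct columns, $|RL_m \cap RL_i| \leq 1$ for every $i$, so conditions (i) and (iv) hold among red lines, the $r - \lambda$ shared symbols sit in exactly two red lines, and every other information symbol in exactly one.

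For the blue lines, the elements still needing a second line are the $k - (r - \lambda)$ unshared information symbols together with the $m$ red parities, a total of $D := k - r + \lambda + m$ non-parity slots to be distributed among $\ell$ blue lines of size at most $r + 1$ (one slot being the blue parity $k + m + j$). A short calculation using $m = q + 1$ and $k = qr + \lambda$ shows $r\ell \geq D$, so the packing is feasible cardinality-wise. I would realize it via a Gale--Ryser-style bipartite degree sequence in the spirit of Corollary~\ref{subset-eq-c}, requiring each blue line to pick at most one element from each column of the rectangle; appending the blue parity $k + m + j$ to the $j$-th group then defines $BL_j$.

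The main obstacle will be verifying condition (v): for every red line $RL^*$, the second lines through its $r + 1$ elements must be pairwise disjoint. For the ordinary columns $RL_i$ with $i > r - \lambda$ every second line is blue, so this reduces to requiring that the $r + 1$ ``second slots'' of $RL_i$ go to $r + 1$ distinct blue lines, feasible since $\ell \geq r + \lambda + 1 \geq r + 1$. For $RL_i$ with $i \leq r - \lambda$ and for $RL_m$ one of the second lines is itself a red line, so in addition the blue lines crossing the $r - \lambda$ shared columns must be pairwise disjoint and must avoid the elements of $RL_m$, and symmetrically the blue lines crossing $RL_m$ must avoid the shared columns. This is exactly where both hypotheses bite: $\ell \geq r + \lambda + 1$ supplies enough distinct blue lines to cover every second slot, while $m \geq 2r - \lambda + 1$ supplies enough columns so that the blue lines touching $RL_m$ can be routed through the $q - (r - \lambda) \geq r + 1$ non-shared columns. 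I plan to perform the bipartite assignment greedily, handling the blue lines through the shared columns and through $RL_m$ first before filling in the rest.
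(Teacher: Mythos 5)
Your overall plan mirrors the paper's: arrange $[k+m]$ in an $(r+1)$-column rectangle, take columns as red lines, form $RL_m$ from the $\lambda$ leftover symbols plus $r-\lambda$ ``shared'' points drawn one-per-column from some other columns, and then use a Gale--Ryser-type partition to build the blue lines. You also correctly identify condition~(\romannumeral 5) as the real obstacle and the roles of the two hypotheses. The paper places the shared points in columns $\alpha+1,\ldots,m-1$ (where $\alpha=m-1-(r-\lambda)$) rather than the first $r-\lambda$ columns, but that is cosmetic.

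Where the proposal falls short is precisely the step you leave as a plan: ``perform the bipartite assignment greedily.'' The paper does not rely on a generic greedy assignment; it applies Gale--Ryser only to the first $\alpha$ columns, with a carefully chosen degree sequence $(r_1,\dots,r_\ell)$ in which the last $\lambda+1$ parts have size $r-1$ and the last $r$ parts have size $\lambda$, and then extends: each of the $\lambda+1$ parts of size $r-1$ receives one element of the $m$-th (partial) column, while each of the $r$ parts of size $\lambda$ receives an \emph{entire row segment} $A_j=\{a_{j,\alpha+1},\dots,a_{j,m-1}\}$ of the shared columns, for $j=2,\dots,r+1$. This whole-row extension is the crucial trick: the $r$ blue lines crossing the shared columns occupy distinct rows, so they are automatically pairwise disjoint and automatically avoid $RL_m$ (which owns only row $1$ of those columns plus the $m$-th column); and the $\lambda+1$ blue lines touching $RL_m$ never enter columns $\alpha+1,\dots,m-1$ at all, since their $B_i$ parts lie in the first $\alpha$ columns. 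Your greedy plan would need to rediscover this structure; as stated, you have not shown that a greedy bipartite filling respecting ``shared-column blue lines avoid $RL_m$'' and ``$RL_m$-blue lines avoid shared columns'' can always be completed without getting stuck. Also a small slip: the hypothesis $m\ge 2r-\lambda+1$ gives $\alpha=q-(r-\lambda)\ge r$, not $\ge r+1$; the paper needs exactly $\alpha\ge r$ so that the Gale--Ryser step with parts of size $\le r$ applies to $\alpha$ columns.
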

\begin{proof}
The proof is given in Appendix D.
\end{proof}

Now, we have the following construction.

\textbf{Construction 2}: Let $\mathcal R\cup\mathcal B$ be a mesh
of $[n]$, where $\mathcal R=\{RL_1, \cdots, RL_{m}\}$ is the set
of red lines and $\mathcal B=\{BL_1,\cdots, BL_{\ell}\}$ is the
set of blue lines. Let $x_1,\cdots,x_k$ be $k$ information
symbols. Then we can construct an $[n=k+m+\ell,k]$ systematic
linear code $\mathcal C$ over $\mathbb F_2$ such that the parities
are $x_{k+1},\cdots,x_n$ and are computed as follows:
\begin{itemize}
 \item For each $i\in[m]$,
 \begin{align}\label{cst-2-1}x_{k+i}=\sum_{j\in
 RL_i\backslash\{k+i\}}x_{j}.\end{align}
 \item For each $i\in[\ell]$, \begin{align}\label{cst-2-2}x_{k+m+i}
 =\sum_{j\in BL_i\backslash\{k+m+i\}}x_{j}.\end{align}
\end{itemize}

Note that by condition (\romannumeral 1) of Definition
\ref{def-mesh}, for each $i\in[m]$, we have $RL_{i}\backslash
\{k+i\}\subseteq[k]$. So by \eqref{cst-2-1}, $x_{k+i}$ is
computable from information symbols. Similarly, for each
$i\in[\ell]$, by condition (\romannumeral 2) of Definition
\ref{def-mesh}, $BL_{i}\backslash\{k+m+i\}\subseteq[k+m]$. So by
\eqref{cst-2-2}, $x_{k+m+i}$ is computable from
$\{x_j;j\in[k+m]\}$. Hence, Construction 2 is reasonable.

\begin{thm}\label{ach-bnd-t-3}
The code $\mathcal C$ obtained by Construction 2 is an
$(n=k+m+\ell,k,r,3)$-ELRC.
\end{thm}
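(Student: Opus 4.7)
The plan is to invoke Lemma \ref{be-3-elrc} with the natural partition $S=[k+m]$ (the information coordinates together with the red-line parities) and $T=\{k+m+1,\ldots,n\}$ (the blue-line parities); the goal will be to verify the two hypotheses of that lemma directly from Construction 2 and the five conditions of Definition \ref{def-mesh}.

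Condition (ii) of Lemma \ref{be-3-elrc} will be immediate. For each $i=k+m+j\in T$, relation \eqref{cst-2-2} exhibits $R=BL_j\setminus\{i\}$ as an $(r,\mathcal C)$-repair set of $i$, and conditions (i) and (ii) of Definition \ref{def-mesh} guarantee both that $R\subseteq[k+m]=S$ and that $|R|\leq r$.

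For condition (i) of Lemma \ref{be-3-elrc}, I would fix $i\in S$ and use condition (iii) of Definition \ref{def-mesh}, which places $i$ on exactly two lines, at least one of them red. I would choose $L_1$ to be a red line through $i$ and let $L_2$ be the other line through $i$, then set $R_1=L_1\setminus\{i\}$ and $R_2=L_2\setminus\{i\}$. Conditions (i) and (ii) bound their sizes by $r$, condition (iv) forces $L_1\cap L_2=\{i\}$ and hence $R_1\cap R_2=\emptyset$, and the binary linear relation from each line supplied by Construction 2 makes both sets valid repair sets of $i$. The nested inner requirement is the subtle part: for each $j\in R_1$, let $L_j$ be the second line through $j$ guaranteed by condition (iii) and put $R=L_j\setminus\{j\}$. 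Now condition (v) does the heavy lifting: because $L_j$ and $L_2$ are two distinct lines both intersecting the red line $L_1$ (at $j$ and at $i$ respectively), they are disjoint. A short check shows $L_j\neq L_2$ (otherwise $j\in L_1\cap L_2=\{i\}$, impossible since $j\in R_1$) and hence $i\notin L_j$ by condition (iii), so $R\cap(R_2\cup\{i\})=\emptyset$ as needed.

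The main obstacle is recognizing that the role of $R_1$ in Lemma \ref{be-3-elrc} must be played by a \emph{red} line, not an arbitrary line through $i$: condition (v) of the mesh is stated in terms of intersecting a common red line, and it is precisely this asymmetry that enforces the desired disjointness. Condition (iii) is therefore exactly what is needed to guarantee at least one red line through every $i\in S$, making the choice of $L_1$ always feasible. Once the roles of $L_1$ and $L_2$ are assigned correctly, Lemma \ref{be-3-elrc} will deliver the theorem.
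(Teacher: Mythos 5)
Your proposal is correct and follows essentially the same approach as the paper's proof: partition $[n]$ into $S=[k+m]$ and $T$, verify the hypotheses of Lemma \ref{be-3-elrc} by choosing a red line $L_1$ and the second line $L_2$ through each $i\in S$, and for $j\in R_1$ use the second line through $j$ together with condition (v) of Definition \ref{def-mesh} to obtain the required disjointness. The observation you emphasize — that $R_1$ must come from a red line because condition (v) is stated relative to red lines — is exactly the point the paper relies on as well.
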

\begin{proof}
Let $S=[k+m]$ and $T=\{k+m+1,\cdots,n\}$. Then $S\cap
T=\emptyset$.

For each $i\in S$, by conditions (\romannumeral 3) and
(\romannumeral 4) of Definition \ref{def-mesh}, there exists a red
line $L\in\mathcal R$ and a line $L'\in\mathcal R\cup\mathcal B$
such that $L\cap L'=\{i\}$. By conditions (\romannumeral 1),
(\romannumeral 2) of Definition \ref{def-mesh},
$|L\backslash\{i\}|=r$ and $|L'\backslash\{i\}|\leq r$. So by
\eqref{cst-2-1} and \eqref{cst-2-2}, $R_1=L\backslash\{i\}$ and
$R_2=L'\backslash\{i\}$ are two disjoint $(r,\mathcal C)$-repair
sets of $i$. Moreover, for each $j\in L\backslash\{i\}$, by
condition (\romannumeral 1) of Definition \ref{def-mesh}, $j\in
L\subseteq[k+m]$. Then by condition (\romannumeral 3) of
Definition \ref{def-mesh}, there exists an $L''\in\mathcal
R\cup\mathcal B$ such that $L''\neq L$ and $j\in L''$. Clearly,
$L''\neq L'$. $($Otherwise, $\{i,j\}\subseteq L\cap L'= L\cap
L''$, which contradicts to condition (\romannumeral 4) of
Definition \ref{def-mesh}.$)$ So by condition (\romannumeral 5) of
Definition \ref{def-mesh}, $L''\cap L'=\emptyset$. Let
$R=L''\backslash\{j\}$. Then $R\cap(R_2\cup\{i\})\subseteq L''\cap
L'=\emptyset$ and by \eqref{cst-2-1}, \eqref{cst-2-2}, $R$ is an
$(r,\mathcal C)$-repair set of $j$.

For each $i\in T$, let $i'=i-(k+m)$. Then $i'\in[\ell]$. Let
$R=BL_{i'}\backslash\{i\}$. Then by condition (\romannumeral 2) of
Definition \ref{def-mesh} and by \eqref{cst-2-2},
$R\subseteq[k+m]=S$ is an $(r,\mathcal C)$-repair sets of $i$.

By Lemma \ref{be-3-elrc}, $\mathcal C$ is an $(n,k,r,3)$-ELRC.
\end{proof}

Note that the code $\mathcal C$ obtained by Construction 2 has
length
$n=k+m+\ell=k+\left\lceil\frac{2k+\lceil\frac{k}{r}\rceil}{r}\right\rceil$,
which meets the bound \eqref{eq-bnd-t-3}. So the following theorem
is a direct consequence of Lemma \ref{line-1}, \ref{line-2} and
Theorem \ref{ach-bnd-t-3}.
\begin{thm}\label{ext-bnd-t-3}
Suppose one of the following conditions hold:
\begin{itemize}
 \item [(\romannumeral 1)] $r|k$ and $m\geq r$.
 \item [(\romannumeral 2)] $\ell \geq r+\lambda+1$ and $m\geq 2r-\lambda+1$,
  where $\lambda = r\mod k >0$.
\end{itemize}
Then there exist $(n,k,r,3)$-ELRC over the binary field that meet
the bound \eqref{eq-bnd-t-3}.
\end{thm}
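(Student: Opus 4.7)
The plan is to assemble Theorem \ref{ext-bnd-t-3} from three ingredients that have already been proved earlier in this section: the mesh-existence lemmas (Lemma \ref{line-1} and Lemma \ref{line-2}), the explicit code recipe (Construction 2), and the correctness statement of that construction (Theorem \ref{ach-bnd-t-3}). The statement is essentially a ``glue'' corollary, so the proof should be short.

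First, under hypothesis (\romannumeral 1), I would invoke Lemma \ref{line-1} to obtain a mesh $\mathcal R \cup \mathcal B$ of $[n]$; under hypothesis (\romannumeral 2), I would invoke Lemma \ref{line-2} instead. In either case the output is a mesh of $[n]$ with $n=k+m+\ell$, where $m=\lceil k/r\rceil$ and $\ell=\lceil (2k+m)/r\rceil - m$.

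Second, I would feed this mesh into Construction 2 to obtain a binary $[n,k]$ systematic linear code $\mathcal C$ whose parity symbols are defined by \eqref{cst-2-1} and \eqref{cst-2-2}. By Theorem \ref{ach-bnd-t-3}, the resulting code $\mathcal C$ is an $(n,k,r,3)$-ELRC. Finally, observe that
\begin{equation*}
n=k+m+\ell=k+\left\lceil\frac{2k+\lceil k/r\rceil}{r}\right\rceil,
\end{equation*}
so the constructed ELRC exactly meets the lower bound \eqref{eq-bnd-t-3}, which is what we needed to show.

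There is no real obstacle at this stage: all substantive work has been done in Lemmas \ref{line-1}, \ref{line-2} (combinatorial existence of meshes) and in Theorem \ref{ach-bnd-t-3} (verification that the mesh-based code is an ELRC via Lemma \ref{be-3-elrc}). The only minor care needed is to make the case split on $\lambda=r\bmod k$ match the correct lemma, and to confirm that $\mathcal C$ is defined over $\mathbb F_2$ — which is immediate since the parity equations in \eqref{cst-2-1}, \eqref{cst-2-2} are sums of information/parity symbols, hence realizable over the binary field.
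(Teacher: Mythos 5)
Your proposal is correct and is essentially the same ``glue'' argument the paper uses: the paper introduces Theorem \ref{ext-bnd-t-3} with the remark that it follows directly from Lemmas \ref{line-1}, \ref{line-2} and Theorem \ref{ach-bnd-t-3}, which is precisely the three-step assembly you describe (obtain a mesh from the appropriate lemma depending on whether $r\mid k$, feed it into Construction~2, then invoke Theorem \ref{ach-bnd-t-3} and the length identity $n=k+m+\ell$). No gaps; your observation that $\mathcal C$ is over $\mathbb F_2$ by construction is also implicit in the paper.
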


Binary codes with all-symbol locality $r$, availability $t$ and
code rate $\frac{r}{r+t}$ are constructed in \cite{Wang15} for any
positive integers $r$ and $t$ (such codes are a subclass of
$(n,k,r,t)$-ELRC). For $t=3$, the code length is
$n=k\frac{r+3}{r}=k+\frac{3k}{r}>
k+\left\lceil\frac{2k+\lceil\frac{k}{r}\rceil}{r}\right\rceil$.
Hence is not optimal according to the bound \eqref{eq-bnd-t-3}.

\section{Conclusions}
We investigate the problem of coding for distributed storage
system that can locally repair up to $t$ failed nodes, where $t$
is a given positive integer. Given the code dimension $k$, the
repair locality $r$ and $t\in\{2,3\}$, we derive a lower bound on
the code length $n$ under the functional repair model. We also
give some constructions of exact LRCs for $t\in\{2,3\}$ with
binary field and whose length $n$ achieves the corresponding
bounds, which proves the tightness of our bounds and also implies
that there is no gap between the optimal code length of functional
LRCs and exact LRCs for certain sets of parameters.

Some problems are still open. For example, what is the optimal
code length for $t\geq 4$? Given $n,k,r$ and $t$, what is the
upper bound of the minimum distance $d$? Another interesting
problem is to construct functional locally repairable codes
$\{\mathcal C_\lambda; \lambda\in\Lambda\}$ with small size of
$\Lambda$.

\appendices

\section{Proof of Claim 1}
To prove Claim 1, the key is to prove the following two
statements: a) For each $v\in B\cup C_1$, $|\mathcal
E_{\text{blue}}(v)|\geq 1$; b) Each blue edge belongs to at most
$r$ different $v\in B\cup C_1$.

For each $v\in B$, by \eqref{divide-b}, $|\text{Out}(v)|=2$. So we
can assume $\text{Out}(v)=\{v_1,v_2\}$. Then $v_1, v_2$ are two
inner nodes of $G_{\lambda_0}$. By 1) of Corollary
\ref{mrg-cor-2}, $\text{Out}(v_1)\neq\emptyset$ or
$\text{Out}(v_2)\neq\emptyset$. Without loss of generality, we can
assume $\text{Out}(v_1)\neq\emptyset$ and $v_3\in\text{Out}(v_1)$.
Then we have the following two cases:

Case 1: $(v_1,v_3)$ is not a green edge. Since $v_1$ is an inner
node, then $(v_1,v_3)$ is not a red edge. Note that $v\in B$ and
$v_1\in\text{Out}(v)$. Then $(v_1,v_3)$ is a blue edge belonging
to $v$.

Case 2: $(v_1,v_3)$ is a green edge. Then $\{v_1\}=\text{Out}(u)$
for some $u\in C_1\cup C_2$. By 2) of Corollary \ref{mrg-cor-2},
$\text{Out}(v_2)\neq\emptyset$. Let $v_4\in\text{Out}(v_2)$. Since
$v_2$ is an inner node, then $(v_2,v_4)$ is not a red edge. Not
that by 3) of Corollary \ref{mrg-cor-2}, $|\text{Out}(w)|\geq 2$
for any source $w\in\text{In}(v_2)$. (As illustrated in Fig.
\ref{fg-clm-1}(a).) Then $(C_1\cup
C_2)\cap\text{In}(v_2)=\emptyset$, which implies that
$v_2\notin\text{Out}(m)$ for any $m\in C_1\cup C_2$. So
$(v_2,v_4)$ is not a green edge. Since $v\in B$ and
$v_2\in\text{Out}(v)$, then $(v_2,v_4)$ is a blue edge belonging
to $v$.

\renewcommand\figurename{Fig}
\begin{figure}[htbp]
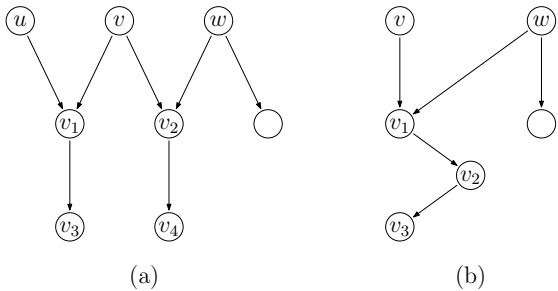

\begin{center}
\includegraphics[height=3.8cm]{fg1.2}
\hspace{1.2cm}\includegraphics[height=3.8cm]{fg1.1}
\end{center}
\vspace{-0.2cm}\caption{Illustration of the local graph in the
proof of Claim 1.}\label{fg-clm-1}
\end{figure}

In both cases, we can find a blue edge belonging to $v$.

For each $v\in C_1$, by \eqref{divide-c1},
$|\text{Out}(v)|=|\text{Out}^2(v)|=1$. We can assume
$\text{Out}(v)=\{v_1\}$ and $\text{Out}^2(v)=\{v_2\}$. Then $v_1,
v_2$ are two inner nodes. By 2) of Corollary \ref{mrg-cor-1}, we
have $\text{Out}^2(v)=\text{Out}(v_1)=\{v_2\}$. Further, by 3) of
Corollary \ref{mrg-cor-1}, we have $\text{Out}(v_2)\neq\emptyset$.
Let $v_3\in\text{Out}(v_2)$. Since $v_2$ is an inner node, the
edge $(v_2,v_3)$ is not a red edge. Not that by 4) of Corollary
\ref{mrg-cor-1}, $|\text{Out}(u)|\geq 2$ for any source
$u\in\text{In}(v_2)$. (As illustrated in Fig. \ref{fg-clm-1}(b).)
Then we have $(C_1\cup C_2)\cap\text{In}(v_2)=\emptyset$, which
implies that $v_2\notin\text{Out}(u)$ for any $u\in C_1\cup C_2$.
So $(v_2,v_3)$ is not a green edge. Note that $v\in C_1$ and
$\text{Out}^2(v)=\text{Out}(v_1)=\{v_2\}$. So $(v_2,v_3)$ is a
blue edge belonging to $v$.

By the above discussion, we proved that $|\mathcal
E_{\text{blue}}(v)|\geq 1$ for each $v\in B\cup C_1$, which proves
the statement a).

Let $(u',u'')$ be a blue edge and $S$ be the set of all $v\in
B\cup C_1$ such that $(u',u'')$ belongs to $v$. 
For each $v\in S$, we pick a $\varphi(v)\in\text{In}(u')$
depending on the following two cases:

Case 1: $v\in B$. Since $(u',u'')$ is a blue edge belongs to $v$,
then $u'\in\text{Out}(v)$, which implies $v\in\text{In}(u')$. Pick
$\varphi(v)=v$.

Case 2: $v\in C_1$. By \eqref{divide-c1},
$|\text{Out}^2(v)|=|\text{Out}(v)|=1$. Denote
$\text{Out}(v)=\{v'\}$. Then by 2) of Corollary \ref{mrg-cor-1},
$\text{Out}^2(v)=\text{Out}(v')$. Moreover, since $(u',u'')$ is a
blue edge belongs to $v$, then
$u'\in\text{Out}^2(v)=\text{Out}(v')$. So $v'\in\text{In}(u')$.
Pick $\varphi(v)=v'$.

If $v$ and $w$ are two different sources in $S\cap C_1$, by 5) of
Corollary \ref{mrg-cor-1}, their out-neighbors are different. So
$\varphi(v)\neq\varphi(w)$. Thus, $\varphi$ is a one-to-one
correspondence between $S$ and a subset of $\text{In}(u')$. Note
that $|\text{In}(u')|\leq r$. So $|S|\leq|\text{In}(u')|\leq r$.
Thus, $(u',u'')$ belongs to at most $r$ different $v\in B\cup
C_1$, which proves the statement b).

By statements a) and b), we have $|\mathcal
E_{\text{blue}}|\geq\frac{|B|+|C_1|}{r}$, which proves Claim 1.

\section{Proof of Lemma \ref{subset-1}}
We need to consider two cases, i.e., $r|k$ and $r\nmid k$.

Case 1: $r|k$. We can let $k=mr$. Then
$\eta=\left\lceil\frac{2k}{r}\right\rceil=2m$ and
$m=\left\lfloor\frac{k}{r}\right\rfloor$. By assumption of Lemma
\ref{subset-1}, $m=\left\lfloor\frac{k}{r}\right\rfloor\geq r$. We
assign the elements of $[k]$ in a $r\times m$ array
$D=(a_{i,j})_{i\in[r],j\in[m]}$ as in Fig. \ref{fg-subset-1-1}
such that $[k]=\{a_{i,j}; i\in[r],j\in[m]\}.$ For each $j\in[m]$,
let $A_j=\{a_{i,j};i\in[r]\}$. Then $|A_i|=r, \forall i\in[m]$. In
Fig. \ref{fg-subset-1-1}, each subset $A_i$ is represented by a
red line.

\renewcommand\figurename{Fig}
\begin{figure}[htbp]
\begin{center}
\includegraphics[height=2.5cm]{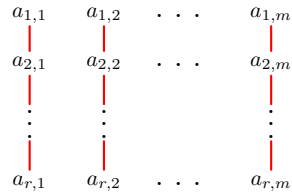}
\end{center}
\vspace{-0.2cm}\caption{Partition of $[n]$: Each subset is
represented by a red line.}\label{fg-subset-1-1}
\end{figure}

Let $\delta=r$ and $\mathcal L=\{A_1,\cdots,A_m\}$. Then
$|A_i|=\delta$ for each $i\in[m]$. Let $r_i=r, \forall i\in[m]$.
Then $\sum_{i=1}^mr_i=mr=\sum_{j=1}^m|A_j|$. Since $m\geq r=r_i,
\forall i\in[m]$, then by Corollary \ref{subset-eq-c}, there
exists a collection $\{B_1,\cdots,B_{m}\}$ of subsets of
$\bigcup_{j=1}^{m}A_j=[k]$ that satisfies the following three
properties:
\begin{itemize}
 \item $B_1,\cdots,B_{m}$ are pairwise disjoint and
 $\bigcup_{i=1}^{m}B_i=\bigcup_{j=1}^{m}A_j=[k]$;
 \vspace{0.05cm}\item $|B_i|=r_i=r$ for all $i\in[m]$;
 \vspace{0.05cm}\item $|B_i\cap A_j|\leq 1$ for all $i,j\in[m]$.
\end{itemize}
For each $i\in[m]$, let $A_{m+i}=B_{i}$. Then it is easy to check
that $\mathcal A=\{A_1,\cdots,A_{\eta}\}$ satisfies conditions
(\romannumeral 1)$-$(\romannumeral 3) of Lemma \ref{subset-1},
where $\eta=\left\lceil\frac{2k}{r}\right\rceil=2m$.

Case 2: $r\nmid k$. Let $m=\lceil\frac{k}{r}\rceil$. Since $r\nmid
k$, then $m-1=\left\lfloor\frac{k}{r}\right\rfloor$ and
$k=(m-1)r+\lambda$, where $0<\lambda<r$. By assumption of Lemma
\ref{subset-1}, we have
$$m-1=\left\lfloor\frac{k}{r}\right\rfloor\geq r.$$
Let $\alpha=m-1-(r-\lambda)$. We can assign elements of $[k]$ in
an $r\times m$ array $D=(a_{i,j})_{i\in[r],j\in[m]}$ as in Fig.
\ref{fg-subset-1-2} such that $\{a_{i,j};
i\in[r],j\in[m-1]\}\cup\{a_{i,m}; i\in[\lambda]\}=[k]$ and
$a_{i,m}=0, ~ \forall i\in\{\lambda+1,\cdots,r\}.$ Let
$$A_0=\{a_{1,j}; j\in\{\alpha+1,\cdots,m-1\}\}.$$ Then
$|A_0|=(m-1)-\alpha=r-\lambda$. Let
\begin{equation*}
A_j=\left\{\begin{aligned}
&\{a_{i,j};i\in[r]\}, {\large ~ ~ ~ ~} ~ ~ ~ ~ \text{if}~j\in[m-1];\\
&\{a_{i,m}; i\in[\lambda]\}\cup A_0, ~ \text{if}~j=m.\\
\end{aligned} \right. \label{eqn:2}
\end{equation*} In
Fig. \ref{fg-subset-1-2}, each subset in $\{A_1,\cdots,A_{m-1}\}$
is represented by a red solid line and $A_m$ is represented by a
red dashed line. For convenience, we call each subset in
$\{A_1,\cdots,A_m\}$ a red line. Clearly, $|A_{j}|=r$ and
$|A_j\cap A_{j'}|\leq 1$ for all $j\neq j'\in[m]$.

\renewcommand\figurename{Fig}
\begin{figure}[htbp]
\begin{center}
\includegraphics[height=4.2cm]{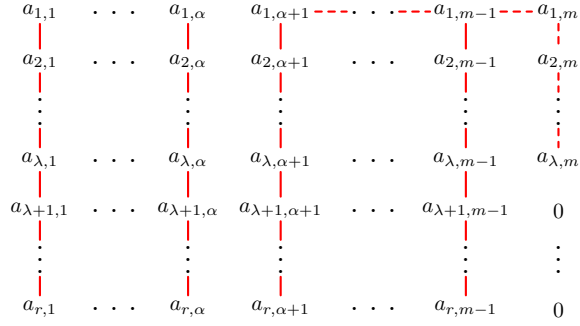}
\end{center}
\vspace{-0.2cm}\caption{Construction of subsets: Each of the first
$m-1$ subsets is represented by a red solid line and the $m$th
subset is represented by a red dashed line.}\label{fg-subset-1-2}
\end{figure}

For each $j\in[m]$, let $C_j=A_j\backslash A_0$. Then
$C_1,\cdots,C_{m}$ are pairwise disjoint and
$\bigcup_{j=1}^{m}C_j=[k]\backslash A_0$. So
$|\bigcup_{j=1}^{m}C_j|=|[k]\backslash A_0|=k-r+\lambda$.
Moreover, we have
\begin{equation*}
|C_j|=\left\{\begin{aligned}
&r, ~ ~ ~ ~ ~ ~ ~ \text{if}~j\in[\alpha];\\
&r-1, ~ ~ \text{if}~j\in\{\alpha+1,\cdots,m-1\};\\
&\lambda, ~ ~ ~ ~ ~ ~ ~ \text{if}~j=m.\\
\end{aligned} \right. \label{eqn:2}
\end{equation*}
Let $\rho=\left\lceil\frac{k-r+\lambda}{r}\right\rceil$. Then
$k-r+\lambda$ can be represented as the sum of $\rho$ positive
integers (not necessarily different) $r_1,\cdots,r_\rho$ such that
$r_i\leq r, \forall i\in[\rho]$. Since $m-1\geq r$, using the
Gale-Ryser Theorem, we can construct an $m\times\rho$ binary
matrix $M$ such that for each $i\in[\rho]$ and each $j\in[m]$, the
sum of the $i$th row is $r_i$ and the sum of the $j$th column is
$|C_j|$. Let $\mathcal L=\{C_1,\cdots,C_{m}\}$. By Lemma
\ref{M-2-S}, there exists a collection $\{B_1,\cdots,B_{\rho}\}$
of subsets of $\bigcup_{j=1}^{m}C_j=[k]\backslash A_0$ such that
\begin{itemize}
 \item $B_1,\cdots,B_{\rho}$ are pairwise disjoint and
 $\bigcup_{i=1}^{\rho}B_i=\bigcup_{j=1}^{m}C_j=[k]\backslash A_0$;
 \vspace{0.2cm}\item $|B_i|=r_i$ for all $i\in[\rho]$;
 \vspace{0.1cm}\item $|B_i\cap C_j|\leq 1$ for all $i\in[\rho]$ and $j\in[m]$.
\end{itemize}
Now, for each $i\in[\rho]$, let $A_{m+i}=B_i$. Note that
$k=(m-1)r+\lambda$ and
$\rho=\left\lceil\frac{k-r+\lambda}{r}\right\rceil$. Then
$m+\rho=m+\left\lceil\frac{k-r+\lambda}{r}\right\rceil=
\left\lceil\frac{mr+k-r+\lambda}{r}\right\rceil
=\left\lceil\frac{2k}{r}\right\rceil=\eta$. Thus, we obtain a
collection $\mathcal A=\{A_1,\cdots,A_{\eta}\}$ of $\eta$ subsets
of $[k]$. For convenience, we call each subset in
$\{A_{m+1},\cdots,A_\eta\}$ a blue line.

By the construction, we have $|A_i|\leq r$ for each $i\in[\eta]$.
So condition (\romannumeral 1) of Lemma \ref{subset-1} is
satisfied.

Again by the construction, we have the following observations: 1)
Each $i\in A_0$ belongs to exactly two red lines and each
$i\in[k]\backslash A_0$ belongs to one red line and one blue line;
2) Any two different red lines has at most one point (element) in
common; 3) Any two different blue lines have no point (element) in
common; 4) A red line and a blue line have at most one point
(element) in common.

Observation 1) implies that each $i\in[k]$ belongs to exactly two
subsets in $\mathcal A$. So condition (\romannumeral 3) of Lemma
\ref{subset-1} is satisfied. Moreover, observations 2)$-$4) imply
that any two different lines have at most one point (element) in
common. So condition (\romannumeral 2) of Lemma \ref{subset-1} is
satisfied.

Thus, we can always construct a collection of
$\eta=\left\lceil\frac{2k}{r}\right\rceil$ subsets of $[k]$ that
satisfies conditions (\romannumeral 1)$-$(\romannumeral 3) of
Lemma \ref{subset-1}.

\section{Proof of Lemma \ref{line-1}}
We will construct a set $\mathcal R=\{RL_1, \cdots, RL_{m}\}$ of
red lines and a set $\mathcal B=\{BL_1,\cdots, BL_{\ell}\}$ of
blue lines and prove that $\mathcal R\cup\mathcal B$ is a mesh of
$[n]$.

Since $m=\left\lceil\frac{k}{r}\right\rceil$ and by assumption of
Lemma \ref{line-1}, $r|k$, then $k=mr$ and $k+m=(r+1)m$. We can
assign the elements of $[k+m]$ in an $(r+1)\times m$ array
$D=(a_{i,j})_{i\in[r+1],j\in[m]}$ as in Fig. \ref{fg-line-1} such
that $[k]=\{a_{i,j};i\in[r],j\in[m]\}$ and $a_{r+1,j}=k+j,
~\forall j\in[m]$. For each $j\in[m]$, we let $RL_j=\{a_{i,j};
i\in[r+1]\}$. In Fig. \ref{fg-line-1}, each subset in
$\{RL_1,\cdots,RL_{m}\}$ is represented by a red solid line.

\renewcommand\figurename{Fig}
\begin{figure}[htbp]
\begin{center}
\includegraphics[height=3.0cm]{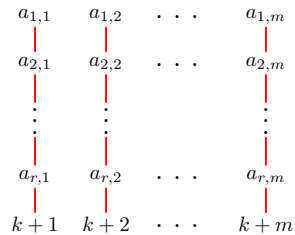}
\end{center}
\vspace{-0.2cm}\caption{Construction of red lines: Each red line
is a column of the array.}\label{fg-line-1}
\end{figure}

Since $k=mr$, then $\ell=\left\lceil\frac{2k+m}{r}\right\rceil-m=
\left\lceil\frac{k+m}{r}\right\rceil$. Hence, $k+m$ can be
represented as the sum of $\ell$ positive integers
$r_1,\cdots,r_{\ell}$ such that $r_i\leq r$ for each $i\in[\ell]$.
Let $\mathcal L=\{RL_1,\cdots,RL_m\}$ and $\delta=r+1$. Note that
by assumption of Lemma \ref{line-1}, $m\geq r$. So we have
$r_i\leq r\leq m$ for each $i\in[\ell]$. By Corollary
\ref{subset-eq-c}, there exists a collection
$\{B_1,\cdots,B_{\ell}\}$ of subsets of $\bigcup_{j=1}^{m}RL_j$
that satisfies the following properties:
\begin{itemize}
 \item $B_1,\cdots,B_{\ell}$ are pairwise disjoint and
 $\bigcup_{i=1}^{\ell}B_i=\bigcup_{j=1}^{m}RL_j=[k+m]$;
 \vspace{0.05cm}\item $|B_i|=r_i$ for all $i\in[\ell]$;
 \vspace{0.05cm}\item $|B_i\cap RL_j|\leq 1$ for all
 $i\in[\ell]$ and $j\in[m]$.
\end{itemize}
For each $i\in[\ell]$, let $BL_{i}=B_{i}\cup\{k+m+i\}$, and let
$\mathcal B=\{BL_1,\cdots, BL_{\ell}\}$.

By the construction, it is easy to check that conditions
(\romannumeral 1), (\romannumeral 2), (\romannumeral 4) of
Definition \ref{def-mesh} are satisfied.

By the construction, we also have the following observations: 1)
$\mathcal R$ is a partition of $[k+m]$; 2) $\mathcal B$ is a
partition of $[n]$; 3) $|BL_i\cap RL_j|\leq 1$ for all
$i\in[\ell]$ and $j\in[m]$.

By the above observations, we can easily check that conditions
(\romannumeral 3), (\romannumeral 5) of Definition \ref{def-mesh}
are satisfied.

So $\mathcal R\cup\mathcal B$ is a mesh of $[n]$.

\section{Proof of Lemma \ref{line-2}}
We will construct a set $\mathcal R=\{RL_1, \cdots, RL_{m}\}$ of
red lines and a set $\mathcal B=\{BL_1,\cdots, BL_{\ell}\}$ of
blue lines and prove that $\mathcal R\cup\mathcal B$ is a mesh of
$[n]$.

Since $m=\left\lceil\frac{k}{r}\right\rceil$ and $\lambda=r\mod k
>0$, then
\begin{align}\label{line-2-eq-k}k=(m-1)r+\lambda.\end{align}
Hence, $k+m=(m-1)r+\lambda+m=(m-1)(r+1)+(\lambda+1)$. We can
assign the elements of $[k+m]$ in an $(r+1)\times m$ array
$D=(a_{i,j})_{i\in[r+1],j\in[m+1]}$ as in Fig. \ref{fg-line-2}
such that
$[k+m]=\{a_{i,j};i\in[r+1],j\in[m-1]\}\cup\{a_{i,m};i\in[\lambda+1]\}$
and $a_{i,m+1}=0$ for $i\in\{\lambda+2,\cdots,r+1\}$. Moreover, by
proper permutation (if necessary), we can let $a_{r+1,j}=k+j$ for
each $j\in[m-1]$ and $a_{\lambda+1,m}=k+m.$ We can construct
$\mathcal R=\{RL_1, \cdots, RL_{m}\}$ and $\mathcal
B=\{BL_1,\cdots, BL_{m+\lambda}\}$ by the following three steps.

\renewcommand\figurename{Fig}
\begin{figure}[htbp]
\begin{center}
\includegraphics[height=5.3cm]{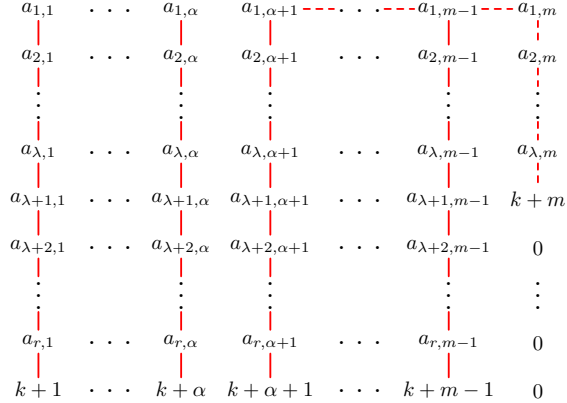}
\end{center}
\vspace{-0.2cm}\caption{Construction of red lines of $[n]$: The
first $m-1$ red lines are the first $m-1$ columns of the array and
the last red line is depicted by a dashed red line, where
$\alpha=m-1-(r-\lambda)$.}\label{fg-line-2}
\end{figure}

\textbf{Step 1}: Construct $\mathcal R=\{RL_1,\cdots,RL_m\}$.

Denote
\begin{align}\label{alf-def}
\alpha=m-1-(r-\lambda)
\end{align}
and for each $i\in[r+1]$, let
$$A_i=\{a_{i,j}; j\in\{\alpha+1,\cdots,m-1\}\}.$$ Then we have
$|A_i|=m-1-\alpha=r-\lambda, \forall i\in[r+1]$.

For each $j\in[m]$, let
\begin{equation*} RL_j=\left\{\begin{aligned}
&\{a_{i,j};i\in[r+1]\}, {\large ~ ~ ~ ~} ~ ~ ~ ~ ~ \text{if}~j\in[m-1];\\
&\{a_{i,m};i\in[\lambda+1]\}\cup A_1, ~ ~ \text{if}~j=m+1.\\
\end{aligned} \right. \label{eqn:2}
\end{equation*}
In Fig. \ref{fg-line-2}, each subset in $\{RL_1,\cdots,RL_{m-1}\}$
is represented by a red solid line and $RL_m$ is represented by a
red dashed line. Clearly, $|RL_i|=r+1$ for all $i\in[m-1]$.
Moreover, by the construction,
$|RL_m|=|A_1|+\lambda+1=(r-\lambda)+(\lambda+1)=r+1$. So we have
$|RL_i|=r+1$ for all $i\in[m]$.

\textbf{Step 2}: Partition $\bigcup_{i=1}^\alpha RL_i$.

By assumption of this lemma, $m\geq 2r-\lambda+1$, which implies
that $m-1-(r-\lambda)\geq r$. So by \eqref{alf-def}, we have
$$\alpha=m-1-(r-\lambda)\geq r.$$ Let
\begin{align}\label{bet-def}
\beta=\alpha(r+1)-(\lambda+1)(r-1)-r\lambda
\end{align}
and
\begin{align}\label{h-def}
h=\ell-(\lambda+1)-r.\end{align} By assumption of this lemma,
$\ell\geq\lambda+1+r$. So we have $h\geq 0$. Moreover, note that
\begin{align*}
\left\lceil\frac{\beta}{r}\right\rceil
&=\left\lceil\frac{\alpha(r+1)-r\lambda-(\lambda+1)(r-1)}{r}\right\rceil\\
&=\left\lceil\frac{(m-1-r+\lambda)(r+1)-r\lambda-(\lambda+1)(r-1)}{r}\right\rceil\\
&=\left\lceil\frac{2[(m-1)r+\lambda]+m}{r}-m-(\lambda+1)-r\right\rceil\\
&=\left\lceil\frac{2k+m}{r}\right\rceil-m-(\lambda+1)-r\\
&=\ell-(\lambda+1)-r\\&=h.\end{align*} So $\beta$ can be
represented as the sum of $h$ positive integers, say
$r_1,\cdots,r_h$, such that $r_i\leq r, \forall i\in[h]$.
Moreover, we let
\begin{equation*}
r_i=\left\{\begin{aligned}
&r-1, ~ ~ \text{if}~i\in\{h+1,\cdots,h+\lambda+1\};\\
&\lambda, ~ ~ ~ ~ ~ ~ ~ \text{if}~i\in\{h+\lambda+2,\cdots,\ell\}.\\
\end{aligned} \right. \label{eqn:2}
\end{equation*}
Then by \eqref{bet-def} and \eqref{h-def}, we have
\begin{align*}\sum_{i=1}^{\ell}r_i&=\sum_{i=1}^{h}r_i
+\sum_{i=h+1}^{h+\lambda+1}r_i+\sum_{i=h+\lambda+2}^{\ell}r_i\\
&=\beta+(\lambda+1)(r-1)+(\ell-h-\lambda-1)\lambda\\
&=\beta+(\lambda+1)(r-1)+r\lambda\\&=\alpha(r+1)\\&=\left|\bigcup_{i=1}^\alpha
RL_i\right|.\end{align*}

Let $\mathcal L=\{RL_{1},\cdots,RL_{\alpha}\}$ and $\delta=r+1$.
Note that $r_i\leq r\leq\alpha=|\mathcal L|, \forall i\in[\ell]$.
Then by Corollary \ref{subset-eq-c}, there exists a collection
$\{B_1,\cdots,B_{\ell}\}$ of subsets of
$\bigcup_{i=1}^{\alpha}RL_i$ that satisfies the following three
properties:
\begin{itemize}
 \item $B_1,\cdots,B_{\ell}$ are pairwise disjoint
 and $\bigcup_{i=1}^{\ell}B_i=\bigcup_{i=1}^{\alpha}RL_i$;
 \item $|B_i|=r_i$ for all $i\in[\ell]$;
 \item $|B_i\cap RL_j|\leq 1$ for all $i\in[\ell]$ and
 $j\in[\alpha]$.
\end{itemize}

\textbf{Step 3}: For each $i\in[\ell]$, extend $B_i$ to $BL_i$.


For each $i\in[h]$, let $$BL_i=B_i\cup\{k+m+i\};$$ For each
$i\in\{h+1,\cdots,h+\lambda+1\}$, let
$$BL_i=B_i\cup\{a_{i-h,m+1}, k+m+i\};$$ For each
$i\in\{h+\lambda+2,\cdots,\ell\}$, let
$$BL_i=B_i\cup A_{i-h-\lambda}\cup\{k+m+i\}.$$ Note that by
\eqref{h-def}, we have $\ell-h-\lambda=r+1$. So for each
$i\in\{h+\lambda+2,\cdots,\ell\}$, we have
$i-h-\lambda\in\{2,\cdots,r+1\}$. Hence, $BL_i$ is reasonably
constructed and $A_1\cap BL_i=\emptyset$.

By the construction, it is easy to see that conditions
(\romannumeral 1), (\romannumeral 2) of Definition \ref{def-mesh}
are satisfied. Moreover, we can see that each point in $A_1$
belongs to two red lines and each point in $[k+m]\backslash A_1$
belongs to a red line and a blue line. So condition (\romannumeral
3) of Definition \ref{def-mesh} is satisfied.

By the construction, we also have the following observations: 1)
$|RL_m\cap RL_i|=0$ for $i\in[\alpha]$; 2) $|RL_m\cap RL_j|=1$ for
$j\in\{\alpha+1,\cdots,m-1\}$; 3) If $i,j\in[m-1]$ and $i\neq j$,
then $RL_i$ and $RL_j$ have no point in common; 4) A red line and
a blue line have at most one point in common; 5) Two different
blue lines have no point in common; 6) If a blue line intersects
with $RL_m$, then it does not intersect with $RL_i$ for all
$i\in\{\alpha+1,\cdots,m-1\}$.

Note that observations 1)$-$3) imply that any two different red
lines have at most one point in common. Hence observations 1)$-$5)
imply that condition (\romannumeral 4) of Definition
\ref{def-mesh} is satisfied. Now suppose that two lines, say $L_1$
and $L_2$, intersect with $RL_i$ for some $i\in[m]$. We have the
following three cases:

Case 1: $i\in[\alpha]$. Then by observations 1) and 3), $L_1$ and
$L_2$ are two different blue lines. So by observation 5), $L_1$
and $L_2$ have no point in common.

Case 2: $i\in\{\alpha+1,\cdots,m-1\}$. Then by observations 2) and
3), we have the following two subcases.

Case 2.1: $L_1$ is $RL_m$ and $L_2$ is a blue line. By observation
6), $L_1$ and $L_2$ have no point in common.

Case 2.2: $L_1$ and $L_2$ are two different blue lines. Then by
observation 5), $L_1$ and $L_2$ have no point in common.

Case 3: $i=m$. Then by observations 1) and 2), we have the
following three subcases.

Case 3.1: $L_1$ is $RL_i$ for some $i\in\{\alpha+1,\cdots,m-1\}$
and $L_2$ is a blue line. By observation 6), $L_1$ and $L_2$ have
no point in common.

Case 3.2: $L_1=RL_i$ and $L_2=RL_j$ for some $i,j\in[m-1]$ and
$i\neq j$. By observation 3), $L_1$ and $L_2$ have no point in
common.

Case 3.3: $L_1$ and $L_2$ are two different blue lines. Then by
observation 5), $L_1$ and $L_2$ have no point in common.

By above discussion, we proved that condition (\romannumeral 5) of
Definition \ref{def-mesh} is satisfied.

So $\mathcal R\cup\mathcal B$ is a mesh of $[n]$.


\begin{thebibliography}{1}

\bibitem{Dimakis10}
A. G. Dimakis,B. Godfrey,Y.Wu,M. J.Wainwright, and K. Ramchandran,
``Network coding for distributed storage systems,'' \emph{IEEE
Trans. Inf. Theory}, vol. 56, no. 9, pp. 4539-4551, Sep. 2010.

\bibitem{Papail122}
D. S. Papailiopoulos and A. G. Dimakis, ``Locally repairable
codes,'' in \emph{Information Theory Proceedings (ISIT), 2012 IEEE
International Symposium on}, pp. 2771-2775, IEEE, 2012.

\bibitem{Gopalan12}
P. Gopalan, C. Huang, H. Simitci, and S. Yekhanin, ``On the
locality of codeword symbols,'' \emph{IEEE Trans. Inf. Theory},
vol. 58, no. 11, pp. 6925-6934, Nov. 2012.

\bibitem{Oggier11}
F. Oggier and A. Datta, ``Self-repairing homomorphic codes for
distributed storage systems,'' in \emph{INFOCOM, 2011 Proceedings
IEEE}, pp. 1215-1223, IEEE, 2011.

\bibitem{Papail121}
D. S. Papailiopoulos, J. Luo, A. G. Dimakis, C. Huang, and J. Li,
``Simple regenerating codes: Network coding for cloud storage,''
in \emph{INFOCOM, 2012 Proceedings IEEE}, pp. 2801-2805, IEEE,
2012.

\bibitem{Pamies13}
L. Pamies-Juarez, H. D. L. Hollmann, and F. Oggier, ``Locally
repairable codes with multiple repair alternatives,'' in
\emph{Proc. IEEE Int. Symp. Inf. Theory (ISIT)}, Istanbul, Turkey,
Jul. 2013, pp. 892-896.

\bibitem{Prakash12}
N. Prakash, G. M. Kamath, V. Lalitha, and P. V. Kumar, ``Optimal
linear codes with a local-error-correction property,'' in
\emph{Information Theory Proceedings (ISIT), 2012 IEEE
International Symposium on}, pp. 2776-2780, IEEE, 2012.

\bibitem{Wang14}
A. Wang and Z. Zhang, ``Repair locality with multiple erasure
tolerance,'' \emph{IEEE Trans. Inf. Theory}, vol. 60, no. 11, pp.
6979-6987, Nov. 2014.

\bibitem{Rawat14}
A. S. Rawat, D. S. Papailiopoulos, A. G. Dimakis, and S.
Vishwanath, ``Locality and Availability in Distributed Storage,''
in \emph{Proc. IEEE Int. Symp. Inf. Theory (ISIT)}, Honolulu, HI,
USA, June. 2014, pp. 681-685.

\bibitem{Tamo14}
I. Tamo, A. Barg, ``Bounds on locally recoverable codes with
multiple recovering sets,'' in \emph{Proc. IEEE Int. Symp. Inform.
Theory (ISIT)}, Honolulu, HI, USA, June. 2014, pp. 691-695.

\bibitem{Tamo-IT}
I. Tamo, A. Barg, ``A family of optimal locally recoverable
codes,'' \emph{IEEE Trans. Inf. Theory}, vol. 60, no. 80, pp.
4661-4676, Aug. 2014.

\bibitem{Wang15}
A. Wang, Z. Zhang, and M. Liu, ``Achieving Arbitrary Locality and
Availability in Binary Codes,''  \emph{in arXiv preprint arXiv:
1501.04264}, 2015.

\bibitem{Rawat-14}
A. S. Rawat, A. Mazumdar, and S. Vishwanath, ``Cooperative Local
Repair in Distributed Storage,'' http://arxiv.org/abs/1409.3900,
2015

\bibitem{Prakash-14}
N. Prakash, V. Lalitha, and P. V. Kumar, ``Codes with locality for
two erasures,'' in \emph{Proc. IEEE Int. Symp. Inform. Theory
(ISIT)}, Honolulu, HI, USA, June. 2014, pp. 1962-1966.

\bibitem{Wentu14}
W. Song, S. H. Dau, C. Yuen, and T. J. Li, ``Optimal locally
repairable linear codes,'' \emph{IEEE J. Sel. Areas Commun.}, vol.
32, no. 5, pp. 1019-1036, May 2014.

\bibitem{Hollmann13-1}
H. D. Hollmann, ``Storage codes C coding rate and repair
locality,'' in \emph{2013 International Conference on Computing,
Networking and Communications (ICNC13 - Invited)}, San Diego, USA,
Jan. 2013, pp. 830-834.

\bibitem{Hollmann13-2}
H. D. Hollmann and W. Poh, "Characterizations and construction
methods for linear functional-repair storage codes," in
\emph{Proc. IEEE Int. Symp. Inform. Theory (ISIT)}, Istanbul,
Turkey, Jul. 2013, pp. 336-340.

\bibitem{Hollmann14}
H. D. Hollmann, "On the minimum storage overhead of distributed
storage codes with a given repair locality," in \emph{Proc. IEEE
Int. Symp. Inform. Theory (ISIT)}, Honolulu, HI, USA, June. 2014,
pp. 1401-1405.

\bibitem{Fragouli06}
C. Fragouli and E. Soljanin,``Information flow decomposition for
network coding,'' \emph{IEEE Trans. Inf. Theory}, vol. 52, no. 3,
pp. 829-848, Mar. 2006.

\bibitem{Wentu11}
W. Song, K. Cai, R. Feng and C. Yuen, ``The Complexity of Network
Coding With Two Unit-Rate Multicast Sessions,'' \emph{IEEE Trans.
Inf. Theory}, vol. 59, no. 9, pp. 5692-5707, Sept. 2013.

\bibitem{Manfred}
K. Manfred, ``A simple proof of the Gale-Ryser Theorem,''
\emph{The American Mathematical Monthly}, vol. 103, no. 4, pp.
335-337, 1996.

\end{thebibliography}
\end{document}